\newenvironment{customtheorem}[1]
  {\innercustomtheorem}
  {\endinnercustomtheorem}
\numberwithin{equation}{section}
\theoremstyle{plain}
\newtheorem{theorem}{Theorem}[section]
\newtheorem{lemma}[theorem]{Lemma}
\newtheorem{definition}[theorem]{Definition}
\newtheorem{remark}[theorem]{Remark}
\newtheorem{problem}[theorem]{Problem}
\newtheorem{proposition}[theorem]{Proposition}
\theoremstyle{definition}
\newcommand{\ee}{\varepsilon}
\newcommand{\bN}{\mathbb{N}}
\newcommand{\cF}{\mathcal{F}}
\newcommand{\vertiii}[1]{{\left\vert\kern-0.25ex\left\vert\kern-0.25ex\left\vert #1 
    \right\vert\kern-0.25ex\right\vert\kern-0.25ex\right\vert}}
\newcommand{\cB}{\mathcal{B}}
\newcommand{\la}{\langle}
\newcommand{\ra}{\rangle}
\newcommand{\cH}{\mathcal{H}}
\newcommand{\cA}{\mathcal{A}}
\newcommand{\what}{\widehat}
\newcommand{\bofh}{\cB(\cH)}
\newcommand{\cM}{\mathcal{M}}
\newcommand{\cU}{\mathcal{U}}
\newcommand{\cG}{\mathcal{G}}
\newcommand{\cR}{\mathcal{R}}
\newcommand{\cE}{\mathcal{E}}
\newcommand{\tr}{\text{tr}}
\title[Approximate quantum 3-colorings of graphs]{Approximate quantum 3-colorings of graphs and the quantum Max 3-Cut problem}
\author{Samuel J. Harris}
\address{Northern Arizona University\\
Department of Mathematics \& Statistics\\
801 S. Osborne Dr.\\
Flagstaff, AZ\\
86011 USA}
\email{samuel.harris@nau.edu}
\begin{document}

\begin{abstract}
We prove that, to each synchronous non-local game $\mathcal{G}=(I,O,\lambda)$ with $|I|=n$ and $|O|=m \geq 3$, there is an associated graph $G_{\lambda}$ for which approximate winning strategies for the game $\mathcal{G}$ and the $3$-coloring game for $G_{\lambda}$ are preserved. That is, using a similar graph to previous work of the author (Ann. Henri Poincar\'{e}, 2024), any synchronous strategy for $\text{Hom}(G_{\lambda},K_3)$ that wins the game with probability $1-\varepsilon$ with respect to the uniform probability distribution on the edges, yields a strategy in the same model that wins the game $\cG$ with respect to the uniform distribution with probability at least $1-h(n,m)\varepsilon^{\frac{1}{2}}$, where $h$ is a polynomial in $n$ and $2^m$. As an application, we prove that the gapped promise problem for quantum $3$-coloring is undecidable. Moreover, we prove that there exists an $\alpha \in (0,1)$ for which determining whether the non-commutative Max-$3$-Cut of a graph is $|E|$ or less than $\alpha |E|$ is RE-hard, thus giving a positive answer to a problem posed by Culf, Mousavi and Spirig (arXiv:2312.16765), along with evidence for a sharp computability gap in the non-commutative Max-$3$-Cut problem. We also prove that there is some $\alpha \in (0,1)$ such that determining the non-commutative (respectively, commuting operator framework) versions of the Max-$3$-Cut of a graph within a factor of $\alpha$ is uncomputable. All of these results avoid use of the unique games conjecture.
\end{abstract}

\maketitle

\section{Introduction}

The Max Cut problem is a famous problem in graph theory: given an undirected graph, find a partition of the vertices into two subsets in such a way that maximizes the number of edges cut. It is well-known that computing the Max Cut of a graph is NP-hard. Much work has gone into seeing how well one can \textit{approximate} the value of Max Cut in polynomial time. The work of Goemans and Williamson \cite{GW95} shows that there is a polynomial time algorithm for approximating $\text{Max-Cut}(G)$ for a graph $G$ within a ratio of $0.878$. If one assumes the validity of the Unique Games Conjecture, then there is no polynomial time algorithm for approximating $\text{Max-Cut}(G)$ within a ratio larger than the Goemans-Williamson constant, unless $P=NP$ \cite{KKMO07}. 

One way of phrasing the Max Cut problem for a graph $G$ is by the following (commutative) polynomial optimization problem:
\[ \text{Max-Cut}(G)=\max_{x_j \in \{-1,1\}} \sum_{(j,k) \in E(G)} \frac{1-x_ix_j}{2},\]
where $E(G)$ denotes the set of (undirected) edges for $G$. Here, we interpret this sum as summing over each edge once. The terms involved do not depend on the ordering $(i,j)$ or $(j,i)$ for an edge. The non-commutative Max Cut (or NC-Max-Cut) is the relaxation of the above problem to the setting where the $X_j$'s are self-adjoint unitaries (in particular, they have eigenvalues $\pm 1$) in a matrix algebra:
\[ \text{NC-Max-Cut}(G)=\sup \sum_{(j,k) \in E(G)} \frac{1-\tr_d(X_iX_j)}{2},\]
where the supremum is taken over all $d \in \mathbb{N}$ and over all choices of self-adjoint unitaries $X_1,...,X_n \in M_d$, and where $\tr_d$ denotes the normalized trace on $M_d$ (i.e., satisfying $\tr_d(I_d)=1$). Unlike the Max-Cut problem, Tsirelson's work \cite{Ts87,Ts93} shows that there is a polynomial time algorithm for computing the NC-Max-Cut of a graph.

Related to this problem is the Max-3-Cut problem, where now one tries to maximize the number of edges cut when partitioning the vertices into 3 subsets. Again, there is an equivalent statement of this problem in terms of (commutative) polynomial optimization:
\[ \text{Max-3-Cut}(G)=\max_{x_j \in \{1,\omega,\omega^2\}} \sum_{(j,k) \in E(G)} \frac{2-x_ix_j^*-x_i^*x_j}{3},\]
where $\omega=\exp\left(\frac{2\pi i}{3}\right)$ is a primitive third root of unity. The problem of computing Max-3-Cut is NP-hard. On the other hand, Max-3-Cut can be approximated in polynomial time up to a constant ratio of 0.836 due to several works \cite{FJ95,GW04,KPW04}. By Khot et. al \cite{KKMO07}, if the unique games conjecture is true, then this is the best possible constant.

As with the Max Cut, there is a natural non-commutative version of the Max-3-Cut, given by
\[ \text{NC-Max-3-Cut}(G)=\sup \sum_{(j,k) \in E(G)} \frac{2-\tr_d(X_iX_j^*)-\tr_d(X_i^*X_j)}{3},\]
where the supremum is over all $d \in \mathbb{N}$ and over all choices of matrices $X_j \in M_d$ for $j \in V(G)$, satisfying $X_j^*X_j=I_d=X_j^3$ (that is, order $3$ unitaries). For the non-commutative Max-3-Cut, a polynomial time algorithm for approximating the value up to a ratio of $0.864$ is known \cite{CMS23}. Determining whether $\text{NC-Max-3-Cut}(G)$ is equal to $|E|$ or is less than $|E|$ is known to be undecidable \cite{Ji13,Ha24}. On the other hand, is there an upper bound on how well can one efficiently approximate the non-commutative max 3 cut?

Closely related to this problem is that of approximately winning correlations for the synchronous non-local game for $3$-coloring a graph. A two-player one round non-local game $\cG=(I,O,\lambda)$ is one where two non-communicating players are asked questions from the finite set $I$ by a verifier and required to give correct answers from the finite set $O$, based upon a rule function $\lambda:O \times O \times I \times I \to \{0,1\}$. The players win the round if $\lambda(a,b,x,y)=1$, and lose if $\lambda(a,b,x,y)=0$. The players are allowed to use entanglement and measurements to potentially improve their chances of success. There are several natural models for considering the probability distribution $p(a,b|x,y)$ that the players answer $a$ and $b$, respectively, given that they were asked $x$ and $y$, respectively. In general, this probability will satisfy the following:
\begin{itemize}
\item $p(a,b|x,y) \geq 0$ for all $a,b,x,y$ and $\displaystyle \sum_{a,b \in O} p(a,b|x,y)=1$ for all $a,b \in O$;
\item $\displaystyle \sum_{b \in O} p(a,b|x,y)$ is independent of $y$;
\item $\displaystyle \sum_{a \in O} p(a,b|x,y)$ is independent of $x$.
\end{itemize}
Such a probability distribution $p=(p(a,b|x,y))$ (called a \textit{correlation}) is called \textbf{winning} or \textbf{perfect} for $\cG$ if $p(a,b|x,y)=0$ whenever $\lambda(a,b,x,y)=0$. In other words, for perfect correlations, the probability of winning the game with the distribution $p$ and a probability distribution $\pi(x,y)$ on the question pairs, given by
\[ \omega(\cG,\pi,p)=\sum_{x,y \in I} \pi(x,y) \sum_{a,b \in O} \lambda(a,b,x,y)p(a,b|x,y),\]
is $1$. This paper is focused on \textbf{synchronous games}, which are games where $\lambda(a,b,x,x)=0$ for all $x \in I$ and $a \neq b$. Much is known about these games, and winning correlations for these games must be \textbf{synchronous correlations} (i.e. $p(a,b|x,x)=0$ for all $a \neq b$ and $x \in I$). In general, there are four usual models for synchronous correlations: local, quantum, quantum approximate and quantum commuting (the sets of such correlations with $n$ questions and $m$ answers are denoted by $C_{loc}^s(n,m)$, $C_q^s(n,m)$, $C_{qa}^s(n,m)$ and $C_{qc}^s(n,m)$, respectively). For all of these sets, a typical element $p$ looks like $(p(a,b|x,y))=(\tau(E_{a,x}E_{b,y}))$, where $\tau$ is a (faithful, normal) tracial state on a von Neumann algebra $\cM$, and $\{E_{a,x}\}_{a \in O}$ is a projection-valued measure (PVM) in $\cM$ for each $x \in I$, with extra arrangements that can be made on what $\cM$ is based on $t \in \{loc,q,qa,qc\}$ \cite{PSSTW16,KPS18}. For a synchronous game $\cG$ with a prior probability distribution $\pi(x,y)$ on the question pairs, and for an element $p \in C_t^s(n,m)$ ($t \in \{loc,q,qa,qc\}$), we define $\omega_t^s(\cG,\pi,p)=\sum_{x,y \in I} \pi(x,y) \sum_{a,b \in O} \lambda(a,b,x,y)p(a,b|x,y)$, which is the probability that the players win the game $\cG$ using the correlation $p$, assuming the verifier has the prior distribution $\pi$. The synchronous $t$-value of $\cG$ is simply defined as
\[ \omega_t^s(\cG,\pi)=\sup \{ \omega_t^s(\cG,\pi,p): p \in C_t^s(n,m)\}.\]

The relevant non-local game for this paper is the $3$-coloring game of a simple undirected graph $G=(V,E)$ with no loops, often denoted $\text{Hom}(G,K_3)$. For this game, the question set is $I=V$ and the answer set is $O=\{1,2,3\}$, and the rule function is given by
\[ \lambda(a,b,x,y)=\begin{cases} 0 & x=y \text{ and } a \neq b \\ 0 & x \sim y \text{ and } a=b \\
1 & \text{otherwise},\end{cases}\]
where $x\sim y$ means that $(x,y) \in E$. For a graph $G$, one has $\text{Max-}3\text{-Cut}(G) \leq |E|$, with equality if and only if there is a perfect loc strategy for the $3$-coloring game for $G$. (For the NC-Max-3-Cut, equality holds if and only if there is a perfect $qa$ strategy for the $3$-coloring game for $G$.) For convenience, we will always work with the prior distribution $\pi_{edges}$ that is uniform on vertex pairs that form edges in $G$; i.e., $\pi_{edges}(x,y)=\frac{1}{2|E|}$ if $(x,y) \in E$ and $\pi_{edges}(x,y)=0$ otherwise.

In this paper, we show that for each synchronous non-local game $\cG=(I,O,\lambda)$ with $|I|=n$ and $|O|=m \geq 3$, there is an associated graph $G_{\lambda}$ so that approximately winning synchronous strategies are preserved between the two games. In other words, the following holds:
\begin{customtheorem}{\ref{theorem: value of coloring to value of game}}
Let $\ee>0$. If $t \in \{loc,q,qa,qc\}$ and if $\omega_t^s(\text{Hom}(G_{\lambda},K_3),\pi_{edges}) \geq 1-\ee$, then $\omega_t^s(\cG,\pi_u) \geq 1-\text{poly}(n,2^m)\ee^{\frac{1}{2}}$, where $\pi_u$ is the uniform distribution on question pairs for $\cG$.
\end{customtheorem}

The converse also holds: for each $\ee>0$ and $t \in \{loc,q,qa,qc\}$, if $\omega_t^s(\cG,\pi_u) \geq 1-\ee$, then $\omega_t^s(\text{Hom}(G_{\lambda},K_3),\pi_{edges}) \geq 1-\text{poly}(n,m)\ee$. (See Theorem \ref{theorem: forward direction}.)

We use Theorem \ref{theorem: value of coloring to value of game} to prove the undecidability of the so-called gapped $(1,\alpha)$-promise problem for the synchronous $t$-value of the $3$-coloring game. This promise problem is the problem of deciding whether the synchronous $t$ value of the $3$-coloring game $\text{Hom}(G,K_3)$ is equal to $1$, or is at most $\alpha$, if promised that one of these two outcomes occur. We prove that this gapped promise problem is undecidable for some $\alpha$ (in particular, for some $\alpha$ the gapped $(1,\alpha)$-promise problem for the synchronous $q$-value of the $3$-coloring game is RE-hard).

As a result, one cannot efficiently compute the NC-Max-3-Cut up to aribtrary precision: there exists an $\alpha \in (0,1)$ for which approximating $\text{NC-Max-3-Cut}(G)$ up to a constant factor of $\alpha$ is uncomputable, in the following sense:
\begin{customtheorem}{\ref{theorem: sharp computability gap}}
There is an $\alpha \in (0,1)$ such that the decision problem of deciding whether $\text{NC-Max-3-Cut}(G)$ is equal to $|E|$ or is less than $\alpha|E|$, when promised that one of these two possibilities occurs, is undecidable.
\end{customtheorem}

Our method for proving the above theorem is by linking this problem to the problem of deciding whether a (synchronous) non-local game has quantum value $1$ or quantum value less than $1-\ee$, when promised that one of those two outcomes occurs. The latter result, by a breakthrough work of Ji et. al \cite{JNVWY20}, is RE-complete, and so the former result is RE-hard. Our main tool for the reduction uses previous work of the author \cite{Ha24}, which establishes a $*$-equivalence of a synchronous game $\cG$ and an associated $3$-coloring game for an associated graph $G_{\lambda}$. Given a synchronous non-local game $\cG=(I,O,\lambda)$, one can construct its game algebra $\cA(\cG)$, which encodes all winning strategies for the game. The algebra $\cA(\cG)$ is the universal unital $*$-algebra generated by elements $e_{a,x}$, $a \in O$, $x \in I$, subject to the properties that
\begin{itemize}
\item $e_{a,x}^2=e_{a,x}=e_{a,x}^*$ for all $a,x$;
\item $\displaystyle \sum_{a \in O} e_{a,x}=1$ for all $x \in I$; and
\item $\lambda(a,b,x,y)=0 \implies e_{a,x}e_{b,y}=0$.
\end{itemize}
Two games $\cG_1$ and $\cG_2$ are $*$-equivalent if there are unital $*$-homomorphisms $\cA(\cG_1) \to \cA(\cG_2)$ and $\cA(\cG_2) \to \cA(\cG_1)$. While a seemingly weak condition (as the linking homomorphisms need not even be injective), when two synchronous games are $*$-equivalent, the existence of a winning strategy for one of the games in a typical model (i.e. loc,q,qa,qc) is equivalent to the existence of such a winning strategy for the other game, in the same model. We refer the reader to works such as \cite{Go21,HMPS19,Ha22,Ha24,KPS18} for more information and examples of $*$-equivalence, and \cite{KPS18,PSSTW16} for background information on synchronous correlations.
 
The graph $G_{\lambda}$ used in \cite{Ha24} has a control triangle $\Delta=\{A,B,C\}$ that forces many of the relations in the game algebra. Namely, there is a unital $*$-homomorphism $\pi:\cA(\text{Hom}(G_{\lambda},K_3)) \to \cA(\cG)$. Conversely, for each choice of $\{i,j,k\}=\{1,2,3\}$, there is a unital $*$-homomorphism $\rho_{ijk}:\cA(\cG) \to e_{i,A}e_{j,B}e_{k,C}\cA(\text{Hom}(G_{\lambda},K_3))e_{i,A}e_{j,B}e_{k,C}$. This theorem uses the fact that each of the projections $e_{i,A}$, $e_{j,B}$ and $e_{k,C}$ are in the center of $\cA(\text{Hom}(G_{\lambda},K_3))$; indeed, ensuring this property is perhaps one of the more difficult parts of constructing the graph $G_{\lambda}$. As one also can show that the sum of the projections $e_{i,A}e_{j,B}e_{k,C}$ over $\{i,j,k\}=\{1,2,3\}$ is $1$, adding the $\rho_{ijk}$'s shows that there is a unital $*$-homomorphism $\rho:\cA(\cG) \to \cA(\text{Hom}(G_{\lambda},K_3))$. In particular, $\cG$ and $\text{Hom}(G_{\lambda},K_3)$ are $*$-equivalent.

We show that a slight modification of this mapping (which is unchanged in the context of perfect strategies for the games) preserves approximately winning synchronous strategies, when using the uniform distribution on question pairs for $\cG$ and the uniform distribution on vertex pairs that form edges for $\text{Hom}(G_{\lambda},K_3)$ (see Theorems \ref{theorem: forward direction} and \ref{theorem: value of coloring to value of game}). We note here that our use of the edge distribution is mainly for convenience and for its relevance to the Max-3-Cut problem, and one could arrive at a similar result (although with a more complicated estimate) for any distribution that is positive for all edges, assuming the strategy is synchronous. These equivalences hold for each of the models loc,q,qa,qc.

We note that our work gives an alternate proof of the existence of an $\alpha$ so that the gapped promise problem of determining whether $\text{Max-3-Cut}(G)$ is equal to $|E|$ or less than $\alpha |E|$, when promised that one of these occurs, cannot be solved in polynomial time, assuming $P \neq NP$.

The rest of the paper is organized as follows. In section \ref{section: approximations}, we collect preliminary facts regarding approximations in the $2$-norm with respect to a faithful, normal tracial state $\tau$ on a von Neumann algebra $\mathcal{M}$. In section \ref{section: graph definition} we give a slightly modified definition of the graph $G_{\lambda}$ arising from earlier work of the author \cite{Ha24}, as we will use its structure extensively. (The modifications are only to simplify some of the arguments.) In section \ref{section: from G to Glambda} we show how to go from approximately winning synchronous strategies for $\cG$ to approximately winning synchronous strategies for $\text{Hom}(G_{\lambda},K_3)$. Section \ref{section: from Glambda to G} deals with the problem of going from approximately winning synchronous strategies for $\text{Hom}(G_{\lambda},K_3)$ to approximately winning synchronous strategies for $\cG$. Lastly, in Section \ref{section: max 3 cut} we apply our results and estimates to the different versions of the Max-3-Cut problem.

\section{Approximation preliminaries}
\label{section: approximations}

In this section, we gather many of the approximation results that we will need in showing that approximately winning strategies are preserved in the game equivalence between a synchronous non-local game $\cG$ and the $3$-coloring game for its associated graph $G_{\lambda}$ that is defined in Section \ref{section: graph definition}. For the remainder of this section, we fix a finite von Neumann algebra $\mathcal{M}$ equipped with a faithful normal tracial state $\tau$. We will use the $2$-norm on $\mathcal{M}$ given by $\|A\|_2=\tau(A^*A)^{\frac{1}{2}}$, while $\|A\|$ will denote the operator norm of $A$. There are a few basic facts that we will need:
\begin{itemize}
\item If $A,B,C \in \mathcal{M}$, then $|\tau(ABC)| \leq \|ABC\|_2 \leq \|A\|\|B\|_2\|C\|$.
\item If $A,B \in \cM$ and $A^*A \leq B^*B$, then $\|A\|_2 =\tau(A^*A)^{\frac{1}{2}} \leq \tau(B^*B)^{\frac{1}{2}}=\|B\|_2$.
\end{itemize}
We will also use the notation $A \approx_{\ee} B$ to mean that $\|A-B\|_2<\ee$. Note that, if this is the case and $C$ is another operator in $\mathcal{M}$, then
\[ CA \approx_{\|C\|\ee} CB \text{ and } AC \approx_{\|C\|\ee} BC.\] 

Most of the results in this section can be interpreted as facts about synchronous strategies that approximately win the $3$-coloring game for triangles and for triangular prisms. The exact versions of these results can be found in \cite{Ji13,Ha24}.

The first result shows that three self-adjoint operators that are approximately idempotent and sum to zero are approximately zero.

\begin{proposition}
\label{proposition: three almost-projections summing to zero}
Suppose that $A_1,A_2,A_3 \in \mathcal{M}$ are self-adjoint. If $A_1+A_2+A_3=0$, then for each $i=1,2,3$ we have
\[\|A_i\|_2 \leq \sqrt{3}\left(\sum_{j=1}^3 \|A_j^2-A_j\|_2\right)^{\frac{1}{2}}.\]
\end{proposition}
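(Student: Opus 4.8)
The plan is to avoid estimating a single $\|A_i\|_2$ in isolation and instead sum the near-idempotency relations over all three indices, so that the linear relation $A_1+A_2+A_3=0$ does the work. The key identity is the elementary fact that $\|A\|_2^2=\tau(A^*A)=\tau(A^2)$ for self-adjoint $A$. Write $\delta_j=A_j^2-A_j$ for $j=1,2,3$; since each $A_j$ is self-adjoint, so is $\delta_j$, and $\|\delta_j\|_2=\|A_j^2-A_j\|_2$.

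First I would record, for each $j$, that $\|A_j\|_2^2=\tau(A_j^2)=\tau(A_j)+\tau(\delta_j)$. Summing over $j=1,2,3$ and using $A_1+A_2+A_3=0$ gives
\[ \sum_{j=1}^3 \|A_j\|_2^2 \;=\; \tau\!\left(\sum_{j=1}^3 A_j\right) + \sum_{j=1}^3 \tau(\delta_j) \;=\; \sum_{j=1}^3 \tau(\delta_j). \]
Next, applying the first basic fact of this section with the outer operators equal to $I$ (equivalently, Cauchy--Schwarz for the inner product $\langle X,Y\rangle=\tau(Y^*X)$) yields $|\tau(\delta_j)|\le\|\delta_j\|_2$ for each $j$, hence
\[ \sum_{j=1}^3 \|A_j\|_2^2 \;\le\; \sum_{j=1}^3 \|A_j^2-A_j\|_2. \]
Since every summand on the left is nonnegative, for any fixed $i$ we get $\|A_i\|_2^2\le\sum_{j=1}^3\|A_j\|_2^2\le\sum_{j=1}^3\|A_j^2-A_j\|_2$, and taking square roots gives the claim (in fact with constant $1$ in place of $\sqrt{3}$, which is more than enough).

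There is no real obstacle here: the only point worth noting is that one should \emph{not} try to bound $\|A_i\|_2$ directly via $A_i=-(A_j+A_k)$, since that route forces control of the cross terms $A_jA_k+A_kA_j$; summing over the three indices sidesteps this entirely. The slack in the stated constant $\sqrt{3}$ is harmless and is presumably kept only for uniformity with the cruder estimates appearing later in the paper.
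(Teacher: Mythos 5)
Your proof is correct, and it takes a genuinely different and shorter route than the paper's. The paper obtains the bound by squaring the relation $A_1+A_2=-A_3$, tracing, and combining the resulting identities to control the cross terms $\tau(A_iA_j)$, which are then fed into the expansion $0=\|A_1+A_2+A_3\|_2^2=\sum_i\|A_i\|_2^2+\sum_{i\neq j}\tau(A_iA_j)$; this costs a factor of $3$ inside the square root. You bypass the cross terms entirely by computing $\sum_j\|A_j\|_2^2=\sum_j\tau(A_j^2)$ directly: since $\tau(A_j^2)=\tau(A_j)+\tau(A_j^2-A_j)$ and $\sum_j\tau(A_j)=\tau(0)=0$, the whole sum collapses to $\sum_j\tau(A_j^2-A_j)\leq\sum_j\|A_j^2-A_j\|_2$, where the last step is the paper's first basic fact (Cauchy--Schwarz against the identity, using $\tau(1)=1$). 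Every step checks out, and you get the constant $1$ in place of $\sqrt{3}$, which would propagate to slightly better constants in Proposition \ref{proposition: approx quantum permutation} and its downstream uses, though none of that matters for the paper's polynomial-in-$(n,2^m)$ conclusions. Your closing remark is also on point: the one-index-at-a-time route via $A_i=-(A_j+A_k)$ is exactly what forces the paper into its cross-term bookkeeping, and summing over all three indices is what makes the linear constraint do all the work.
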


\begin{proof}
We rearrange to get $A_1+A_2=-A_3$. Squaring both sides gives
\[ A_1^2+A_1A_2+A_2A_1+A_2^2=A_3^2.\]
Applying the trace gives
\[ 2\tau(A_1A_2)+\tau(A_1^2)+\tau(A_2^2)=\tau(A_3^2).\]
Since $\tau(A_1)+\tau(A_2)+\tau(A_3)=0$, we then have
\begin{align*}
|2\tau(A_1A_2)-2\tau(A_3)|&=|2\tau(A_1A_2)+\tau(A_1)+\tau(A_2)-\tau(A_3)| \\
&\leq |2\tau(A_1A_2)+\tau(A_1^2)+\tau(A_2^2)-\tau(A_3^2)| \\
&\,\,\,\,+|\tau(A_1^2-A_1)|+|\tau(A_2^2-A_2)|+|\tau(A_3^2-A_3)| \\
&\leq \sum_{i=1}^3 \|A_i^2-A_i\|_2.
\end{align*}
Using a symmetric argument for each choice of $i,j,k$ with $\{i,j,k\}=\{1,2,3\}$ and the fact that $A_1+A_2+A_3=0$, it follows that
\begin{align*}
\left| \sum_{\substack{1 \leq i,j \leq 3 \\ i \neq j}} \tau(A_iA_j)\right|&=\left| \sum_{1 \leq i<j \leq 3} 2\tau(A_iA_j)-\sum_{i=1}^3 2\tau(A_i)\right| \\
&\leq \sum_{\{i,j,k\}=\{1,2,3\}} |2\tau(A_iA_j)-2\tau(A_k)| \\
&\leq 3 \sum_{i=1}^3 \|A_i^2-A_i\|_2.
\end{align*}
Finally,
\[ 0=\|A_1+A_2+A_3\|_2^2=\|A_1\|_2^2+\|A_2\|_2^2+\|A_3\|_2^2+\sum_{\substack{1 \leq i,j \leq 3 \\ i \neq j}} \tau(A_iA_j),\]
so that $(\|A_1\|_2^2+\|A_2\|_2^2+\|A_3\|_2^2)^{\frac{1}{2}} \leq \sqrt{3}\left(\sum_{i=1}^3 \|A_i^2-A_i\|_2\right)^{\frac{1}{2}}$. It follows that, for each $i \in \{1,2,3\}$, we have
\[ \|A_i\|_2 \leq (\|A_1\|_2^2+\|A_2\|_2^2+\|A_3\|_2^2)^{\frac{1}{2}}\leq \sqrt{3} \left(\sum_{j=1}^3 \|A_j^2-A_j\|_2\right)^{\frac{1}{2}},\]
as desired.
\end{proof}

The next proposition shows that, given two approximate PVMs $\{A_1,A_2,A_3\}$ and $\{B_1,B_2,B_3\}$, if $A_i$ and $B_i$ approximately commute for each $i=1,2,3$, then $A_i$ and $B_j$ approximately commute for any $i,j$. This is analogous to the fact that, for perfect operator strategies for the $3$-coloring game, projections corresponding to adjacent vertices commute.

\begin{proposition}
\label{proposition: commutators of projections}
Let $A_1,A_2,A_3$ and $B_1,B_2,B_3$ be projections in $\mathcal{M}$. Then for each $i \neq j$ in $\{1,2,3\}$, we have
\[ \|A_iB_j-B_jA_i\|_2 \leq 2\left(\left\| 1-\sum_{i=1}^3 A_i\right\|_2 + \left\| 1-\sum_{i=1}^3 B_i\right\|_2 + \sum_{i=1}^3 \|A_iB_i-B_iA_i\|_2 \right).\]
In particular,
\[ \sum_{\substack{1 \leq i,j \leq 3\\ i \neq j}} \|A_iB_j-B_jA_i\|_2 \leq 12\left(\left\| 1-\sum_{i=1}^3 A_i\right\|_2+\left\|1-\sum_{i=1}^3 B_i\right\|_2+\sum_{i=1}^3 \|A_iB_i-B_iA_i\|_2\right).\]
\end{proposition}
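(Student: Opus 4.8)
The plan is to reduce the whole estimate, for fixed $i\neq j$, to controlling the single quantity $\|A_iB_j(1-A_i)\|_2$ — that is, to showing that $A_iB_j$ is close in $2$-norm to the self-adjoint element $A_iB_jA_i$. The reduction is immediate: with $X:=A_iB_j(1-A_i)=A_iB_j-A_iB_jA_i$ one has $X^*=(1-A_i)B_jA_i=B_jA_i-A_iB_jA_i$, so $A_iB_j-B_jA_i=X-X^*$ and hence $\|A_iB_j-B_jA_i\|_2\le 2\|A_iB_j(1-A_i)\|_2$ (in fact the left side equals $\sqrt2\,\|A_iB_j(1-A_i)\|_2$, since $A_i(1-A_i)=0$ forces $X^2=(X^*)^2=0$). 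Writing $\delta_A:=\|1-\sum_\ell A_\ell\|_2$, $\delta_B:=\|1-\sum_\ell B_\ell\|_2$, $g_\ell:=\|A_\ell B_\ell-B_\ell A_\ell\|_2$, it therefore suffices to prove $\|A_iB_j(1-A_i)\|_2\le \delta_A+\delta_B+\sum_\ell g_\ell$.

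To prove that, I would first expand $B_j=1-B_i-B_k+(\sum_\ell B_\ell-1)$, where $k$ is the remaining index, and use $A_i(1-A_i)=0$ to obtain
\[ A_iB_j(1-A_i)=-A_iB_i(1-A_i)-A_iB_k(1-A_i)+A_i\Big(\sum_\ell B_\ell-1\Big)(1-A_i). \]
The last term has norm $\le\delta_B$. The first is handled on its own: since $B_iA_i(1-A_i)=0$ we have $A_iB_i(1-A_i)=(A_iB_i-B_iA_i)(1-A_i)$, so $\|A_iB_i(1-A_i)\|_2\le g_i$. The substantive work is the middle term $\|A_iB_k(1-A_i)\|_2$; here I would expand $1-A_i=A_j+A_k-(\sum_\ell A_\ell-1)$ and bound the resulting pieces $A_iB_kA_j$, $A_iB_kA_k$, and the $\delta_A$-error. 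The piece $A_iB_kA_k$ is close to $A_iA_kB_k$ at cost $g_k$ (via $B_kA_k\approx A_kB_k$); the ``all-distinct'' piece $A_iB_kA_j$ is expanded once more through $B_k$ (or $A_j$) and likewise reduced to a combination of products $A_\ell A_m$ and of the commutators $[A_\ell,B_\ell]$. This uses the approximate-PVM hypothesis through the near-orthogonality bounds $\|A_\ell A_m\|_2=O(\delta_A)$ and $\|B_\ell B_m\|_2=O(\delta_B)$ for $\ell\neq m$, which come from squaring $A_\ell+A_m=1-A_p+(\sum_\ell A_\ell-1)$ and a short manipulation. Assembling these estimates bounds $\|A_iB_j(1-A_i)\|_2$, and doubling gives the first displayed inequality; the ``in particular'' statement is then just the sum of the six instances.

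The step I expect to be the real obstacle is the bookkeeping needed to land precisely the stated constant. It is essential to invoke the projection property: the bare resolution-of-identity manipulations $B_j\approx 1-B_i-B_k$ and $A_i\approx 1-A_j-A_k$ only relate the six off-diagonal commutators $[A_i,B_j]$ ($i\neq j$) to one another in a cyclic pattern and can never pin down their individual $2$-norms — one must break the cycle using that the $A_\ell$ (and $B_\ell$) are nearly orthogonal. But the cheap proofs of $\|A_\ell A_m\|_2\lesssim\delta_A$ carry constants larger than $1$, so recovering exactly the coefficient $2$ rather than a larger absolute constant will require grouping the error contributions carefully instead of bounding each intermediate term in isolation.
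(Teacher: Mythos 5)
Your overall skeleton matches the paper's: reduce $\|[A_i,B_j]\|_2$ to $2\|A_iB_j(1-A_i)\|_2$ (the paper does exactly this via $\|A_1B_2-B_2A_1\|_2\le\|A_1B_2-A_1B_2A_1\|_2+\|A_1B_2A_1-B_2A_1\|_2$), then expand through approximate resolutions of the identity and convert cross terms into diagonal commutators. But there is a genuine gap at precisely the step you identify as the crux. The paper never derives approximate orthogonality of the $A_\ell$'s: its proof invokes the \emph{exact} relations $A_\ell A_m=0$ for $\ell\neq m$ twice (these are an implicit standing hypothesis, omitted from the statement), and with $A_iA_m=0$ each cross term collapses instantly, e.g.\ $A_1B_2A_2=A_1(B_2A_2-A_2B_2)$ and $A_1B_1A_3=(A_1B_1-B_1A_1)A_3$, so that $\|A_1B_2-A_1B_2A_1\|_2\le\eta_1+\eta_2+\sum_\ell g_\ell$ drops out with no further work. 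Your proposal instead tries to manufacture orthogonality from $\delta_A=\bigl\|1-\sum_\ell A_\ell\bigr\|_2$, asserting $\|A_\ell A_m\|_2=O(\delta_A)$ ``from squaring \dots and a short manipulation.'' That manipulation does not deliver this. Writing $S=\sum_\ell A_\ell$, squaring and tracing gives $\sum_{\ell\neq m}\tau(A_\ell A_m)=\delta_A^2+\tau(S-1)\le\delta_A^2+\delta_A$, and since $\|A_\ell A_m\|_2^2=\tau(A_mA_\ell A_m)=\tau(A_\ell A_m)$, the conclusion is only $\|A_\ell A_m\|_2=O(\delta_A^{1/2})$. (The paper's Proposition~\ref{proposition: approx quantum permutation} records exactly this square-root loss in the reverse direction.) Feeding an $O(\sqrt{\delta_A})$ bound into your decomposition yields a final estimate containing a $\sqrt{\delta_A}$ term, which for small $\delta_A$ is strictly weaker than the stated linear bound, and I see no elementary upgrade of $\|A_\ell A_m\|_2$ to $O(\delta_A)$ for general non-orthogonal projections. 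The fix is to assume $A_\ell A_m=0$, as the intended reading of the proposition (and every step of the paper's proof) does.

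A secondary point: even granting $A_\ell A_m=0$, your order of expansion overshoots the constant. Expanding $B_j$ first and then $1-A_i$ and then $B_k$ again makes the term $A_iB_kA_j$ cost $g_i+g_j+\delta_B$ on top of the $g_i$ and $\delta_B$ already spent, landing at roughly $\delta_A+2\delta_B+2g_i+g_j+g_k$ rather than $\delta_A+\delta_B+\sum_\ell g_\ell$. The paper's ordering avoids this: insert $A_1+A_2+A_3$ on the right once (cost $\eta_1$), note that of the three resulting terms only $A_1B_2A_3$ needs a $B$-expansion (cost $\eta_2$), and then every surviving cross term is bounded by a single $g_\ell$ using orthogonality. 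Each error source is charged exactly once, which is where the coefficient $2$ in the statement comes from after the final doubling.
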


\begin{proof}
For convenience, define $\eta_1=\left\| 1-\sum_{i=1}^3 A_i\right\|_2$ and $\eta_2=\left\| 1-\sum_{i=1}^3 B_i \right\|_2$. By symmetry, to compute $\|A_iB_j-B_jA_i\|_2$ for $i \neq j$, we may as well assume that $i=1$ and $j=2$. We have
\begin{align*}
A_1B_2&\approx_{\eta_1} A_1B_2(A_1+A_2+A_3) \\
&=A_1B_2A_1+A_1B_2A_2+A_1B_2A_3 \\
&\approx_{\eta_2} A_1B_2A_1+A_1B_2A_2+A_1(1-B_1-B_3)A_3 \\
&=A_1B_2A_1+A_1B_2A_2-A_1B_1A_3-A_1B_3A_3,
\end{align*}
where the last line follows since $A_iA_j=0$ for $i \neq j$. Thus,
\begin{align*}
\|A_1B_2-A_1B_2A_1\|_2&\leq \eta_1+\eta_2+\|A_1B_2A_2\|_2+\|A_1B_1A_3\|_2+\|A_1B_3A_3\|_2 \\
&\leq \eta_1+\eta_2+\sum_{i=1}^3 \|A_iB_i-B_iA_i\|_2,
\end{align*}
with the last line following again by the fact that $A_iA_j=0$ for $i \neq j$. Applying the triangle inequality again gives
\begin{align*}
\|A_1B_2-B_2A_1\|_2 &\leq \|A_1B_2-A_1B_2A_1\|_2+\|A_1B_2A_1-B_2A_1\|_2 \\
&=2\|A_1B_2-A_1B_2A_1\|_2 \\
&\leq 2\left(\left\| 1-\sum_{i=1}^3 A_i\right\|_2+\left\| 1-\sum_{i=1}^3 B_i\right\|_2+\sum_{i=1}^3 \|A_iB_i-B_iA_i\|_2\right).
\end{align*}
The final statement of the proposition follows immediately.
\end{proof}

In perfect operator strategies for $3$-coloring a triangle, for each fixed color, the sum of projections over the vertices in the triangle is $1$. In other words, the projections for a $3$-coloring of a triangle form a $3 \times 3$ quantum permutation. The next proposition gives an approximate form of this fact.

\begin{proposition}
\label{proposition: approx quantum permutation}
Suppose $\{P_{a,x}: a=1,2,3\}$ is a PVM in $\mathcal{M}$ for each $x=1,2,3$. Then for each $a=1,2,3$,
\[\left\| \sum_{x=1}^3 P_{a,x}-1\right\|_2 \leq \sqrt{3}\left(\sum_{b=1}^3 \sum_{\substack{1 \leq x,y \leq 3 \\ x \neq y}} \|P_{b,x}P_{b,y}\|_2\right)^{\frac{1}{2}}.\]
In particular, if $\|P_{a,x}P_{a,y}\|_2=\|P_{b,x}P_{b,y}\|_2$ for each $x \neq y$ and $a \neq b$, then
\[ \left\| \sum_{x=1}^3 P_{a,x}-1\right\|_2 \leq 3\left(\sum_{\substack{1 \leq x,y \leq 3 \\ x \neq y}} \|P_{b,x}P_{b,y}\|_2\right)^{\frac{1}{2}}, \, \forall a,b=1,2,3.\]
\end{proposition}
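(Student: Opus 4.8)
The plan is to deduce the proposition from Proposition~\ref{proposition: three almost-projections summing to zero}. Set $B_a := 1 - \sum_{x=1}^{3} P_{a,x}$ for $a = 1,2,3$; each $B_a$ is self-adjoint since the $P_{a,x}$ are projections. The first thing I would check is the hypothesis of that proposition, namely $B_1 + B_2 + B_3 = 0$: summing and swapping the order of summation, $\sum_{a=1}^{3} B_a = 3 - \sum_{x=1}^{3}\bigl(\sum_{a=1}^{3} P_{a,x}\bigr) = 3 - 3 = 0$, using that $\{P_{a,x}\}_{a}$ is a PVM for each fixed $x$.

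Next I would compute $B_a^2 - B_a$ exactly. Since each $P_{a,x}$ is idempotent, $\bigl(\sum_{x} P_{a,x}\bigr)^2 = \sum_{x} P_{a,x} + \sum_{x \neq y} P_{a,x}P_{a,y}$, and a one-line expansion of $B_a^2 = \bigl(1 - \sum_x P_{a,x}\bigr)^2$ then gives $B_a^2 = B_a + \sum_{x \neq y} P_{a,x}P_{a,y}$, i.e. $B_a^2 - B_a = \sum_{x \neq y} P_{a,x}P_{a,y}$. Hence $\|B_a^2 - B_a\|_2 \leq \sum_{x \neq y}\|P_{a,x}P_{a,y}\|_2$ by the triangle inequality.

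Now I would feed $B_1, B_2, B_3$ into Proposition~\ref{proposition: three almost-projections summing to zero} and use $\|B_a\|_2 = \bigl\|\sum_{x}P_{a,x} - 1\bigr\|_2$ to obtain the first inequality. For the ``in particular'' statement, the hypothesis $\|P_{a,x}P_{a,y}\|_2 = \|P_{b,x}P_{b,y}\|_2$ (for all $a,b$ and all $x \neq y$) makes $\sum_{b=1}^{3}\sum_{x \neq y}\|P_{b,x}P_{b,y}\|_2 = 3\sum_{x \neq y}\|P_{b,x}P_{b,y}\|_2$ for any fixed $b$, and $\sqrt{3}\cdot\sqrt{3} = 3$ yields the claimed bound, valid for every pair $a, b$.

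I do not anticipate a genuine obstacle here: the entire content is the observation that $B_a := 1 - \sum_x P_{a,x}$ satisfies the hypotheses of the previous proposition and that $B_a^2 - B_a$ is precisely the \emph{off-diagonal} sum $\sum_{x \neq y} P_{a,x}P_{a,y}$, which quantifies the failure of the three PVMs to be jointly orthogonal across the vertices. Everything else is the triangle inequality and routine bookkeeping.
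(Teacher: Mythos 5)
Your proof is correct and follows essentially the same route as the paper: both define $Q_a=1-\sum_x P_{a,x}$, verify $\sum_a Q_a=0$, identify $Q_a^2-Q_a$ with the off-diagonal sum $\sum_{x\neq y}P_{a,x}P_{a,y}$ (the paper via the identity $Q_a^2-Q_a=(1-Q_a)^2-(1-Q_a)$, you via direct expansion), and then invoke Proposition \ref{proposition: three almost-projections summing to zero}. No gaps.
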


\begin{proof}
Define $Q_a=1-\sum_{x=1}^3 P_{a,x}$ for $a=1,2,3$. Then $Q_a$ is self-adjoint. Using the fact that each $P_{a,x}$ is a projection, we have
\[\|(1-Q_a)^2-(1-Q_a)\|_2=\left\| \sum_{x,y=1}^3 P_{a,x}P_{a,y}-\sum_{x=1}^3 P_{a,x} \right\|\leq \sum_{\substack{1 \leq x,y \leq 3 \\ x \neq y}} \|P_{a,x}P_{a,y}\|_2.\]
It follows that, for each $a=1,2,3$,
\[\|Q_a^2-Q_a\|_2=\|(1-Q_a)^2-(1-Q_a)\|_2 \leq \sum_{\substack{1 \leq x,y \leq 3 \\ x \neq y}} \|P_{a,x}P_{a,y}\|_2.\] 
Next, we note that
\[ \sum_{a=1}^3 Q_a=3-\sum_{a=1}^3 \sum_{x=1}^3 P_{a,x}=3-\sum_{x=1}^3 1=0.\]
By Proposition \ref{proposition: three almost-projections summing to zero}, it follows that, for each $a=1,2,3$, 
\[\|Q_a\|_2 \leq \sqrt{3} \left(\sum_{b=1}^3 \|Q_b^2-Q_b\|_2\right)^{\frac{1}{2}} \leq \sqrt{3}\left(\sum_{b=1}^3 \sum_{\substack{1 \leq x,y \leq 3 \\ x \neq y}} \|P_{b,x}P_{b,y}\|_2\right)^{\frac{1}{2}}.\]
The last claim easily follows.
\end{proof}

The next result shows that approximately winning strategies for the $3$-coloring game for a triangular prism force projections for non-adjacent vertices to approximately commute (for the exact version, see \cite{Ji13,Ha24}).

\begin{proposition}
\label{proposition: almost commuting almost quantum permutations}
Let $\{P_{i,j}\}_{i,j=1}^3$ and $\{Q_{i,j}\}_{i,j=1}^3$ be projections in $\mathcal{M}$ such that $\sum_{i=1}^3 P_{i,j}=\sum_{i=1}^3 Q_{i,j}=1$ for all $j$.
\begin{enumerate}
\item For each $i=1,2,3$ and $j \neq \ell$,
\[\|[P_{ij},Q_{i\ell}]\|_2 \leq 2\left(\left\| 1-\sum_{x=1}^3 P_{i,x}\right\|_2+\left\|1-\sum_{x=1}^3 Q_{i,x}\right\|_2+\sum_{x=1}^3 2\|P_{i,x}Q_{i,x}\|_2\right).\]
\item For each $j=1,2,3$ and $i \neq k$,
\[ \|[P_{ij},Q_{kj}]\|_2 \leq 4\sum_{a=1}^3 \|P_{a,j}Q_{a,j}\|_2.\]
\item If $i \neq k$ and $j \neq \ell$, then
\[ \|[P_{ij},Q_{k\ell}]\|_2 \leq 4\sum_{a=1}^3 \left(\left\|1-\sum_{x=1}^3 P_{a,x}\right\|_2+\left\|1-\sum_{x=1}^3 Q_{a,x}\right\|_2+\sum_{x=1}^3 2\|P_{a,x}Q_{a,x}\|_2\right).\]
\end{enumerate}
\end{proposition}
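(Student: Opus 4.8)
The plan is to obtain all three inequalities from Proposition \ref{proposition: commutators of projections}, together with the elementary observation that for projections $P,Q$ one has $\|PQ-QP\|_2\le\|PQ\|_2+\|QP\|_2=2\|PQ\|_2$, since the $2$-norm is adjoint-invariant (by traciality, $\|X^*\|_2=\|X\|_2$). The key is to feed Proposition \ref{proposition: commutators of projections} the right ``PVM-like'' triples: for a fixed column $j$ the family $\{P_{1,j},P_{2,j},P_{3,j}\}$ is an honest PVM (the hypothesis $\sum_i P_{i,j}=1$ makes the $\|1-\sum\|$ correction term vanish), whereas a fixed row $\{P_{i,1},P_{i,2},P_{i,3}\}$ is only an \emph{approximate} PVM, with defect $\|1-\sum_x P_{i,x}\|_2$. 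Parts (2) and (3) pair columns of $P$ against columns of $Q$, while part (1) --- the technical core --- pairs a row of $P$ against a row of $Q$.

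For part (2), fix $j$ and apply Proposition \ref{proposition: commutators of projections} to $\{P_{1,j},P_{2,j},P_{3,j}\}$ and $\{Q_{1,j},Q_{2,j},Q_{3,j}\}$: both are PVMs by hypothesis, so the two $\|1-\sum\|$ terms vanish and, for $i\neq k$, we get $\|[P_{ij},Q_{kj}]\|_2\le 2\sum_a\|P_{a,j}Q_{a,j}-Q_{a,j}P_{a,j}\|_2\le 4\sum_a\|P_{a,j}Q_{a,j}\|_2$. For part (3), fix the distinct columns $j$ and $\ell$ and apply Proposition \ref{proposition: commutators of projections} to the column PVM $\{P_{1,j},P_{2,j},P_{3,j}\}$ and the column PVM $\{Q_{1,\ell},Q_{2,\ell},Q_{3,\ell}\}$; again both are PVMs, so for $i\neq k$ we get $\|[P_{ij},Q_{k\ell}]\|_2\le 2\sum_a\|[P_{a,j},Q_{a,\ell}]\|_2$. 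Each term $\|[P_{a,j},Q_{a,\ell}]\|_2$ has matching color index $a$ and distinct columns $j\neq\ell$, hence is precisely the quantity bounded in part (1); substituting that bound yields exactly $\|[P_{ij},Q_{k\ell}]\|_2\le 4\sum_a\left(\|1-\sum_x P_{a,x}\|_2+\|1-\sum_x Q_{a,x}\|_2+2\sum_x\|P_{a,x}Q_{a,x}\|_2\right)$.

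This leaves part (1), which is a quantitative version of the rigidity of $3$-colorings of a triangular prism (the exact statement, that non-adjacent vertices commute, is in \cite{Ji13,Ha24}). The plan is to run the proof of Proposition \ref{proposition: commutators of projections} on the rows $A_x:=P_{i,x}$ and $B_x:=Q_{i,x}$ ($x=1,2,3$), with $i$ fixed: these are resolutions of the identity only up to the defects $\|1-\sum_x P_{i,x}\|_2$ and $\|1-\sum_x Q_{i,x}\|_2$ (which is why these terms appear in the bound), the relevant diagonal terms are $\|[P_{i,x},Q_{i,x}]\|_2\le 2\|P_{i,x}Q_{i,x}\|_2$, and the off-diagonal commutator one reads off is precisely $\|[A_j,B_\ell]\|_2=\|[P_{ij},Q_{i\ell}]\|_2$ for $j\neq\ell$. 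The extra ingredient beyond the verbatim argument of Proposition \ref{proposition: commutators of projections} is that the rows are approximately pairwise orthogonal, which follows from the identity $P_{i,x}P_{i,y}+P_{i,x}P_{i,z}=P_{i,x}\left(\sum_w P_{i,w}-1\right)$, so the relevant combination of off-diagonal row products has $2$-norm at most $\|1-\sum_w P_{i,w}\|_2$; one then re-runs the chain of approximations, absorbing these extra error contributions.

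I expect part (1) to carry essentially all the difficulty; parts (2) and (3) are then formal. The delicate point inside part (1) is the single ``fully off-diagonal'' triple product $P_{i,j}Q_{i,k}P_{i,\ell}$ (with $j,k,\ell$ distinct) that surfaces when one expands $P_{i,j}Q_{i,\ell}(1-P_{i,j})$: it cannot be paired against a diagonal term, so it must be bounded using the row-sum defects of $P$ and $Q$ and the diagonal products $\|P_{i,x}Q_{i,x}\|_2$ alone, and arranging this estimate to come out at the advertised linear order is the crux. As a fallback, one can instead quote the exact prism rigidity from \cite{Ji13,Ha24} and perturb in the $2$-norm.
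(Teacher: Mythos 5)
Your treatment of parts (2) and (3) coincides with the paper's: for (2) you apply Proposition \ref{proposition: commutators of projections} to the column PVMs $\{P_{a,j}\}_{a}$ and $\{Q_{a,j}\}_{a}$ (so the two $\left\|1-\sum\right\|_2$ terms vanish and each diagonal commutator is bounded by $2\|P_{a,j}Q_{a,j}\|_2$), and for (3) you apply it to the columns $\{P_{a,j}\}_a$ and $\{Q_{a,\ell}\}_a$ to reduce to the diagonal terms $\|[P_{a,j},Q_{a,\ell}]\|_2$, which are then fed into part (1). That is exactly the paper's argument for those two parts.

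The gap is in part (1), which (3) also depends on. The paper proves (1) in one line: Proposition \ref{proposition: commutators of projections} is stated for arbitrary projections $A_1,A_2,A_3,B_1,B_2,B_3$, so the paper applies it verbatim to the \emph{rows} $\{P_{i,x}\}_{x=1}^3$ and $\{Q_{i,x}\}_{x=1}^3$ and then replaces each $\|[P_{i,x},Q_{i,x}]\|_2$ by $2\|P_{i,x}Q_{i,x}\|_2$. You decline to do this because you (correctly) observe that the \emph{proof} of that proposition uses $A_iA_j=0$ for $i\neq j$, which the rows do not satisfy; but your substitute is not carried out. You yourself label the key step --- controlling the fully off-diagonal cross terms at linear order in the row-sum defects --- as ``the crux'' without supplying it. The identity $P_{i,x}P_{i,y}+P_{i,x}P_{i,z}=P_{i,x}\left(\sum_w P_{i,w}-1\right)$ only controls that particular sum, whereas the terms that actually arise in mimicking the earlier proof (e.g.\ $A_1A_3$ by itself, or $A_1B_2A_2$, whose estimate there uses $A_1A_2=0$) involve individual products; from $\left\|1-\sum_x P_{i,x}\right\|_2=\eta$ alone one only gets $\sum_{x\neq y}\|P_{i,x}P_{i,y}\|_2^2\leq \eta^2+\eta$, i.e.\ $O(\eta^{1/2})$ per term, so it is not shown that your re-run closes at the advertised linear order. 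The fallback of quoting exact prism rigidity from \cite{Ji13,Ha24} and ``perturbing'' cannot work either: an exact algebraic statement carries no modulus of continuity, so it cannot yield the explicit quantitative bound that part (1) asserts and that Sections \ref{section: from Glambda to G}--\ref{section: max 3 cut} consume. (Your observation does flag a real subtlety in the hypotheses of Proposition \ref{proposition: commutators of projections} versus its proof, but flagging it is not the same as resolving it, and as written part (1) is unproved.)
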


\begin{proof}
If $j \neq \ell$, then using Proposition \ref{proposition: commutators of projections} with the projections $\{P_{ij}: 1 \leq j \leq 3\}$ and $\{Q_{i\ell}:1 \leq \ell \leq 3\}$, we obtain
\[ \|[P_{ij},Q_{i\ell}]\|_2 \leq 2\left(\left\| 1-\sum_{x=1}^3 P_{i,x}\right\|_2+\left\|1-\sum_{x=1}^3 Q_{i,x}\right\|_2+\sum_{x=1}^3 2\|P_{i,x}Q_{i,x}\|_2\right),\]
which is the first case. In the case when $i \neq k$ but $j=\ell$, a similar argument with the PVM's $\{ P_{ij}: 1 \leq i \leq 3\}$ and $\{Q_{ij}: 1 \leq i \leq 3\}$ yields, for $i \neq k$,
\[ \|[P_{ij},Q_{kj}]\|_2 \leq 4\sum_{a=1}^3 \|P_{aj}Q_{aj}\|_2.\]
In the last case, we have $i \neq k$ and $j \neq \ell$. By Proposition \ref{proposition: commutators of projections} with the PVMs $\{P_{aj}: 1 \leq i \leq 3\}$ and $\{Q_{a\ell}: 1 \leq x \leq 3\}$, one has
\[\|[P_{ij},Q_{k\ell}]\|_2 \leq 2\sum_{a=1}^3 \|[P_{aj},Q_{a\ell}]\|_2.\]
Since $j \neq \ell$, by the first case we obtain
\[ \|[P_{ij},Q_{k\ell}]\|_2 \leq 4\sum_{a=1}^3 \left( \left\| 1-\sum_{x=1}^3 P_{a,x}\right\|_2+\left\| 1-\sum_{x=1}^3 Q_{a,x} \right\|_2+\sum_{x=1}^3 2\|P_{a,x}Q_{a,x}\|_2 \right),\]
which is the last case of the desired inequality.
\end{proof}

In a perfect strategy for $3$-coloring a triangle $\{A,B,C\}$, one can show that all projections corresponding to these vertices commute with each other. Moreover, each operator of the form $P_{i,A}P_{j,B}P_{k,C}$, for $\{i,j,k\}=\{1,2,3\}$, is a projection and their sum is the identity. In the approximate setting, the projections corresponding to $\{A,B,C\}$ need not commute. Hence, the operators $P_{i,A}P_{j,B}P_{k,C}$ need not be projections, and may not even be positive operators. However, a suitable modification (which amounts to the same operators for perfect strategies) yields positive operators that nearly sum to the identity.

\begin{proposition}
\label{proposition: sum of cut down projections}
Let $\{P_{i,A}\}_{i=1}^3$, $\{P_{i,B}\}_{i=1}^3$ and $\{P_{i,C}\}_{i=1}^3$ be PVMs in $\mathcal{M}$. Then
\[\left\|1-\displaystyle \sum_{\{i,j,k\}=\{1,2,3\}} P_{i,A}P_{j,B}P_{k,C}P_{j,B}P_{i,A}\right\|_2 \leq 159 \sum_{i=1}^3 (\|P_{i,A}P_{i,B}\|_2+\|P_{i,A}P_{i,C}\|_2+\|P_{i,B}P_{i,C}\|_2)\]
\end{proposition}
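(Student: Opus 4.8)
The plan is to write $1 = \sum_{\{i,j,k\}=\{1,2,3\}} P_{i,A}P_{j,B}P_{k,C}$ exactly in the "ideal" setting, and to control the difference between this sum and the symmetrized sum $\sum P_{i,A}P_{j,B}P_{k,C}P_{j,B}P_{i,A}$ in $2$-norm, entirely in terms of the quantities $\epsilon_i := \|P_{i,A}P_{i,B}\|_2 + \|P_{i,A}P_{i,C}\|_2 + \|P_{i,B}P_{i,C}\|_2$. Set $\epsilon = \sum_{i=1}^3 \epsilon_i$. The key structural input is that the projections attached to the three vertices of the control triangle approximately commute: since $\{A,B,C\}$ are pairwise adjacent, by Proposition~\ref{proposition: commutators of projections} applied to each of the three pairs of PVMs, we get $\|[P_{i,A},P_{j,B}]\|_2, \|[P_{i,A},P_{j,C}]\|_2, \|[P_{i,B},P_{j,C}]\|_2 = O(\epsilon)$ for all $i,j$ (the constant from that proposition is $2$, and summing over the relevant index pairs gives a modest absolute constant). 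This is the workhorse estimate.

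First I would expand $1 = \big(\sum_a P_{a,A}\big)\big(\sum_b P_{b,B}\big)\big(\sum_c P_{c,C}\big) = \sum_{a,b,c} P_{a,A}P_{b,B}P_{c,C}$, a sum over all $27$ triples. I claim every term with a repeated index is $O(\epsilon)$ in $2$-norm: if $a=b$, then $\|P_{a,A}P_{a,B}P_{c,C}\|_2 \le \|P_{a,A}P_{a,B}\|_2 \le \epsilon_a$; if $a=c$ but $b\neq a$, write $P_{a,A}P_{b,B}P_{a,C} \approx P_{a,A}P_{a,C}P_{b,B}$ up to $\|[P_{b,B},P_{a,C}]\|_2 = O(\epsilon)$, and then $\|P_{a,A}P_{a,C}\|_2 \le \epsilon_a$; similarly for $b=c$. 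So $1 \approx_{O(\epsilon)} \sum_{\{i,j,k\}=\{1,2,3\}} P_{i,A}P_{j,B}P_{k,C}$, the sum over the six genuine permutations.

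Next I would pass from each permutation term $P_{i,A}P_{j,B}P_{k,C}$ to the symmetrized term $P_{i,A}P_{j,B}P_{k,C}P_{j,B}P_{i,A}$. Using the approximate commutation, $P_{i,A}P_{j,B}P_{k,C}P_{j,B}P_{i,A} \approx_{O(\epsilon)} P_{i,A}P_{j,B}P_{j,B}P_{k,C}P_{i,A} = P_{i,A}P_{j,B}P_{k,C}P_{i,A}$ (moving $P_{j,B}$ past $P_{k,C}$ and using $P_{j,B}^2 = P_{j,B}$, and the outer factors have norm $1$ so they don't inflate the error), and then $\approx_{O(\epsilon)} P_{i,A}P_{j,B}P_{i,A}P_{k,C} \approx_{O(\epsilon)} P_{i,A}P_{i,A}P_{j,B}P_{k,C} = P_{i,A}P_{j,B}P_{k,C}$; here I moved $P_{i,A}$ leftward past $P_{k,C}$ then past $P_{j,B}$ and used $P_{i,A}^2 = P_{i,A}$. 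Summing these six approximations and combining with the previous paragraph via the triangle inequality yields $\big\|1 - \sum_{\{i,j,k\}=\{1,2,3\}} P_{i,A}P_{j,B}P_{k,C}P_{j,B}P_{i,A}\big\|_2 = O(\epsilon)$.

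The remaining work is purely bookkeeping of constants: tracking how many commutator swaps each of the six symmetrized terms costs (a bounded number each), how many repeated-index terms appear ($21$ of them), and feeding in the factor-$2$ from Proposition~\ref{proposition: commutators of projections}, to check that the total absolute constant does not exceed $159$. I expect this constant-chasing to be the only real obstacle — not conceptually hard, but one must be careful that each commutator $\|[P_{\cdot},P_{\cdot}]\|_2$ is itself bounded by $2\epsilon$ (not just $2\epsilon_i$ for a single $i$), since Proposition~\ref{proposition: commutators of projections} produces the full sum $\sum_i \|P_{i,A}P_{i,B}\|_2$ on its right-hand side (note $\|1 - \sum_i P_{i,A}\|_2 = 0$ here as the $P_{\cdot,A}$ form a genuine PVM, and $\|[P_{i,A},P_{i,B}]\|_2 \le 2\|P_{i,A}P_{i,B}\|_2$), so the bound is genuinely in terms of $\epsilon$ rather than any single $\epsilon_i$. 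Once one fixes a definite order of swaps for each of the six terms and counts, the claimed constant $159$ follows.
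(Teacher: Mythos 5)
Your proposal is correct, and it follows a genuinely different decomposition from the paper's. The paper starts from the exact identity $\sum_{i,j,k}P_{i,A}P_{j,B}P_{k,C}=1$ and multiplies it by its own adjoint, producing an exact expansion of $1$ into five-fold products $P_{i,A}P_{j,B}P_{k,C}P_{\beta,B}P_{\alpha,A}$; it then bounds the off-diagonal terms in two batches ($\eta_1$: permutation triples with $\alpha\neq i$ or $\beta\neq j$; $\eta_2$: repeated-index triples), which is where the constant $159$ accumulates. You instead stay with the three-fold expansion $1=\sum_{a,b,c}P_{a,A}P_{b,B}P_{c,C}$, discard the $21$ repeated-index terms directly (the only nontrivial case being $a=c$, which costs one commutator swap), and then separately convert each of the six permutation terms $P_{i,A}P_{j,B}P_{k,C}$ into its symmetrized form at the cost of three commutator swaps each. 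Both arguments rest on the same workhorse, Proposition \ref{proposition: commutators of projections} (with the PVM defect terms vanishing since the $P$'s are genuine PVMs, so each off-index commutator is bounded by $4\sum_c\|P_{c,\cdot}P_{c,\cdot}\|_2$), so the routes are comparable in depth; but yours avoids the adjoint-squaring step and handles far fewer terms. If you carry out the count you defer to the end, you get roughly $3\epsilon_{AB}+2\epsilon_{AC}+26\epsilon_{BC}$ from the repeated-index terms and $24(\epsilon_{AB}+\epsilon_{AC}+\epsilon_{BC})$ from the six symmetrizations, i.e., a constant of about $50$ — comfortably under $159$, and in fact an improvement on the paper's bound. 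The only incomplete piece is that explicit count, which is routine.
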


\begin{proof}
Note that, since each of $\{P_{i,A}\}_{i=1}^3$, $\{P_{i,B}\}_{i=1}^3$ and $\{P_{i,C}\}_{i=1}^3$ is a PVM, we have
\[ \sum_{i,j,k=1}^3 P_{i,A}P_{j,B}P_{k,C}=1,\]
so that, multiplying this expression with its adjoint gives
\[ 1=\sum_{i,j,k=1}^3 \sum_{\alpha,\beta,\gamma=1}^3 P_{i,A}P_{j,B}P_{k,C}P_{\gamma,C}P_{\beta,B}P_{\alpha,A}.\]
If $k \neq \gamma$, then $P_{k,C}P_{\gamma,C}=0$, so this sum becomes
\[ 1=\sum_{i,j,k=1}^3 \sum_{\alpha,\beta=1}^3 P_{i,A}P_{j,B}P_{k,C}P_{\beta,B}P_{\alpha,A}.\]
Then we must determine an upper bound for
\[ \left\|\sum_{\substack{\{i,j,k\}=\{1,2,3\} \\ \alpha \neq i \text{ or } \beta \neq j}} P_{i,A}P_{j,B}P_{k,C}P_{\beta,B}P_{\alpha,A}\right\|_2+\left\|\sum_{\substack{\{i,j,k\} \subsetneq \{1,2,3\} \\ 1 \leq \alpha,\beta \leq 3}} P_{i,A}P_{j,B}P_{k,C}P_{\beta,B}P_{\alpha,A}\right\|_2.\]
Let $\eta_1$ be the first term and $\eta_2$ be the second term. We will first bound $\eta_1$. If $\{i,j,k\}=\{1,2,3\}$ and $\beta=j$ but $\alpha \neq i$, then either $\alpha=j$, in which case one immediately obtains $\|P_{i,A}P_{j,B}P_{k,C}P_{j,B}P_{\alpha,A}\|_2 \leq \|P_{j,B}P_{j,A}\|_2$, or $\alpha=k$, in which case one can use Proposition \ref{proposition: commutators of projections} with the PVMs $\{P_{c,A}\}_{c=1}^3$ and $\{P_{c,B}\}_{c=1}^3$ to obtain
\begin{align*}
\|P_{i,A}P_{j,B}P_{k,C}P_{j,B}P_{\alpha,A}\|_2 &\leq \|P_{i,A}P_{j,B}P_{k,C}P_{k,A}P_{j,B}\|_2+\|[P_{j,B},P_{\alpha,A}]\|_2 \\
&\leq \|P_{k,C}P_{k,A}\|_2+4\sum_{a=1}^3 \|P_{a,A}P_{a,B}\|_2.
\end{align*}
Each fixed $k$ appears twice in terms in $\eta_1$ indexed by $i,j,k$ with $\{i,j,k\}=\{1,2,3\}$, while each $j$ appears twice in terms indexed by $i,j,k,\alpha,\beta$ with $\{i,j,k\}=\{1,2,3\}$, $\alpha=j$ and $\beta=j$. Finally, the number of terms indexed by $i,j,k,\alpha,\beta$ such that $\{i,j,k\}=\{1,2,3\}$, $\alpha=k$ and $\beta=j$ is $6$. Thus,
\[ \sum_{\substack{\{i,j,k\}=\{1,2,3\} \\ \alpha \neq i, \, \beta=j}} \|P_{i,A}P_{j,B}P_{k,C}P_{j,B}P_{\alpha,A}\|_2 \leq \sum_{a=1}^3 (26\|P_{a,A}P_{a,B}\|_2+2\|P_{a,A}P_{a,C}\|_2).\]
If $\{i,j,k\}=\{1,2,3\}$ and $\beta \neq j$, then either $\beta=k$, in which case $\|P_{i,A}P_{j,B}P_{k,C}P_{\beta,B}P_{\alpha,A}\|_2 \leq \|P_{k,C}P_{k,B}\|_2$, or $\beta \neq k$, in which case by Proposition \ref{proposition: commutators of projections} with the PVMs $\{P_{c,C}\}_{c=1}^3$ and $\{P_{c,B}\}_{c=1}^3$ and using the fact that $P_{j,B}P_{\beta,B}=0$,
\begin{align*}
\|P_{i,A}P_{j,B}P_{k,C}P_{\beta,B}P_{\alpha,A}\|_2 &\leq \|[P_{\beta,B},P_{k,C}]\|_2+\|P_{i,A}P_{j,B}P_{\beta,B}P_{k,C}P_{\alpha,A}\|_2 \\
&=\|[P_{\beta,B},P_{k,C}]\|_2 \\
&\leq 4 \sum_{a=1}^3 \|P_{a,B}P_{a,C}\|_2.
\end{align*}
Each $k$ appears $6$ times in terms indexed by $i,j,k,\alpha,\beta$ where $\{i,j,k\}=\{1,2,3\}$ and $\beta=k$. There are $36$ terms that have index of the form $i,j,k,\alpha,\beta$ where $\{i,j,k\}=\{1,2,3\}$ and $\beta \neq k$. In total, we obtain
\[ \sum_{\substack{\{i,j,k\}=\{1,2,3\} \\ \beta \neq j}} \|P_{i,A}P_{j,B}P_{k,C}P_{j,B}P_{\alpha,A}\|_2 \leq \sum_{a=1}^3 150\|P_{a,B}P_{a,C}\|_2.\]
Adding our estimates together,
\[\eta_1 \leq \sum_{a=1}^3(26 \|P_{a,A}P_{a,B}\|_2+2\|P_{a,A}P_{a,C}\|_2+150 \|P_{a,B}P_{a,C}\|_2).\]

For $\eta_2$, we first bound $\delta:=\sum_{\{i,j,k\} \subsetneq \{1,2,3\}} \|P_{i,A}P_{j,B}P_{k,C}P_{j,B}P_{i,A}\|_2$. By subtracting and adding extra terms of the form $P_{i,A}P_{i,B}P_{i,C}$ for $i=1,2,3$, we obtain

\begin{align*}
\delta &\leq \left(\left\| \sum_{i,j} P_{i,A}P_{i,B}P_{j,C}\right\|_2 +\left\| \sum_{i,j} P_{i,A}P_{j,B}P_{i,C} \right\|_2+\left\| \sum_{i,j} P_{j,A}P_{i,B}P_{i,C} \right\|_2+2\left\| \sum_i P_{i,A}P_{i,B}P_{i,C}\right\|_2\right) \\
&=\left(\left\| \sum_{i=1}^3 P_{i,A}P_{i,B} \right\|_2+\left\| \sum_{i=1}^3 P_{i,A}P_{i,C}\right\|_2+\left\| \sum_{i=1}^3 P_{i,B}P_{i,C} \right\|_2+2\left\| \sum_{i=1}^3 P_{i,A}P_{i,B}P_{i,C} \right\|_2\right) \\
&\leq \left(\sum_{i=1}^3 (3\|P_{i,A}P_{i,B} \|_2+\|P_{i,A}P_{i,C} \|_2+\|P_{i,B}P_{i,C} \|_2)\right). 
\end{align*}

Summing this over all $9$ choices of $\alpha,\beta \in \{1,2,3\}$ and using the fact that, for a fixed $\alpha,\beta$, one has $\left\| \sum_{i,j,k \subsetneq\{1,2,3\}} P_{i,A}P_{j,B}P_{k,C}P_{\beta,B}P_{\alpha,A}\right\|_2 \leq \left\| \sum_{\{i,j,k\} \subsetneq \{1,2,3\}} P_{i,A}P_{j,B}P_{k,C}\right\|_2$, we obtain
\[ \eta_2\leq 9\delta \leq \sum_{i=1}^3 (27\|P_{i,A}P_{i,B}\|_2+9\|P_{i,A}P_{i,C}\|_2+9\|P_{i,B}P_{i,c}\|_2).\]
Adding up $\eta_1$ and $\eta_2$ yields the upper bound
\[ \left\|1-\sum_{\{i,j,k\}=\{1,2,3\}} P_{i,A}P_{j,B}P_{k,C}P_{j,B}P_{i,A}\right\|_2 \leq 159\sum_{i=1}^3(\|P_{i,A}P_{i,B}\|_2+\|P_{i,A}P_{i,C}\|_2+\|P_{i,B}P_{i,C}\|_2),\]
which is the desired bound.
\end{proof}

Proposition \ref{proposition: sum of cut down projections} will be used to show that, for the control triangle $\{A,B,C\}$ of our graph in Section \ref{section: graph definition}, the operators $S_{i,j,k}=P_{i,A}P_{j,B}P_{k,C}P_{j,B}P_{i,A}$ for $\{i,j,k\}=\{1,2,3\}$ are positive contractions that approximately satisfy the relations of being a $6$-output PVM.

For perturbing positive contractions that are almost projections, we require the following lemma. The original lemma \cite[Lemma~3.4]{KPS18} is stated for matrix algebras, but as it only relies on the existence of a faithful (normal) trace and the Borel functional calculus, it also holds in any von Neumann algebra $\mathcal{M}$ with faithful normal tracial state $\tau$.

\begin{lemma}[\cite{KPS18}]
\label{lemma: perturbing one projection}
Let $A$ be a positive contraction in $\cM$. Then the spectral projection $B$ of $A$ onto the interval $\left[\frac{1}{2},1\right]$ belongs to $\mathcal{M}$ and satisfies
\[ \|A-B\|_2 \leq 2\sqrt{2}\|A-A^2\|_2.\]
\end{lemma}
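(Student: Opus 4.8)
The plan is to reduce the inequality to an elementary pointwise estimate on $[0,1]$ via the Borel functional calculus, exactly as in the matrix case. First I would use that $\cM$ is a von Neumann algebra: since $A = A^* \in \cM$, every spectral projection of $A$ lies in $\cM$, so in particular the spectral projection $B = \chi_{[1/2,1]}(A)$ of $A$ onto $[1/2,1]$ belongs to $\cM$. Because $A$ is a positive contraction, its spectrum is contained in $[0,1]$.

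Next I would rewrite both $2$-norms as traces of functions of $A$. Since $B = \chi_{[1/2,1]}(A)$ commutes with $A$, one has $A - B = h(A)$ with $h(t) = t - \chi_{[1/2,1]}(t)$, hence
\[ \|A - B\|_2^2 = \tau\bigl((A-B)^2\bigr) = \tau\bigl(f(A)\bigr), \qquad \|A - A^2\|_2^2 = \tau\bigl(g(A)\bigr), \]
where $f(t) = h(t)^2$, so $f(t) = t^2$ for $t \in [0,\tfrac12)$ and $f(t) = (1-t)^2$ for $t \in [\tfrac12,1]$, and $g(t) = t^2(1-t)^2$.

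The key step is then the scalar inequality $f(t) \leq 8\,g(t)$ for all $t \in [0,1]$. On $[0,\tfrac12)$ this reads $t^2 \leq 8 t^2(1-t)^2$, i.e.\ $1 \leq 8(1-t)^2$, which holds since $1-t > \tfrac12$ there; on $[\tfrac12,1]$ it reads $(1-t)^2 \leq 8 t^2(1-t)^2$, i.e.\ $1 \leq 8 t^2$ (the value $t=1$ being trivial), which holds since $t \geq \tfrac12$. Therefore $8g - f \geq 0$ on $[0,1] \supseteq \sigma(A)$, so $8\,g(A) - f(A) \geq 0$ in $\cM$ by functional calculus, and applying the positive linear functional $\tau$ gives $\tau(f(A)) \leq 8\,\tau(g(A))$, that is, $\|A - B\|_2^2 \leq 8\|A - A^2\|_2^2$. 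Taking square roots yields $\|A - B\|_2 \leq 2\sqrt{2}\,\|A - A^2\|_2$.

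I do not expect a genuine obstacle here; the only subtlety is confirming that the argument goes through in an arbitrary finite von Neumann algebra rather than only in $M_d$, which is precisely the remark made just before the statement: it uses only the Borel functional calculus for the single self-adjoint operator $A$ together with positivity of $\tau$ (faithfulness and normality play no role in this particular estimate). The constant $8$ is not sharp — a more careful choice of the cutoff level and of the comparison function $g$ would improve it — but $2\sqrt{2}$ is all that is needed in the sequel.
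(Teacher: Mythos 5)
Your proof is correct: the scalar inequality $f(t)\leq 8\,g(t)$ holds on both pieces of $[0,1]$ exactly as you check, and positivity of $\tau$ then gives $\|A-B\|_2^2\leq 8\|A-A^2\|_2^2$, which is the stated bound. The paper does not reprove this lemma but cites \cite{KPS18} with precisely the remark you anticipate (only the Borel functional calculus and a faithful normal trace are needed to pass from $M_d$ to a general $\cM$), so your argument is the intended one.
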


We will also need \cite[Lemma~3.6]{KPS18}, to perturb a set of positive contractions that almost satisfies the relations of being a PVM (in $2$-norm) to a PVM. The reason that this lemma is needed is that we will be dealing with positive contractions whose sum is close to the identity in $2$-norm, but which might overshoot the identity. Since we would like the estimate of the perturbation distance in terms of how well the original operators satisfy the required relations, we include a proof.

\begin{lemma}
\label{lemma: perturbing almost PVM}
Let $m \geq 2$ and let $A_1,...,A_m$ be positive contractions in $\cM$. Then there exist mutually orthogonal projections $B_1,...,B_m$ in $\cM$ such that $B_1+\cdots+B_m=1$ and
\[ \|B_i-A_i\|_2 \leq \begin{cases} 2\sqrt{2}\|A_1-A_1^2\|_2 & i=1 \\ \displaystyle 19\sum_{j=1}^{i-1} (\|B_j-A_j\|_2+\|A_iA_j\|_2)+2\sqrt{2}\|A_i^2-A_i\|_2 & 1<i<m \\ \displaystyle 38\sum_{j=1}^{m-1} (\|B_j-A_j\|_2+\|A_mA_j\|_2)+4\sqrt{2}\|A_m^2-A_m\|_2+\left\| 1- \sum_{i=1}^m A_i\right\|_2 & i=m \end{cases}\]
\end{lemma}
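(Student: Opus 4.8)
The plan is an induction on $i$ that peels off one spectral projection at each step, working always inside the corner cut out by the projections already chosen. Put $Q_1=1$ and $Q_i=1-(B_1+\cdots+B_{i-1})$ for $i>1$; each $Q_i$ will be a projection in $\cM$. Take $B_1$ to be the spectral projection of $A_1$ onto $[\tfrac{1}{2},1]$, so that the $i=1$ estimate is exactly Lemma \ref{lemma: perturbing one projection}. For $1<i<m$, form the compression $\tilde A_i:=Q_iA_iQ_i$, a positive contraction in $Q_i\cM Q_i$, and let $B_i$ be the spectral projection of $\tilde A_i$ onto $[\tfrac{1}{2},1]$ computed inside that corner; since $B_i\le Q_i$, the projections $B_1,\dots,B_i$ are mutually orthogonal in $\cM$. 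Finally set $B_m:=Q_m=1-(B_1+\cdots+B_{m-1})$, which is a projection orthogonal to all the earlier ones and makes $B_1+\cdots+B_m=1$ automatically.

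For $1<i<m$ the estimate comes from $\|B_i-A_i\|_2\le\|B_i-\tilde A_i\|_2+\|\tilde A_i-A_i\|_2$ together with three observations. First, since $1-Q_i=\sum_{j<i}B_j$ and each $A_j$ is a contraction, $\|A_i(1-Q_i)\|_2\le\sum_{j<i}\|A_iB_j\|_2\le\sum_{j<i}\big(\|A_i(B_j-A_j)\|_2+\|A_iA_j\|_2\big)\le\sum_{j<i}\big(\|B_j-A_j\|_2+\|A_iA_j\|_2\big)$, and $\|(1-Q_i)A_i\|_2$ equals this same quantity because $\|Z\|_2=\|Z^*\|_2$. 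Second, from $A_i-\tilde A_i=(1-Q_i)A_i+Q_iA_i(1-Q_i)$ one gets $\|A_i-\tilde A_i\|_2\le 2\|(1-Q_i)A_i\|_2$. Third, from the identity $\tilde A_i-\tilde A_i^2=Q_i(A_i-A_i^2)Q_i+Q_iA_i(1-Q_i)A_iQ_i$ one gets $\|\tilde A_i-\tilde A_i^2\|_2\le\|A_i^2-A_i\|_2+\|(1-Q_i)A_i\|_2$; applying Lemma \ref{lemma: perturbing one projection} inside $Q_i\cM Q_i$ then bounds $\|B_i-\tilde A_i\|_2$ by $2\sqrt{2}\,\|\tilde A_i-\tilde A_i^2\|_2$. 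Combining these gives a bound of exactly the stated shape (in fact with the constant $2+2\sqrt{2}$ in place of $19$). For $i=m$ nothing further is needed: the algebraic identity $B_m-A_m=\big(1-\sum_{j=1}^m A_j\big)+\sum_{j=1}^{m-1}(A_j-B_j)$ yields $\|B_m-A_m\|_2\le\big\|1-\sum_{j=1}^m A_j\big\|_2+\sum_{j=1}^{m-1}\|A_j-B_j\|_2$, which is dominated by the claimed $i=m$ bound.

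The one point requiring care is the compression step. When $Q_i\ne 0$, the corner $Q_i\cM Q_i$ is again a finite von Neumann algebra carrying the faithful normal tracial state $\tau(Q_i)^{-1}\tau(Q_i\,\cdot\,Q_i)$, so Lemma \ref{lemma: perturbing one projection} applies there; its $2$-norm differs from that of $\cM$ only by the scalar $\tau(Q_i)^{-1/2}$, which cancels from both sides of the resulting inequality, so the estimate for $\|B_i-\tilde A_i\|_2$ holds in the $2$-norm of $\cM$ as stated. The degenerate case $Q_i=0$ for some $i<m$ is handled separately by setting $B_i=\cdots=B_m=0$: then $\|B_k-A_k\|_2=\|A_k\|_2=\|(1-Q_k)A_k\|_2$ for each $k\ge i$, which is already controlled by the first observation above, and $B_1+\cdots+B_m=1$ still holds. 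Beyond these points the argument is just bookkeeping to check that each partial estimate fits the (deliberately non-sharp) template in the statement; no new ideas are needed, since the real content is carried entirely by Lemma \ref{lemma: perturbing one projection}, applied one projection at a time in a shrinking sequence of corners.
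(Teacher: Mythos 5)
Your proof is correct and follows essentially the same route as the paper's: iteratively compress $A_i$ by the complement of the previously chosen projections, apply Lemma \ref{lemma: perturbing one projection} to the compressed positive contraction, and define the last projection as the complement $1-\sum_{j<m}B_j$. Your identity $\tilde A_i-\tilde A_i^2=Q_i(A_i-A_i^2)Q_i+Q_iA_i(1-Q_i)A_iQ_i$ even yields the sharper constant $2+2\sqrt{2}$ where the paper's triangle-inequality chain gives $12\sqrt{2}+2$, and both are dominated by the stated $19$.
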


\begin{proof}
If $m=1$, then we use Lemma \ref{lemma: perturbing one projection} and obtain an orthogonal projection $B_1$ such that $\|A_1-B_1\| \leq 2\sqrt{2}\|A_1-A_1^2\|_2$.

Suppose that we have already obtained mutually orthogonal projections $B_1,...,B_k$, where $1 \leq k<m$. We first note that, for each $i=1,...,k$,
\[ \|A_{k+1}B_i\|_2 \leq \|A_{k+1}(B_i-A_i)\|_2+\|A_{k+1}A_i\|_2 \leq \|B_i-A_i\|_2+\|A_{k+1}A_i\|_2.\]
Set $B=1-(B_1+\cdots+B_k)$, which is an orthogonal projection. Then it follows that
\[ \|BA_{k+1}B-A_{k+1}\|_2 \leq \|BA_{k+1}B-BA_{k+1}\|_2+\|BA_{k+1}-A_{k+1}\|_2 \leq 2 \|BA_{k+1}-A_{k+1}\|_2.\]
Simplifying the last expression and using the triangle inequality yields
\[ \|BA_{k+1}B-A_{k+1}\|_2 \leq 2\sum_{i=1}^k \|B_iA_{k+1}\|_2 \leq 2\sum_{i=1}^k (\|B_i-A_i\|_2+\|A_{k+1}A_i\|_2).\]
To apply Lemma \ref{lemma: perturbing one projection}, we show that the positive contraction $BA_{k+1}B$ is almost a projection. As a first step, we observe that
\begin{align*}
\|(BA_{k+1}B)^2-A_{k+1}^2\|_2&=\|BA_{k+1}BBA_{k+1}B-A_{k+1}A_{k+1}\|_2 \\
&\leq \|BA_{k+1}BBA_{k+1}B-A_{k+1}BA_{k+1}B\|_2+\|A_{k+1}BA_{k+1}B-A_{k+1}A_{k+1}\|_2 \\
&\leq 2\|BA_{k+1}B-A_{k+1}\|_2.
\end{align*}
As a result, we obtain the estimate
\begin{align*}
\|(BA_{k+1}B)^2-BA_{k+1}B\|_2 &\leq \|(BA_{k+1}B)^2-A_{k+1}^2\|_2+\|A_{k+1}^2-A_{k+1}\|_2+\|A_{k+1}-BA_{k+1}B\|_2 \\
&\leq 3\|BA_{k+1}B-A_{k+1}\|_2+\|A_{k+1}^2-A_{k+1}\|_2 \\
&\leq 6\sum_{i=1}^k (\|B_i-A_i\|_2+\|A_{k+1}A_i\|_2)+\|A_{k+1}^2-A_{k+1}\|_2.
\end{align*}
By Lemma \ref{lemma: perturbing one projection}, the projection $B_{k+1}=\chi_{\left[\frac{1}{2},1\right]}(BA_{k+1}B)$ belongs to $\cM$ and satisfies
\[ \|B_{k+1}-BA_{k+1}B\|_2 \leq 12\sqrt{2}\sum_{i=1}^k (\|B_i-A_i\|_2+\|A_{k+1}A_i\|_2)+2\sqrt{2}\|A_{k+1}^2-A_{k+1}\|_2.\]
Moreover, since $BA_{k+1}B$ is orthogonal to $B_1,...,B_k$, so is $B_{k+1}$. Then the distance from $B_{k+1}$ to $A_{k+1}$ can be estimated as
\begin{align*}
\|B_{k+1}-A_{k+1}\|_2 &\leq \|B_{k+1}-BA_{k+1}B\|_2+\|BA_{k+1}B-A_{k+1}\|_2 \\
&\leq (12\sqrt{2}+2)\sum_{i=1}^k (\|B_i-A_i\|_2+\|A_{k+1}A_i\|_2)+2\sqrt{2}\|A_{k+1}^2-A_{k+1}\|_2.
\end{align*}
The second case in the lemma statement follows by induction.
For the final case, assume $B_1,...,B_m$ are obtained as above. Then since $B_1,...,B_m$ are mutually orthogonal, $1-\sum_{i=1}^m B_i$ is an orthogonal projection. Thus, if we define $\widetilde{B}_i=B_i$ for all $1 \leq i \leq m-1$ and $\widetilde{B}_m=1-\sum_{i=1}^{m-1} B_i$, then $(\widetilde{B}_1,...,\widetilde{B}_m)$ is a PVM in $\cM$ with $\|\widetilde{B}_i-A_i\|_2=\|B_i-A_i\|_2$ for all $1 \leq i \leq m-1$ and
\[ \|\widetilde{B}_m-A_m\|_2 \leq \left\| 1-\sum_{i=1}^m A_i \right\|_2+2\sum_{i=1}^{m} \|B_i-A_i\|_2.\]
The final case of the lemma statement follows.
\end{proof}

\begin{remark}
Suppose that $\delta>0$ and that $A_1,...,A_m$ are positive contractions in $\cM$ such that $\sum_{i=1}^m \|A_i^2-A_i\|_2<\delta$, $\sum_{1 \leq i<j \leq m} \|A_iA_j\|_2<\delta$ and $\left\|1-\sum_{i=1}^m A_i\right\|_2<\delta$. Then Lemma \ref{lemma: perturbing almost PVM} yields a PVM $B_1,...,B_m$ in $\cM$ such that $\sum_{i=1}^m \|B_i-A_i\|_2<f(m,\delta)$ for some function $f$. Let $f(1,\delta)=2\sqrt{2}\delta$. From Lemma \ref{lemma: perturbing almost PVM}, one obtains for each $1<i<m$ that
\[ \sum_{j=1}^i \|B_i-A_i\|_2 \leq 20\sum_{j=1}^{i-1} \|B_j-A_j\|_2+19\sum_{1 \leq i<k \leq j} \|A_iA_k\|_2+2\sqrt{2}\sum_{j=1}^i \|A_j^2-A_j\|_2.\]
Since $19+2\sqrt{2}<22$, we have
\[ \sum_{j=1}^i \|B_i-A_i\|_2 \leq 20f(i-1,\delta)+22\delta.\]
So we can take $f(i,\delta)=20f(i-1,\delta)+22\delta$. However, one can see that this upper bound is exponential in $i$ (and hence polynomial in $2^i$), so that $f(m,\delta)$ is exponential in $m$. We do not know if we can arrange for $f$ to be a polynomial in $m$, rather than in $2^m$.
\end{remark}

For translating approximate synchronous strategies for a non-local game to approximate synchronous strategies for the associated $3$-coloring game, the following special case of Lemma \ref{lemma: perturbing almost PVM}. The estimate is simpler since we are assuming that the operators involved are already orthogonal projections. We include a proof for convenience.

\begin{lemma}
\label{lemma: perturbing two almost orthogonal projections}
Let $A,B \in \cM$ be orthogonal projections. Then there exists an orthogonal projection $C$ in $\cM$ such that $AC=0$ and $\|B-C\|_2 \leq (8\sqrt{2}+2)\|AB\|_2$.
\end{lemma}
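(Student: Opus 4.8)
The idea is to perturb $B$ by ``compressing it away from $A$'': set $T:=(1-A)B(1-A)$. Since $0\le B\le 1$ and $1-A$ is a projection, $T$ is a positive contraction in $\cM$, and moreover $T=(1-A)T(1-A)$, so $T$ is supported on the range of $1-A$; in particular $AT=TA=0$, because $\ran A\subseteq\ker T$. The candidate projection is the spectral projection $C:=\chi_{[\frac12,1]}(T)$, which lies in $\cM$ by the Borel functional calculus. I would first record that $C$ inherits $AC=CA=0$: the range of $C$ is contained in $\ran(1-A)$, since $\ran A\subseteq\ker T$ and the spectral subspaces of $T$ for the disjoint Borel sets $\{0\}$ and $[\tfrac12,1]$ are orthogonal. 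So the only remaining task is the norm estimate $\|B-C\|_2\le (8\sqrt2+2)\|AB\|_2$.

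This rests on two quantitative inputs. First, $\|B-T\|_2\le 2\|AB\|_2$: expanding $T=B-AB-BA+ABA=B-AB-(1-A)BA$ (using $A(1-A)=0$ to combine the last two terms), the triangle inequality together with $\|(1-A)BA\|_2\le \|BA\|_2=\|AB\|_2$ (the $2$-norm being adjoint-invariant, via the trace property) gives the bound. Second, $T$ is close to a projection, $\|T-T^2\|_2\le \|AB\|_2$: here one uses $T^2=(1-A)\,[\,B(1-A)B\,]\,(1-A)$ together with the identity $B(1-A)B=B-BAB=B-(AB)^*(AB)$, valid since $A$ is a projection, so that $T-T^2=(1-A)(AB)^*(AB)(1-A)$, whose $2$-norm is at most $\|(AB)^*(AB)\|_2\le \|AB\|\,\|AB\|_2\le\|AB\|_2$.

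Feeding the second estimate into Lemma~\ref{lemma: perturbing one projection} applied to the positive contraction $T$ yields $\|C-T\|_2\le 2\sqrt2\,\|T-T^2\|_2$. Combining with the first estimate via the triangle inequality gives $\|B-C\|_2\le \|B-T\|_2+\|T-C\|_2\le (2+2\sqrt2)\|AB\|_2$, which is well within the claimed $(8\sqrt2+2)\|AB\|_2$, and we already observed $AC=0$. (A slightly looser telescoping of $\|T-T^2\|_2$, e.g.\ bounding it by $4\|AB\|_2$, lands exactly on the quoted constant; either way the stated inequality holds.)

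There is essentially no obstacle here. The only points needing (mild) care are verifying that the spectral projection $C$ of $T$ still annihilates $A$ — which is immediate once one notes that $T$ itself does and that $C$ is a function of $T$ with range inside $\ran(1-A)$ — and the bookkeeping of constants in the two elementary estimates above.
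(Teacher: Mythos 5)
Your proof is correct and follows essentially the same route as the paper: compress $B$ to $T=(1-A)B(1-A)$, bound $\|B-T\|_2\le 2\|AB\|_2$ and $\|T-T^2\|_2$, then apply Lemma \ref{lemma: perturbing one projection} to the spectral projection $C=\chi_{[\frac12,1]}(T)$ and note $AC=0$. Your exact identity $T-T^2=(1-A)(AB)^*(AB)(1-A)$ even gives the sharper bound $\|T-T^2\|_2\le\|AB\|_2$ (versus the paper's $4\|AB\|_2$), so you land on the better constant $2+2\sqrt2$, which of course still implies the stated inequality.
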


\begin{proof}
Define $D=1-A$. Then by properties of the $2$-norm,
\[ \|DBD-B\|_2 \leq \|DBD-BD\|_2+\|BD-B\|_2\leq 2\|DB-B\|_2=2\|AB\|_2.\]
Since $B^2=B$ and $D^2=D$, it follows that
\begin{align*}
\|(DBD)^2-DBD\|_2 &\leq \|DBDDBD-BDBD\|_2+\|BDBD-B\|_2 \\
&\leq 2\|DBD-B\| \\
&\leq 4\|AB\|_2.
\end{align*}
Define $C=\chi_{\left[\frac{1}{2},1\right]}(DBD)$, which is an orthogonal projection in $\cM$. Then by Lemma \ref{lemma: perturbing one projection}, we have $\|C-DBD\|_2 \leq 2\sqrt{2}(4\|AB\|_2)=8\sqrt{2}\|AB\|_2$. Since $DBD$ is already orthogonal to $A$, so is $C$. Finally, by the triangle inequality we obtain the estimate
\[ \|C-B\|_2 \leq \|C-DBD\|_2+\|DBD-B\|_2 \leq (8\sqrt{2}+2)\|AB\|_2,\]
as desired.
\end{proof}

\section{The Graph Construction}
\label{section: graph definition}

In this section, we recall the graph construction from \cite{Ha24}, as we will use this graph in the next two sections. Throughout, we assume that $\mathcal{G}=(I,O,\lambda)$ is a synchronous game with $|I|=n$ and $|O|=m \geq 3$. Then the graph $G_{\lambda}$ is constructed as follows. First, start with a triangle $\Delta=\{A,B,C\}$. For each $1 \leq \alpha \leq m-2$ and $1 \leq x \leq n$, one defines $R_{\alpha,x}$ to be a copy of $K_3 \times K_3$ with vertices denoted by $v(i,j,\alpha,x)$, such that $v(i,j,\alpha,x) \sim v(k,\ell,\alpha,x)$ if and only if exactly one of $i=k$ or $j=\ell$ holds. We perform the graph gluing of $R_{\alpha,x}$ with $\Delta$ by identifying vertex $v(1,2,\alpha,x)$ with vertex $B$ for all copies $R_{\alpha,x}$ (i.e. for all $1 \leq x \leq n$ and $1 \leq \alpha \leq m-2$). Define
\begin{equation}
\widehat{v}(a,x)=\begin{cases} v(1,1,1,x) & a=1 \\ v(2,1,a-1,x) & 2 \leq a \leq m-1 \\ v(2,2,m-2,x) & a=m, \end{cases}
\end{equation}
and perform the graph gluing of $R_{\alpha,x}$ to $R_{\alpha+1,x}$ by identifying $v(3,2,\alpha,x)=v(1,1,\alpha+1,x)$ for all $1 \leq \alpha \leq m-3$.

To encode the orthogonality gadgets, first define subsets $\mathcal{E}_{\lambda}=\{ (a,b,x,y) \in \lambda^{-1}(\{0\}): a,b \in \{1,k\}\}$ and $\mathcal{F}_{\lambda}=\{ (a,b,x,y) \in \lambda^{-1}(\{0\}): 2 \leq a,b \leq k-1\}$. For each $(a,b,x,y) \in \mathcal{E}_{\lambda} \cup \mathcal{F}_{\lambda}$, define $Q_{a,b,x,y}$ to be a copy of $K_3 \times K_3$ with vertices denoted by $q(i,j,a,b,x,y)$, with the same adjacency relations as other copies of $K_3 \times K_3$. Then perform gluings of $Q_{a,b,x,y}$ with certain copies of $R_{\alpha,x}$ by the identifications $\widehat{v}(a,x)=q(1,1,a,b,x,y)$ and $\widehat{v}(b,y)=q(2,2,a,b,x,y)$, and
\[ q(1,2,a,b,x,y)=\begin{cases} B & (a,b,x,y) \in \mathcal{E}_{\lambda} \\ C & (a,b,x,y) \in \mathcal{F}_{\lambda}.\end{cases}\]
Next, for each $1 \leq \alpha \leq m-2$ and $1 \leq x \leq n$, we let $T_{\alpha,x}$ be a triangular prism with triangles $\{s(1,\alpha,x),s(2,\alpha,x),s(3,\alpha,x)\}$ and $\{t(1,\alpha,x),t(2,\alpha,x),t(3,\alpha,x)\}$, with $s(i,\alpha,x) \sim t(i,\alpha,x)$ for all $i$. We make the identifications $s(j,\alpha,x)=v(1,j,\alpha,x)$ (so in particular $s(2,\alpha,x)=B$) and $t(2,\alpha,x)=A$. Lastly, we add the adjacency relations
\begin{align}
A &\sim v(3,3,\alpha,x) \text{ for all } 1 \leq \alpha \leq m-2, \, 1 \leq x \leq n, \\
C &\sim v(2,1,\alpha,x) \text{ for all } 1 \leq \alpha \leq m-2, \, 1 \leq x \leq n, \label{va adjacent to C} \\
\widehat{v}(a,x) &\sim \widehat{v}(b,y) \text{ for all } (a,b,x,y) \in \lambda^{-1}(\{0\}) \setminus (\mathcal{E}_{\lambda} \cup \mathcal{F}_{\lambda}) \label{non orthogonality gadget orthogonalities} \\
A &\sim q(3,3,a,b,x,y) \text{ for all } (a,b,x,y) \in \mathcal{E}_{\lambda} \cup \mathcal{F}_{\lambda}.
\end{align}

The subgraphs $R_{\alpha,x}$, along with the triangluar prisms $T_{\alpha,x}$, for $1 \leq \alpha \leq m-2$, are meant to enforce the algebraic conditions of having an $m$-output PVM for each $x$, while the subgraphs $Q_{a,b,x,y}$ are meant to enforce orthogonality relations related to losing $4$-tuples. The losing $4$-tuples not in $\cE_{\lambda} \cup \cF_{\lambda}$ are taken care of by adjacency relation (\ref{non orthogonality gadget orthogonalities}).

\begin{remark}
The graph defined here differs from \cite{Ha24} in three minor details. First, the graph in \cite{Ha24} was defined using an asymmetric rule function $\lambda_{\text{asym}}$ with resulting game $*$-algebra being isomorphic to the original. This assumption was only cosmetic and meant to decrease the number of vertices of the graph $G_{\lambda}$; the construction in this paper works in the exact same way when the rule function is not asymmetric.

The second difference is that, in \cite{Ha24}, it was assumed that $\lambda(a,a,x,x)>0$ for all $a,x$. This assumption is not needed, as the approximate coloring assignment in the next section will still work for tuples of the form $(a,a,x,x)$ in $\lambda^{-1}(\{0\})$.

The third difference is that the sets $\cE_{\lambda}$ and $\cF_{\lambda}$ were defined in \cite{Ha24} in such a way to avoid tuples of the form $(a,b,x,x)$ with $a \neq b$. Again, this assumption was not needed, but was there to decrease the number of total vertices slightly. In the case of this paper, including these tuples in $\cE_{\lambda}$ and $\cF_{\lambda}$ simplifies the approximate orthogonality estimates in the final section.
\end{remark}

\section{From approximate strategies for $\cG$ to approximate strategies for $3$-coloring $G_{\lambda}$}
\label{section: from G to Glambda}

In this section, we start with a a synchronous game $\cG=(I,O,\lambda)$ with $|I|=n$ and $|O|=m \geq 3$. Define $G_{\lambda}$ to be the graph associated with $\cG$ in the previous section. We fix a synchronous correlation $p \in C_t^s(n,m)$ of the form $p(a,b|x,y)=\tau(E_{a,x}E_{b,y})$, where $\tau$ is a faithful (normal) tracial state on a von Neumann algebra $\cM$ and, for each $x=1,...,n$, $\{E_{a,x}\}_{a=1}^m$ is a PVM in $\cM$. (Every synchronous correlation arises in this way; see \cite{PSSTW16} for $t \in \{loc,q,qc\}$ and \cite{KPS18} for $t=qa$. In addition, if $t=loc$, we can arrange for $\cM$ to be abelian; if $t=q$, we can arrange for $\cM$ to be finite-dimensional; and if $t=qa$ then we can arrange for $\cM$ to be $\cR^{\cU}$, a tracial ultrapower of the hyperfinite $II_1$ factor $\cR$ using a free ultrafilter $\cU$ over $\bN$.) We assume that the winning probability of this strategy with respect to the uniform distribution $\pi_u(x,y)=\frac{1}{n^2}$ on $I \times I$ satisfies
\[ \omega^s(\cG,p,\pi_u):=\frac{1}{n^2} \sum_{x,y=1}^n \sum_{a,b=1}^m \lambda(a,b|x,y)p(a,b|x,y)=1-\delta.\]
We will show that, with a slight modification, the construction in \cite{Ha24} yields a synchronous strategy in the same model (i.e. loc,q,qa,qc) for $\text{Hom}(G_{\lambda},K_3)$ that almost wins the $3$-coloring game. Moreover, this synchronous strategy can be realized on $\cM$ using the tracial state $\tau$.

To construct this strategy, we construct approximate $3$-colorings of each of the named subgraphs in $G_{\lambda}$, and then compute the winning probability of this strategy with respect to the uniform distribution on vertex pairs that form edges in $G_{\lambda}$. (Note that this means that each edge is counted twice, since the edges of $G_{\lambda}$ are undirected.)

To ease notation, given $s,t \in \{1,...,m\}$, we define $E_{[s,t],x}=\sum_{k=s}^t E_{k,x}$ if $s \leq t$, and $E_{[s,t],x}=0$ if $s>t$. The control triangle $\{A,B,C\}$ is colored $A \mapsto (1,0,0)$, $B \mapsto (0,1,0)$ and $C\mapsto (0,0,1)$. For each $1 \leq x \leq n$ and $1 \leq \alpha \leq m-2$, the subgraph $R_{\alpha,x}$ is colored exactly as in \cite{Ha24}, using the three matrices
\begin{align*}
H_{1,\alpha,x}&=\begin{pmatrix} E_{[1,\alpha],x} & 0 & E_{[\alpha+1,m],x} \\ E_{\alpha+1,x} & E_{[\alpha+2,m],x} & E_{[1,\alpha],x} \\ E_{[\alpha+2,m],x} & E_{[1,\alpha+1],x} & 0 \end{pmatrix} \\
H_{2,\alpha,x}&=\begin{pmatrix} 0 & 1 & 0 \\ 1-E_{\alpha+1,x} & 0 & E_{\alpha+1,x} \\ E_{\alpha+1,x} & 0 & 1-E_{\alpha+1,x} \end{pmatrix} \\
H_{3,\alpha,x}&=\begin{pmatrix} E_{[\alpha+1,m],x} & 0 & E_{[1,\alpha],x} \\ 0 & E_{[1,\alpha+1],x} & E_{[\alpha+2,m],x} \\ E_{[1,\alpha],x} & E_{[\alpha+2,m],x} & E_{\alpha+1,x} \end{pmatrix}
\end{align*}
Here, the $(i,j)$-entry of $H_{c,\alpha,x}$ is the projection corresponding to the $c$-th color for the vertex $v(i,j,\alpha,x)$ in $R_{\alpha,x}$. One notes that each of the three matrices have projections for entries and all rows and columns sum to $1$, while fixing the vertex and summing over the colors also yields $1$. Moreover, for each $c=1,2,3$, the corresponding products of $(i,j)$-entries from $H_{c,\alpha,x}$ that are in the same row or column are $0$ since $\{E_{a,x}\}_{a=1}^m$ is a PVM. Thus, these assignments constitute a perfect $3$-coloring of $R_{\alpha,x}$ in $\cM$. 

For the triangular prism $T_{\alpha,x}$, the first triangle is $\{v(1,1,\alpha,x),B,v(1,3,\alpha,x)\}$. We have already made the assignments $v(1,1,\alpha,x) \mapsto (E_{[1,\alpha],x},0,E_{[\alpha+1,m],x})$, $B \mapsto (0,1,0)$ and $v(1,3,\alpha,x) \mapsto (E_{[\alpha+1,m],x},0,E_{[1,\alpha],x})$. The other triangle in $T_{\alpha,x}$ is colored via
\[ t(1,\alpha,x) \mapsto (0,E_{[\alpha+1,m],x},E_{[1,\alpha],x}), \, A \mapsto (1,0,0), \, t(2,\alpha,x) \mapsto (0,E_{[1,\alpha],x},E_{[\alpha+1,m],x}).\]
As with $R_{\alpha,x}$, these projections constitute a perfect $3$-coloring for $T_{\alpha,x}$, as the only property used thus far is that $\{E_{a,x}\}_{a=1}^m$ is a PVM for each $x$.

All we have left to consider are edges corresponding to the elements of $\lambda^{-1}(\{0\})$. This will entail coloring the subgraph $Q_{a,b,x,y}$ in the case when $(a,b,x,y) \in \cE_{\lambda} \cup \cF_{\lambda}$, and simply checking the edge $(\widehat{v}(a,x),\widehat{v}(b,y))$ in the case when $(a,b,x,y) \in \lambda^{-1}(\{0\}) \setminus (\cE_{\lambda} \cup \cF_{\lambda})$. To this end, there are three cases:

\textbf{Case 1.} Suppose that $(a,b,x,y) \in \lambda^{-1}(\{0\}) \setminus (\cE_{\lambda} \cup \cF_{\lambda})$. The only edge involved with this disallowed $4$-tuple is the edge $(\widehat{v}(a,x),\widehat{v}(b,y))$. These vertices have already been colored, either by the assignments
\[ \widehat{v}(a,x) \mapsto (E_{a,x},1-E_{a,x},0), \, \widehat{v}(b,y) \mapsto (E_{b,y},0,1-E_{b,y})\]
in the case when $a \in \{1,m\}$ and $b \in \{2,...,m-1\}$, or
\[ \widehat{v}(a,x) \mapsto (E_{a,x},0,1-E_{a,x}), \, \widehat{v}(b,y) \mapsto (E_{b,y},1-E_{b,y},0)\]
in the case when $a \in \{2,...,m-1\}$ and $b \in \{1,m\}$. Either way, projections corresponding to color $2$ or $3$ are orthogonal, and the only way players can lose on this input is if they both respond with color $1$, which occurs with probability $\widetilde{p}(1,1|\widehat{v}(a,x),\widehat{v}(b,y))=\tau(E_{a,x}E_{b,y})=p(a,b|x,y)$.

\textbf{Case 2.} Suppose that $(a,b,x,y) \in \cE_{\lambda}$. Then we color the corresponding subgraph $Q_{a,b,x,y}$ as follows. Since $q(1,1,a,b,x,y)=\widehat{v}(a,x)$ and $q(2,2,a,b,x,y)=\widehat{v}(b,y)$ and since $a,b \in \{1,m\}$ by assumption, we already have the assignments
\[ q(1,1,a,b,x,y) \mapsto (E_{a,x},0,1-E_{a,x}), \, q(2,2,a,b,x,y) \mapsto (E_{b,y},0,1-E_{b,y}).\]
Using Lemma \ref{lemma: perturbing two almost orthogonal projections}, choose a projection $G_{a,b,x,y}$ in $\cM$ such that $E_{a,x}G_{a,b,x,y}=0$ and $\|G_{a,b,x,y}-E_{b,y}\|_2 \leq (8\sqrt{2}+2)\|E_{a,x}E_{b,y}\|_2$. For each $c=1,2,3$, define $J_{c,a,b,x,y}$ to be the matrix whose $(i,j)$-entry is the $c$-th color assignment for the vertex $q(i,j,a,b,x,y)$, given by
\begin{align*}
J_{1,a,b,x,y}&=\begin{pmatrix} E_{a,x} & 0 & 1-E_{a,x} \\ 1-E_{a,x}-G_{a,b,x,y} & E_{b,y} & E_{a,x} \\ G_{a,b,x,y} & 1-E_{b,y} & 0 \end{pmatrix} \\
J_{2,a,b,x,y}&=\begin{pmatrix} 0 & 1 & 0 \\ E_{a,x}+G_{a,b,x,y} & 0 & 1-E_{a,x}-G_{a,b,x,y} \\ 1-E_{a,x}-G_{a,b,x,y} & 0 & E_{a,x}+G_{a,b,x,y} \end{pmatrix} \\
J_{3,a,b,x,y}&=\begin{pmatrix} 1-E_{a,x} & 0 & E_{a,x} \\ 0 & 1-E_{b,y} & G_{a,b,x,y} \\ E_{a,x} & E_{b,y} & 1-E_{a,x}-G_{a,b,x,y} \end{pmatrix}.
\end{align*}

Note that $\sum_{c=1}^3 (J_{c,a,b,x,y})_{i,j}=1$ for each $i,j$, and all the entries are projections since $E_{a,x}G_{a,b,x,y}=0$. We now compute the sum of probabilities that players lose on certain edges. Since two distinct vertices in $K_3 \times K_3$ are connected by an edge if and only if they belong to the same row or column, it is enough to consider products of entries in the same row and in the same column.

For the color $1$, the first row has entries summing to $1$, making the entries pairwise orthogonal. The same thing occurs for each column in $J_{1,a,b,x,y}$. In row $2$, the $(2,1)$ and $(2,3)$ entries are orthogonal, and considering the other two entries gives $\|(1-E_{a,x}-G_{a,b,x,y})E_{b,y}\|_2^2+\|E_{a,x}E_{b,y}\|_2^2$. Since $\|E_{b,y}-G_{a,b,x,y}\|_2 \leq (8\sqrt{2}+2)\|E_{a,x}E_{b,y}\|_2$ and since $1-E_{a,x}-G_{a,b,x,y}$ and $G_{a,b,x,y}$ are (orthogonal) projections, one obtains
\begin{align*}
\|(1-E_{a,x}-G_{a,b,x,y})E_{b,y}\|_2 &\leq \|(1-E_{a,x}-G_{a,b,x,y})(G_{a,b,x,y}-E_{b,y})\|_2 \\
&\leq \|G_{a,b,x,y}-E_{b,y}\|_2 \\
&\leq (8\sqrt{2}+2)\|E_{a,x}E_{b,y}\|_2.
\end{align*} 
Thus, the total sum of probabilities on losing entries for color $1$ in row $2$ is at most
\[ \|(1-E_{a,x}-G_{a,b,x,y})\|_2^2+\|E_{a,x}E_{b,y}\|_2^2 \leq ((8\sqrt{2}+2)^2+1)\|E_{a,x}E_{b,y}\|_2^2=((8\sqrt{2}+2)^2+1)p(a,b|x,y).\]
For row $3$, both entries are orthogonal to the $(3,3)$ entry, so the only term involved is $G_{a,b,x,y}(1-E_{b,y})$. Note that
\[ \|G_{a,b,x,y}(1-E_{b,y})\|_2 \leq \|G_{a,b,x,y}(1-G_{a,b,x,y})\|_2+\|G_{a,b,x,y}(G_{a,b,x,y}-E_{b,y})\|_2 \leq (8\sqrt{2}+2)\|E_{a,x}E_{b,y}\|_2,\]
since the first term is zero. Thus, the sum of probabilities of losing on color $1$, given an edge in row $3$ of $Q_{a,b,x,y}$, is at most $(8\sqrt{2}+2)^2\|E_{a,x}E_{b,y}\|_2^2=(8\sqrt{2}+2)^2p(a,b|x,y)$.

For color $2$, all adjacency rules are perfectly satisfied since $J_{2,a,b,x,y}$ is already a quantum permutation.

For color $3$, as with color $1$, orthogonality already holds for row $1$ and all the columns. Rows $2$ and $3$ combined have the same error terms as color $1$ did, and one obtains the sum of probabilities of losing on color $3$, given edge inputs from $Q_{a,b,x,y}$, being at most $((8\sqrt{2}+2)^2+1)p(a,b|x,y)$. Thus, the sum of losing probabilities for $Q_{a,b,x,y}$ is at most $(4(8\sqrt{2}+2)^2+2)p(a,b|x,y) \leq 712 p(a,b|x,y)$.

\textbf{Case 3:} If $(a,b,x,y) \in \cF_{\lambda}$, then the coloring assignments are identical to Case 2, except colors $2$ and $3$ are swapped. This swap occurs since, in these subgraphs, the $(1,2)$ vertex is identified with $C$, not $B$. The exact same error estimates still hold, and hence the sum of losing probabilities for $Q_{a,b,x,y}$ in this case is also at most $712p(a,b|x,y)$.

\begin{remark}
In the above estimates for $Q_{a,b,x,y}$, one can ignore the case when $x=y$ and $a \neq b$, since in that case, $E_{a,x}$ and $E_{b,x}$ are already orthogonal by assumption. In this case, one obtains $G_{a,b,x,x}=E_{b,x}$ and the matrices $J_{c,a,b,x,x}$ above for $c=1,2,3$ constitute a perfect $3$-coloring of $Q_{a,b,x,y}$.

The approximate $3$-coloring construction for $Q_{a,b,x,y}$ still works for a tuple of the form $(a,a,x,x) \in \lambda^{-1}(\{0\})$. Such a tuple will belong to either $\cE_{\lambda}$ or $\cF_{\lambda}$. In this case, the probability of losing on edges corresponding to $Q_{a,a,x,x}$ is sitll at most $712p(a,a|x,x)$, and no modification is required for handling this case separately.
\end{remark}

We note that adjacencies to the control triangle $\{A,B,C\}$ from the subgraphs $Q_{a,b,x,y}$ are already respected by the approximate $3$-coloring that we have constructed--this is mainly due to the simple assignments of colors to $A,B,C$.

We are now ready to state the main result of this section.

\begin{theorem}
Let $\cG=(I,O,\lambda)$ be a synchronous non-local game with $|I|=n$ and $|O|=m \geq 3$. Let $G_{\lambda}$ be the graph associated with $\cG$. Let $t \in \{loc,q,qa,qc\}$. Suppose that $(p(a,b|x,y))=(\tau(E_{a,x}E_{b,y})) \in C_t^s(n,m)$ is a synchronous $t$-correlation, where $\tau$ is a faithful normal tracial state on a von Neumann algebra $\cM$ and $\{E_{a,x}\}_{a=1}^m$ is a PVM in $\cM$ for each $1 \leq x \leq n$. If $\ee \geq 0$ and $\omega_t^s(\cG,\pi_u,p)=1-\ee$, then there is a synchronous $t$-correlation $\widetilde{p} \in C_t^s(|V(G_{\lambda})|,3)$ with realization in $(\cM,\tau)$ such that
\[ \omega_t^s(\text{Hom}(G_{\lambda},K_3),\pi_{\text{edges}},\widetilde{p}) \geq 1-\frac{356n^2}{|E(G_{\lambda})|}\ee.\]
In particular, if $\omega_t^s(\cG,\pi_u) \geq 1-\ee$, then $\omega_t^s(\text{Hom}(G_{\lambda},K_3),\pi_{\text{edges}}) \geq 1-\frac{356n^2}{|E(G_{\lambda})|}\ee$.
\end{theorem}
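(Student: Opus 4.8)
The plan is to assemble ingredients that are already in place. For each $v\in V(G_\lambda)$ let $(F_{1,v},F_{2,v},F_{3,v})$ be the triple of projections assigned to $v$ in the construction above — the scalar triples on $A,B,C$, the entries of the matrices $H_{c,\alpha,x}$ on the $R_{\alpha,x}$ and on the prisms $T_{\alpha,x}$, and the entries of the $J_{c,a,b,x,y}$ on the gadgets $Q_{a,b,x,y}$. Each such triple is a PVM in $\cM$, and the assignments agree at every identified vertex (this consistency is exactly what the construction of $G_\lambda$ is designed to guarantee, and is checked gadget by gadget as in \cite{Ha24}). Hence $\widetilde p(c,d\mid v,w):=\tau(F_{c,v}F_{d,w})$ defines a synchronous correlation on $|V(G_\lambda)|$ inputs and $3$ outputs, realized on the same $(\cM,\tau)$; since $\cM$ carries the structure appropriate to $t$ (abelian, finite-dimensional, $\cR^{\cU}$, or arbitrary), this puts $\widetilde p\in C_t^s(|V(G_\lambda)|,3)$.

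Next I would compute the losing probability and split it over edges:
\[ 1-\omega_t^s(\text{Hom}(G_\lambda,K_3),\pi_{\text{edges}},\widetilde p)=\sum_{(v,w)\in E(G_\lambda)}\frac{1}{2|E(G_\lambda)|}\sum_{c=1}^3\tau(F_{c,v}F_{c,w}). \]
By the construction, the control triangle, every $R_{\alpha,x}$, every prism $T_{\alpha,x}$, and all adjacencies to $\{A,B,C\}$ are colored perfectly, so they contribute $0$; the only contributions come from the Case $1$ edges $\{\widehat{v}(a,x),\widehat{v}(b,y)\}$ for $(a,b,x,y)\in\lambda^{-1}(\{0\})\setminus(\cE_\lambda\cup\cF_\lambda)$, each at most $\frac{1}{|E(G_\lambda)|}p(a,b|x,y)$, and from the gadgets $Q_{a,b,x,y}$ for $(a,b,x,y)\in\cE_\lambda\cup\cF_\lambda$, each bounded by the Case $2$/Case $3$ estimate, which is a fixed constant multiple of $\frac{1}{|E(G_\lambda)|}p(a,b|x,y)$. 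Since every tuple of $\lambda^{-1}(\{0\})$ lies in exactly one of these two families and the $Q$-constant dominates, summing yields $1-\omega_t^s(\text{Hom}(G_\lambda,K_3),\pi_{\text{edges}},\widetilde p)\le\frac{356}{|E(G_\lambda)|}\sum_{(a,b,x,y)\in\lambda^{-1}(\{0\})}p(a,b|x,y)$.

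Finally, since $\sum_{a,b}p(a,b|x,y)=1$ for each $(x,y)$,
\[ \sum_{(a,b,x,y)\in\lambda^{-1}(\{0\})}p(a,b|x,y)=\sum_{x,y=1}^n\Big(1-\sum_{a,b=1}^m\lambda(a,b,x,y)p(a,b|x,y)\Big)=n^2-n^2\,\omega_t^s(\cG,\pi_u,p)=n^2\ee, \]
which gives $\omega_t^s(\text{Hom}(G_\lambda,K_3),\pi_{\text{edges}},\widetilde p)\ge1-\frac{356n^2}{|E(G_\lambda)|}\ee$; the ``in particular'' statement follows by taking the supremum over $p$. I do not expect a genuine obstacle, since all of the analytic content is already carried by the propositions of Section \ref{section: approximations} and by the explicit coloring construction: the only points that need care are the combinatorial bookkeeping of the previous paragraph — pinning down exactly which edges are lossy and with what multiplicity, and absorbing the factor of two between counting each undirected edge once and the $\frac{1}{2|E(G_\lambda)|}$-weighting of $\pi_{\text{edges}}$ (which is what produces the constant $356$) — together with the routine check that $\widetilde p$ is a legitimate correlation in the model $t$.
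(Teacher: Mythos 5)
Your proposal is correct and follows essentially the same route as the paper: define $\widetilde p$ from the explicit approximate coloring, observe that only the Case 1 edges and the gadgets $Q_{a,b,x,y}$ are lossy with per-tuple cost at most $712\,p(a,b|x,y)$, and convert $\sum_{\lambda^{-1}(\{0\})}p(a,b|x,y)=n^2\ee$ into the stated bound. The bookkeeping points you flag (double-counting of undirected edges against the $\tfrac{1}{2|E(G_\lambda)|}$ weight, and consistency of the color assignments at identified vertices) are exactly the ones the paper handles, and handles in the same way.
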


\begin{proof}
We define $\widetilde{p}$ as the corresponding synchronous correlation obtained from the approximate $3$-coloring of $G_{\lambda}$ in this section. Since the operators used belong to the same von Neumann algebra as the projections $\{E_{a,x}\}$ and since we are using the same trace $\tau$, it follows that $\widetilde{p} \in C_t^s(|V(G_{\lambda})|,3)$. We compute
\begin{align*}
1-\omega_t^s(\text{Hom}(G_{\lambda},K_3),\pi_{\text{edges}},\widetilde{p})&=\frac{1}{2|E(G_{\lambda})|} \sum_{(\mu,\nu) \in E(G_{\lambda})} \sum_{c=1}^3 \widetilde{p}(c,c|\mu,\nu) \\
&\leq \frac{1}{2|E(G_{\lambda})|} \left( \sum_{(a,b,x,y) \in \lambda^{-1}(\{0\}) \setminus (\cE_{\lambda} \cup \cF_{\lambda})} p(a,b|x,y)\right. \\
&\,\,\,\,\,\,\,\left.+\sum_{(a,b,x,y) \in \cE_{\lambda} \cup \cF_{\lambda}} 712 p(a,b|x,y)\right) \\
&\leq \frac{712}{2|E(G_{\lambda})|} \sum_{(a,b,x,y) \in \lambda^{-1}(\{0\})}  p(a,b|x,y)\\
&=\frac{356n^2}{|E(G_{\lambda})|} (1-\omega_t^s(\cG,\pi_u,p)).
\end{align*}
The last claim follows by taking suprema.
\end{proof}

\begin{remark}
\label{remark: number of edges}
One can determine the number of edges in $G_{\lambda}$ in terms of $n=|I|$, $m=|O|$ and $|\lambda^{-1}(\{0\})|$. Each subgraph $R_{\alpha,x}$, for $1 \leq \alpha \leq m-2$ and $1 \leq x \leq n$, contains $18$ edges, plus one edge from $C$ to $v(2,1,\alpha,x)$ and one edge from $A$ to $v(3,3,\alpha,x)$. The triangular prism $T_{\alpha,x}$ contributes $5$ extra edges. The subgraph $Q_{a,b,x,y}$ contributes $18$ edges, plus one edge between $q(3,3,a,b,x,y)$ and $A$. There is one additional edge of the form $(\what{v}(a,x),\what{v}(b,y))$ for each $(a,b,x,y) \in \lambda^{-1}(\{0\}) \setminus (\cE_{\lambda} \cup \cF_{\lambda})$. Thus, the number of edges in $G_{\lambda}$ is
\[ |E(G_{\lambda})|=25n(m-2)+19|\cE_{\lambda}|+19|\cF_{\lambda}|+|\lambda^{-1}(\{0\}) \setminus (\cE_{\lambda} \cup \cF_{\lambda})|.\]
Since the sizes of $\cE_{\lambda}$ and $\cF_{\lambda}$ are often difficult to compute (except, for example, in graph coloring games), and since they are disjoint subsets of $\lambda^{-1}(\{0\})$, one always has the weaker bounds
\[ 25n(m-2)+|\lambda^{-1}(\{0\})| \leq |E(G_{\lambda})| \leq 25n(m-2)+19|\lambda^{-1}(\{0\})|.\]
Since $\cG$ is synchronous, each tuple $(a,b,x,x)$ for $a \neq b$ and $1 \leq x \leq n$ belongs to $\lambda^{-1}(\{0\})$. Thus, one always has the crude bounds $nm(m-1) \leq |\lambda^{-1}(\{0\})| \leq |O \times O \times I \times I|=m^2n^2$, so that $|E(G_{\lambda})|$ is bounded below and above by a polynomial in $n,m$.

Thus, in the preceding theorem, one can replace $1-\frac{712n^2}{|E(G_{\lambda})|}\ee$ with the somewhat worse bound $1-\text{poly}(n,m)\ee$ for a suitable polynomial.
\end{remark}

The following theorem is immediate from the previous remark and theorem.

\begin{theorem}
\label{theorem: forward direction}
For each $t \in \{loc,q,qa,qc\}$, if $\omega_t^s(\cG,\pi_u) \geq 1-\ee$, then $\omega_t^s(\text{Hom}(G_{\lambda},K_3),\pi_{\text{edges}}) \geq 1-\text{poly}(n,m)\ee$.
\end{theorem}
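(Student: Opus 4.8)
The plan is to combine the quantitative estimate from the preceding (unnumbered) theorem of this section with the edge-count bounds recorded in Remark~\ref{remark: number of edges}; no new analytic work is needed. First I would fix $t \in \{loc,q,qa,qc\}$ and $\ee \geq 0$ with $\omega_t^s(\cG,\pi_u) \geq 1-\ee$. Applying the preceding theorem to a near-optimal synchronous $t$-correlation for $\cG$ (equivalently, using the last sentence of that theorem and taking suprema) produces a synchronous $t$-correlation $\widetilde{p} \in C_t^s(|V(G_\lambda)|,3)$ with
\[ \omega_t^s(\text{Hom}(G_{\lambda},K_3),\pi_{\text{edges}}) \geq 1-\frac{356\,n^2}{|E(G_{\lambda})|}\,\ee. \]

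Next I would turn the coefficient $356\,n^2/|E(G_{\lambda})|$ into a genuine polynomial in $n$ and $m$. By Remark~\ref{remark: number of edges}, $|E(G_{\lambda})| \geq 25n(m-2)+|\lambda^{-1}(\{0\})|$, and since $\cG$ is synchronous every tuple $(a,b,x,x)$ with $a \neq b$ lies in $\lambda^{-1}(\{0\})$, so $|\lambda^{-1}(\{0\})| \geq nm(m-1)$. As $m \geq 3$, this gives $|E(G_{\lambda})| \geq nm(m-1) \geq 6n$, hence
\[ \frac{356\,n^2}{|E(G_{\lambda})|} \leq \frac{356\,n^2}{6n} < 60\,n. \]
Therefore $\omega_t^s(\text{Hom}(G_{\lambda},K_3),\pi_{\text{edges}}) \geq 1-60\,n\,\ee$, and since $60n$ is a polynomial in $n$ and $m$ the statement follows. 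If a sharper constant is desired one can instead keep $356\,n^2 / \bigl(25n(m-2)+nm(m-1)\bigr)$, but any polynomial bound suffices for the assertion.

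The main obstacle is not in this reduction at all: all of the substance lives in the preceding theorem, which builds the explicit approximate $3$-colorings of the gadget subgraphs $R_{\alpha,x}$, $T_{\alpha,x}$ and $Q_{a,b,x,y}$ (the last of these relying on Lemma~\ref{lemma: perturbing two almost orthogonal projections}) and verifies that the total loss probability attributable to each losing tuple $(a,b,x,y)$ is bounded by a constant multiple of $p(a,b|x,y)$. The only point requiring mild care in the present statement is that the denominator $|E(G_{\lambda})|$ be bounded below by a strictly positive polynomial in $n$ and $m$ — otherwise one could not absorb $356\,n^2/|E(G_{\lambda})|$ into $\text{poly}(n,m)$ — and this is precisely what the lower bound in Remark~\ref{remark: number of edges} furnishes.
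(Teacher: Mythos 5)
Your proposal is correct and is essentially the paper's own argument: the paper states that this theorem ``is immediate from the previous remark and theorem,'' i.e.\ one combines the bound $1-\frac{356n^2}{|E(G_{\lambda})|}\ee$ from the preceding theorem with the polynomial lower bound on $|E(G_{\lambda})|$ from Remark~\ref{remark: number of edges}. Your explicit computation ($|E(G_{\lambda})| \geq nm(m-1) \geq 6n$, hence the coefficient is at most $60n$) just makes the ``immediate'' step concrete, and your parenthetical about near-optimal correlations correctly handles the fact that the supremum defining $\omega_t^s(\cG,\pi_u)$ need not be attained (e.g.\ for $t=q$).
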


In the case when $t=q$, having $\omega_q^s(\cG,\pi_u) \geq 1-\ee$ does not necessarily mean that there is any synchronous quantum correlation $p$ satisfying $\omega_q^s(\cG,\pi_u,p) \geq 1-\ee$, since the set of synchronous quantum correlations is not closed in general \cite{KPS18}. However, given any $\delta>0$, there exists $p \in C_q^s(n,m)$ such that $\omega_q^s(\cG,\pi_u,p) \geq 1-\ee-\delta$, and hence there is an element $\widetilde{p}$ of $C_q^s(|V(G_{\lambda})|,3)$ such that $\omega_q^s(\text{Hom}(G_{\lambda},K_3),\pi_{\text{edges}},\widetilde{p}) \geq 1-\text{poly}(n,m)(\ee+\delta)$. Then letting $\delta \to 0^+$ yields the desired outcome.

\section{Approximate $3$-colorings of $G_{\lambda}$}
\label{section: from Glambda to G}

In this section, we show that synchronous strategies that approximately win the $3$-coloring game of the graph $G_{\lambda}$ from Section \ref{section: graph definition} yield synchronous strategies that approximately win the game $\cG$ in the same model.

Before we consider the graph $G_{\lambda}$, we make some preliminary observations about $3$-coloring games. Given a graph $G$, we note that, for a $t$-correlation $p(a,b|x,y)$ (not necessarily synchronous) for the game $\text{Hom}(G,K_3)$, the value of the game with that correlation and the uniform distribution on all question pairs that form edges (this set is size $2|E(G)|$) is given by
\[ \omega(\text{Hom}(G,K_3),\pi_{\text{edges}},p)=\frac{1}{2|E(G)|} \sum_{\substack{x,y \in V(G) \\ x \sim y}} \sum_{\substack{a,b \in \{1,2,3\} \\ a \neq b}} \lambda(a,b,x,y) p(a,b|x,y),\]
where
\[ \lambda(a,b,x,y)=\begin{cases} 0 & x=y, \, a \neq b \\ 0 & x \sim y, \, a=b \\ 1 & \text{otherwise}. \end{cases}\]

If $p(a,b|x,y)=\la P_{a,x}Q_{b,y}\psi,\psi \ra$ where $\psi \in \cH$ is a unit vector and $\{P_{a,x}: 1 \leq a \leq 3\}$ and $\{Q_{a,x}: 1 \leq a \leq 3\}$ are PVMs in $\bofh$ with $[P_{a,x},Q_{a,x}]=0$ for all $a,x$, then one can re-define
\[ \widetilde{P}_{a,x}=\bigoplus_{\sigma \in S_3} P_{\sigma(a),x} \text{ and } \widetilde{Q}_{b,y}=\bigoplus_{\sigma \in S_3} Q_{\sigma(b),y}.\]
With $\widetilde{\psi}=\frac{1}{\sqrt{6}} \bigoplus_{\sigma \in S_3} \psi$, and with $\widetilde{p}(a,b|x,y)=\la \widetilde{P}_{a,x}\widetilde{Q}_{b,y}\widetilde{\psi},\widetilde{\psi} \ra$, one has
\[ \widetilde{p}(a,b|x,y)=\frac{1}{6} \sum_{\sigma \in S_3} \la P_{\sigma(a),x}Q_{\sigma(b),y}\psi,\psi \ra=\frac{1}{6}\sum_{\sigma \in S_3} p(\sigma(a),\sigma(b)|x,y).\]
Thus, $\widetilde{p}(a,a|x,y)$ is independent of $a$, and $\widetilde{p}(a,b|x,y)$ for distinct $a,b$ does not depend on the choice of distinct $a,b \in \{1,2,3\}$. Moreover, we have
\begin{align*}
\omega(\text{Hom}(G,K_3),\pi_{\text{edges}},\widetilde{p})&=\frac{1}{2|E(G)|} \sum_{\substack{x,y \in V(G) \\ x \sim y}} \sum_{\substack{a,b \in \{1,2,3\} \\ a \neq b}}\lambda(a,b,x,y)\widetilde{p}(a,b|x,y) \\
&=\frac{1}{2|E(G)|} \sum_{\substack{x,y \in V(G) \\ x \sim y}} \sum_{\substack{a,b \in \{1,2,3\} \\ a \neq b}} \lambda(a,b,x,y)p(a,b|x,y) \\
&=\omega(\text{Hom}(G,K_3),\pi_{\text{edges}},p).
\end{align*}

Thus, when we consider approximately winning (synchronous) strategies for the $3$-coloring game, we may restrict our attention to synchronous approximate strategies of the form $p(a,b|x,y)=\tau(P_{a,\mu},P_{b,\nu})$, for a faithful normal trace $\tau$ on a von Neumann algebra $\cM$ and PVMs $\{P_{a,\mu}\}_{a=1}^3$ for each $\mu \in V(G_{\lambda})$, where $p(a,a|\mu,\nu)$ is independent of $a$, and $p(a,b|\mu,\nu)$ for distinct $a \neq b$ and $(\mu,\nu) \in E(G)$ is independent of the choice of distinct $a,b$. The latter implies that $\|P_{a,\mu}P_{b,\nu}\|_2=\tau(P_{a,\mu}P_{b,\nu}P_{a,\mu})^{\frac{1}{2}}=\tau(P_{a,\mu}P_{b,\nu})^{\frac{1}{2}}$ is independent of the choice of $a,b$, so long as $a \neq b$. Assuming that $\omega(\text{Hom}(G,K_3),\pi_{\text{edges}},p) \geq 1-\ee$, one has
\[ \frac{1}{2|E(G)|}\sum_{\substack{\mu,\nu \in V(G) \\ \mu \sim \nu}} \|P_{a,\mu}P_{a,\nu}\|_2^2=\frac{1}{2|E(G)|}\sum_{\substack{\mu,\nu \in V(G) \\ \mu \sim \nu}} p(a,a|\mu,\nu) \leq \frac{\varepsilon}{3}.\]
Then using convexity and the relation of the $2$-norm to the $1$-norm in Euclidean space, one obtains
\begin{align*}
\left( \frac{1}{2|E(G)|} \sum_{\substack{\mu,\nu \in V(G) \\ \mu \sim \nu}} \|P_{a,\mu}P_{a,\nu}\|_2^{\frac{1}{2}} \right)^2 &\leq \frac{1}{2|E(G)|} \sum_{\substack{\mu,\nu \in V(G) \\ \mu \sim \nu}} \|P_{a,\mu}P_{a,\nu}\|_2 \\
&\leq  \left(\frac{1}{2|E(G)|}\sum_{\substack{\mu,\nu \in V(G) \\ \mu \sim \nu}} \|P_{a,\mu}P_{a,\nu}\|_2^2\right)^{\frac{1}{2}} \\
&\leq \left(\frac{\ee}{3}\right)^{\frac{1}{2}}.
\end{align*}
Taking square roots, multiplying both sides by $|E(G)|$, and replacing $\frac{2}{3^{\frac{1}{4}}}$ with $2$, we obtain:
\begin{equation}
\sum_{\substack{\mu,\nu \in V(G) \\\mu \sim \nu}} \|P_{a,\mu}P_{a,\nu}\|_2^{\frac{1}{2}} \leq 2|E(G)| \ee^{\frac{1}{4}}. \label{main reverse estimate}
\end{equation}
This inequality will be important for us in upcoming estimates.

Now we turn our attention to the graph $G_{\lambda}$. Given a triangle $R$ in $G_{\lambda}$, we will define \[\zeta(R)=\sum_{\substack{\nu,\mu \in V(R) \\ \nu \neq \mu}} \|P_{i,\nu}P_{i,\mu}\|_2,\]
and
\[ \eta(R)=\left\| 1-\sum_{\nu \in V(R)} P_{i,\nu}\right\|_2.\] 
By our assumption on $p$, these quantities do not depend on the choice of $i \in \{1,2,3\}$.

Next, given a triangular prism $T$ in $G_{\lambda}$ formed by the two triangles $\{p,q,r\}$ and $\{s,t,u\}$ with $p \sim s$, $q \sim t$ and $r \sim u$, we will define
\[ \xi(T)=\|P_{i,p}P_{i,s}\|_2+\|P_{i,q}P_{i,t}\|_2+\|P_{i,r}P_{i,u}\|_2,\]
which also does not depend on the choice of $i \in \{1,2,3\}$. Lastly, if $(x,y)$ is an edge in $G_{\lambda}$, then we define
\[ \theta(x,y)=\|P_{i,x}P_{i,y}\|_2,\]
which again does not depend on $i \in \{1,2,3\}$. To simplify notation, for an edge between $(i,j,a,b,x,y)$ and $(k,\ell,a,b,x,y)$ we will let $\theta(q((i,j),(k,\ell),(a,b,x,y)))$ be the quantity $\theta(q(i,j,a,b,x,y),q(k,\ell,a,b,x,y))$.

The quantity $\zeta(R)$ tracks how well the rules are followed for the coloring game in the triangle $R$, and the quantity $\xi(T)$ tracks how well the rules are followed for adjacencies between vertices in the triangular prism $T$ that do not arise from the triangles in $T$.

The following proposition is a restatement of Propositions \ref{proposition: commutators of projections}, \ref{proposition: approx quantum permutation} and \ref{proposition: almost commuting almost quantum permutations} for our synchronous approximate strategy for $\text{Hom}(G_{\lambda},K_3)$.

\begin{proposition}
\label{proposition: restatement for notation} Let $(\mu,\nu)$ be an edge in $G_{\lambda}$; let $S$ be a triangle in $G_{\lambda}$; and let $T$ be a triangular prism in $G_{\lambda}$ consisting of triangles $T_1$ and $T_2$. Then:
\begin{enumerate}
\item For each $i,j \in \{1,2,3\}$ with $i \neq j$, we have
\[ \|[P_{i,\mu},P_{j,\nu}]\|_2 \leq 12 \theta(\mu,\nu).\]
\item For each $i=1,2,3$, we have
\[ \left\| \sum_{w \in V(S)} P_{i,w}-1\right\|_2 \leq 3\zeta(S)^{\frac{1}{2}}.\]
\item
If $i,j \in \{1,2,3\}$ and $v,w \in V(T)$ are non-adjacent vertices, then
\[ \|[P_{i,v},P_{j,w}]\|_2 \leq \begin{cases} 6\zeta(T_1)^{\frac{1}{2}}+6\zeta(T_2)^{\frac{1}{2}}+4\xi(T) & i=j \\ 36\zeta(T_1)^{\frac{1}{2}}+36\zeta(T_2)^{\frac{1}{2}}+24\xi(T) & i \neq j. \end{cases}\]
\end{enumerate}
\end{proposition}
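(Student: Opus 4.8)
The plan is to deduce each of the three inequalities by applying the corresponding general proposition from Section~\ref{section: approximations} to the specific PVMs attached to the vertices of $G_\lambda$, then translating the resulting bounds into the notation $\theta,\zeta,\eta,\xi$. Recall that, by the reduction at the start of this section, our strategy has the symmetry that $\|P_{a,\mu}P_{b,\nu}\|_2$ depends (for $a\neq b$) and $\left\|1-\sum_{w}P_{a,w}\right\|_2$ and $\|P_{a,\mu}P_{a,\nu}\|_2$ depend (for all $a$) only on the vertices involved, not on the colors; this is exactly what makes $\theta,\zeta,\eta,\xi$ well defined, and it is what lets us use the ``in particular'' clauses of Propositions~\ref{proposition: commutators of projections}, \ref{proposition: approx quantum permutation} and \ref{proposition: almost commuting almost quantum permutations}, which assume such color-symmetry.

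For part (1), apply the ``in particular'' inequality of Proposition~\ref{proposition: commutators of projections} to the two PVMs $\{P_{i,\mu}\}_{i=1}^3$ and $\{P_{i,\nu}\}_{i=1}^3$. Both are genuine PVMs (so the $\left\|1-\sum_i P_{i,\mu}\right\|_2$ and $\left\|1-\sum_iP_{i,\nu}\right\|_2$ terms vanish), and $\sum_{i=1}^3\|P_{i,\mu}P_{i,\nu}-P_{i,\nu}P_{i,\mu}\|_2$ is bounded by $\sum_i 2\|P_{i,\mu}P_{i,\nu}\|_2 = 6\|P_{i,\mu}P_{i,\nu}\|_2 = 6\theta(\mu,\nu)$ using the commutator bound $\|[X,Y]\|_2\le 2\|XY\|_2$ for projections and the color-symmetry of $\theta$; the factor $12$ in Proposition~\ref{proposition: commutators of projections} then gives $\|[P_{i,\mu},P_{j,\nu}]\|_2\le 12\theta(\mu,\nu)$. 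For part (2), apply Proposition~\ref{proposition: approx quantum permutation} with the three PVMs $\{P_{a,w}\}$ indexed by the three vertices $w\in V(S)$ of the triangle; the color-symmetry hypothesis of that proposition holds by our reduction, so the ``in particular'' bound gives $\left\|\sum_{w\in V(S)}P_{i,w}-1\right\|_2\le 3\left(\sum_{\nu\neq\mu\in V(S)}\|P_{i,\nu}P_{i,\mu}\|_2\right)^{1/2}=3\zeta(S)^{1/2}$.

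For part (3), apply Proposition~\ref{proposition: almost commuting almost quantum permutations} with $\{P_{i,j}\}$ taken to be the projections of the first triangle $T_1$ of the prism and $\{Q_{i,j}\}$ the projections of $T_2$, ordered so that $s(i,\cdot)\sim t(i,\cdot)$ matches the pairing; non-adjacent $v,w\in V(T)$ fall into the cases $i\neq k,j\neq\ell$ or ($i=k$, $j\neq\ell$) of that proposition (the case $i\neq k$, $j=\ell$ corresponds to adjacent vertices in the prism and is excluded). In either case the bound involves $\sum_{a}\left(\left\|1-\sum_x P_{a,x}\right\|_2+\left\|1-\sum_x Q_{a,x}\right\|_2+\sum_x 2\|P_{a,x}Q_{a,x}\|_2\right)$; here $\left\|1-\sum_x P_{a,x}\right\|_2=\eta(T_1)$, and by part (2) $\eta(T_1)\le 3\zeta(T_1)^{1/2}$ (similarly $\eta(T_2)\le 3\zeta(T_2)^{1/2}$), while $\sum_x 2\|P_{a,x}Q_{a,x}\|_2 = 2\xi(T)$ by definition of $\xi$ and the color-symmetry. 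Summing over $a=1,2,3$ and plugging into the coefficients $4$ (for $i=j$) and $4\cdot\text{(first case)}=4\cdot 2\cdot\text{(commutator step)}$ — tracking through, the $i=j$ case uses part (2) of Proposition~\ref{proposition: almost commuting almost quantum permutations} which carries a coefficient yielding $6\zeta(T_1)^{1/2}+6\zeta(T_2)^{1/2}+4\xi(T)$, and the $i\neq j$ case picks up an extra factor of $6$ from the reduction to the $i=j$ case via Proposition~\ref{proposition: commutators of projections}, giving $36\zeta(T_1)^{1/2}+36\zeta(T_2)^{1/2}+24\xi(T)$. The only real bookkeeping obstacle is keeping the constants straight: matching the prism's adjacency pairing to the index convention of Proposition~\ref{proposition: almost commuting almost quantum permutations} and verifying that non-adjacent vertex pairs in a triangular prism land only in the two claimed cases; once that is confirmed the constants are a mechanical substitution. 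No new ideas are needed beyond careful translation of notation.
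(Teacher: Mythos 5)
Your proposal is correct and follows essentially the same route as the paper: parts (1)--(3) are exactly Propositions \ref{proposition: commutators of projections}, \ref{proposition: approx quantum permutation} and \ref{proposition: almost commuting almost quantum permutations} applied to the PVMs at the relevant vertices, with the color-symmetry of the strategy used to rewrite the error terms as $\theta$, $\zeta$ and $\xi$, and with $\eta(T_i)\le 3\zeta(T_i)^{1/2}$ from part (2) feeding into part (3). The one slip is in part (1): you should invoke the per-pair (factor-$2$) inequality of Proposition \ref{proposition: commutators of projections} rather than its ``in particular'' (factor-$12$) clause, since the latter bounds the \emph{sum} over $i\ne j$ and would only give $72\,\theta(\mu,\nu)$ for a single commutator; the intended computation $2\bigl(0+0+\sum_{i}2\|P_{i,\mu}P_{i,\nu}\|_2\bigr)=12\,\theta(\mu,\nu)$ is exactly what the paper does.
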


\begin{proof}
For (1), by Proposition \ref{proposition: commutators of projections} with the PVMs $\{ P_{i,\mu}\}_{i=1}^3$ and $\{P_{i,\nu}\}_{i=1}^3$, we have for each pair $i,j$ with $i \neq j$ that
\[ \|[P_{i,\mu},P_{j,\nu}]\|_2 \leq 2\sum_{i=1}^3 \|[P_{i,\mu},P_{i,\nu}]\|_2 \leq 4\sum_{i=1}^3 \|P_{i,\mu}P_{i,\nu}\|_2.\]
Since $\|P_{i,\mu}P_{i,\nu}\|_2=\theta(\mu,\nu)$ for all $i$, (1) follows.

Note that the expression in (2) is exactly $\eta(S)$. By Proposition \ref{proposition: approx quantum permutation} we immediately obtain $\eta(S) \leq 3\zeta(S)^{\frac{1}{2}}$, as desired.

It remains to prove (3). In the case when $i=j$, then the first case of Proposition \ref{proposition: almost commuting almost quantum permutations} yields
\[ \|[P_{i,v},P_{j,w}]\|_2 \leq 2\eta(T_1)+2\eta(T_2)+4\xi(T).\]
Applying (2) then gives the desired inequality. The case when $i \neq j$ is similar and uses the third case of Proposition \ref{proposition: almost commuting almost quantum permutations}.
\end{proof}

We now set some notation regarding the subgraph $R_{\alpha,x}$ of $G_{\lambda}$. Since each $R_{\alpha,x}$ is a copy of $K_3 \times K_3$ for $1 \leq \alpha \leq m-2$ and $1 \leq x \leq n$, we will define $R_{i,(\alpha,x)}$ to be the triangle $\{v(i,j,\alpha,x): j=1,2,3\}$ in $R_{\alpha,x}$, and we will define $R_{(\alpha,x),j}$ to be the triangle $\{v(i,j,\alpha,x): i=1,2,3\}$ in $R_{\alpha,x}$. Similarly, we will define the triangular prism $R_{i,j,(\alpha,x)}$ to be given by the triangles $R_{i,(\alpha,x)}$ and $R_{j,(\alpha,x)}$. We will also define $S_{\alpha,x}$ to be the triangle $\{s(1,\alpha,x),A,s(2,\alpha,x)\}$ in the triangular prism $T_{\alpha,x}$ (the other triangle is $R_{1,(\alpha,x)}$).

For each triangle $T$ in $G_{\lambda}$, one has
\begin{equation}
(\zeta(T))^{\frac{1}{2}}=\left( \sum_{\substack{\mu,\nu \in V(T) \\ \mu \neq \nu}} \|P_{1,\mu}P_{1,\nu}\|_2\right)^{\frac{1}{2}} \leq \sum_{\substack{\mu,\nu \in V(T) \\ \mu \neq \nu}} \|P_{1,\mu}P_{1,\nu}\|_2^{\frac{1}{2}}, \label{zeta estimate triangle}
\end{equation}
using properties of the square root function. Moreover, for any two projections $P_{1,\nu}$ and $P_{1,\mu}$ for $\mu,\nu \in V(G_{\lambda})$, one has $\tau(P_{1,\nu}P_{1,\mu}) \in [0,1]$, so that $\tau(P_{1,\nu}P_{1,\mu})^{\frac{1}{4}}=\|P_{1,\nu}P_{1,\mu}\|_2^{\frac{1}{2}} \geq \tau(P_{1,\nu}P_{1,\mu})$. These facts will be used freely throughout this section.

We start by showing that, for the control triangle $\Delta=\{A,B,C\}$, the operators $S_{i,j,k}=P_{i,A}P_{j,B}P_{k,C}P_{j,B}P_{i,A}$, for $\{i,j,k\}=\{1,2,3\}$, are positive contractions that approximately satisfy the relations of a $6$-output PVM in $2$-norm.

\begin{lemma}
\label{lemma: cut-down contractions are an almost-PVM}
For each choice of $i,j,k$ with $\{i,j,k\}=\{1,2,3\}$, $S_{i,j,k}=P_{i,A}P_{j,B}P_{k,C}P_{j,B}P_{i,A}$ is a positive contraction, and the following estimates hold:
\[ \left\| 1-\sum_{\{i,j,k\}=\{1,2,3\}} S_{i,j,k}\right\|_2 \leq \frac{477}{2}\zeta(\Delta).\]
Moreover,
\[ \sum_{\{i,j,k\}=\{1,2,3\}} \|S_{i,j,k}^2-S_{i,j,k}\|_2 \leq 72\zeta(\Delta),\]
and
\[ \sum_{\substack{\{i_1,j_1,k_1\}=\{i_2,j_2,k_2\}=\{1,2,3\} \\ (i_1,j_1,k_1) \neq (i_2,j_2,k_2)}} \|S_{i_1,j_1,k_1}S_{i_2,j_2,k_2}\|_2 \leq 36\zeta(\Delta).\]
\end{lemma}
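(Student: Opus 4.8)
The plan is to prove all three estimates for the operators $S_{i,j,k}=P_{i,A}P_{j,B}P_{k,C}P_{j,B}P_{i,A}$ by reducing everything to Proposition \ref{proposition: sum of cut down projections} and the approximate-commutation estimates of Proposition \ref{proposition: restatement for notation}, together with the observation that, since $p$ is the symmetrized strategy, $\|P_{i,A}P_{i,B}\|_2=\|P_{i,A}P_{i,C}\|_2=\|P_{i,B}P_{i,C}\|_2=\theta$ for a common value $\theta$ (depending only on the pair of vertices, and in fact, by \eqref{zeta estimate triangle}, controlled by $\zeta(\Delta)^{1/2}$, and also, since $\tau(P\cdot P')\in[0,1]$, by $\zeta(\Delta)$ itself). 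The positivity and contractivity of $S_{i,j,k}$ is immediate: writing $X=P_{k,C}P_{j,B}P_{i,A}$ we have $S_{i,j,k}=X^*X\geq 0$, and $\|S_{i,j,k}\|=\|X\|^2\leq 1$ since each factor is a projection.

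For the first estimate, I would simply feed the PVMs $\{P_{i,A}\}$, $\{P_{i,B}\}$, $\{P_{i,C}\}$ into Proposition \ref{proposition: sum of cut down projections} to get the bound $159\sum_{i=1}^3(\|P_{i,A}P_{i,B}\|_2+\|P_{i,A}P_{i,C}\|_2+\|P_{i,B}P_{i,C}\|_2)$; by the symmetry of $p$ this right-hand side equals $159\cdot 3\theta = 477\theta$ where $\theta=\|P_{1,A}P_{1,B}\|_2$, and since $2\theta\le\sum_{\mu\ne\nu\in V(\Delta)}\|P_{1,\mu}P_{1,\nu}\|_2=\zeta(\Delta)$ this is at most $\tfrac{477}{2}\zeta(\Delta)$. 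For the second estimate, I would compute $S_{i,j,k}^2-S_{i,j,k}$ by inserting/removing the middle projection $P_{i,A}P_{j,B}$; concretely, $S_{i,j,k}^2=P_{i,A}P_{j,B}P_{k,C}(P_{j,B}P_{i,A}P_{i,A}P_{j,B})P_{k,C}P_{j,B}P_{i,A}$, and the inner operator $P_{j,B}P_{i,A}P_{j,B}$ differs from $P_{j,B}$ by a commutator term: $\|P_{j,B}P_{i,A}P_{j,B}-P_{j,B}P_{i,A}\|_2\le\|[P_{i,A},P_{j,B}]\|_2\le 12\theta(A,B)$ by Proposition \ref{proposition: restatement for notation}(1), and then one more swap to replace $P_{j,B}P_{i,A}$ by $P_{i,A}P_{j,B}$ inside (absorbing all the surrounding projections, which are contractions). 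Tracking the constants and using $\theta(A,B),\theta(A,C),\theta(B,C)\le\zeta(\Delta)^{1/2}\cdot\ldots$ — actually here, since the targets are linear in $\zeta(\Delta)$, I would use $\theta\le\tfrac12\zeta(\Delta)$ — gives a bound of the form $C\zeta(\Delta)$ per triple, and summing over the $6$ triples yields the claimed $72\zeta(\Delta)$; the main arithmetic is just bookkeeping the constant $C$ so that $6C\le 72$.

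For the third estimate, the key point is that for $(i_1,j_1,k_1)\ne(i_2,j_2,k_2)$ with both equal to $\{1,2,3\}$ as sets, at least one of the three coordinates differs, say $i_1\ne i_2$; then $S_{i_1,j_1,k_1}S_{i_2,j_2,k_2}$ has the factor $P_{i_1,A}\cdots P_{i_2,A}$ in the middle, and since $P_{i_1,A}P_{i_2,A}=0$ exactly (these come from a genuine PVM, not an approximate one), I want to move the two offending $P_{i_1,A}$ and $P_{i_2,A}$ next to each other through the intervening $P_{j_1,B}$ (or $P_{j_2,B}$) factor using the commutator bound; each such move costs $\|[P_{i,A},P_{j,B}]\|_2\le 12\theta(A,B)$. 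If instead $i_1=i_2$ but $j_1\ne j_2$, then the analogous argument uses $P_{j_1,B}P_{j_2,B}=0$ and moves a $B$-projection past a $C$-projection or an $A$-projection; if $i_1=i_2$ and $j_1=j_2$ then necessarily $k_1\ne k_2$ and one uses $P_{k_1,C}P_{k_2,C}=0$ directly in the center with no commutators needed. In each case one obtains a bound that is a small constant times $\theta(A,B)+\theta(A,C)+\theta(B,C)\le\tfrac12\zeta(\Delta)$ per ordered pair of distinct triples; there are $30$ such ordered pairs, so I need the per-pair constant times $30$ to come out to $36\zeta(\Delta)$, i.e. the per-pair bound should be roughly $\tfrac{36}{30}\zeta(\Delta)=1.2\,\zeta(\Delta)$, which forces a somewhat careful commutator count.

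The main obstacle, as usual in this paper, is purely the constant-tracking: getting the commutator insertions to be economical enough (the right number of swaps, and exploiting the \emph{exact} orthogonality of the $A$-, $B$-, $C$-PVMs rather than treating them approximately) so that the final constants $\tfrac{477}{2}$, $72$, $36$ actually come out as stated. There is no conceptual difficulty — every tool is already available in Propositions \ref{proposition: sum of cut down projections} and \ref{proposition: restatement for notation} — but one must be disciplined about which of the three center coordinates is the "bad" one in the third estimate, and about using $\theta\le\tfrac12\zeta(\Delta)$ (rather than the weaker $\theta\le\zeta(\Delta)^{1/2}$) wherever a linear-in-$\zeta$ target is demanded.
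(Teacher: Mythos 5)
Your overall strategy matches the paper's: the first estimate is a direct application of Proposition \ref{proposition: sum of cut down projections}, and the second is the same commutator-shuffling argument with the same constant budget ($12\zeta(\Delta)$ per triple, six triples). Two small cautions there. First, your claim that $\|P_{i,A}P_{i,B}\|_2=\|P_{i,A}P_{i,C}\|_2=\|P_{i,B}P_{i,C}\|_2$ is false: the symmetrization makes these quantities independent of the \emph{color} indices, not of the \emph{edge}, so there are three possibly different values $\theta(A,B),\theta(A,C),\theta(B,C)$ with $\theta(A,B)+\theta(A,C)+\theta(B,C)=\tfrac12\zeta(\Delta)$. This does not damage the first two estimates (the arithmetic $159\sum_{i}(\cdots)=\tfrac{477}{2}\zeta(\Delta)$ and the per-edge bound $\theta(\mu,\nu)\le\tfrac12\zeta(\Delta)$ both survive), but you should not rely on the three values coinciding.

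The third estimate is where your plan genuinely fails. You budget roughly $\tfrac{36}{30}\zeta(\Delta)$ per ordered pair across all $30$ pairs of distinct triples, but a \emph{single} commutator insertion already costs $\|[P_{i,A},P_{j,B}]\|_2\le 12\,\theta(A,B)\le 6\zeta(\Delta)$, so no argument that pays even one commutator per pair can reach the constant $36$. The point you are missing is structural: in the product $S_{i_1,j_1,k_1}S_{i_2,j_2,k_2}$ the factors $P_{i_1,A}$ and $P_{i_2,A}$ are already adjacent (the first $S$ ends with $P_{i_1,A}$ and the second begins with $P_{i_2,A}$) — there is no ``intervening $P_{j_1,B}$ factor'' to move through. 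Hence whenever $i_1\ne i_2$ the product is \emph{exactly} zero, which disposes of $24$ of the $30$ pairs at no cost. The remaining $6$ pairs have $i_1=i_2$ and then necessarily $j_1=k_2$, $k_1=j_2$ (your case ``$i_1=i_2$, $j_1=j_2$, $k_1\ne k_2$'' is empty, since the triples are permutations of $\{1,2,3\}$); for these one bounds $\|S_{i,j_1,k_1}S_{i,j_2,k_2}\|_2\le\|P_{j_1,B}P_{i,A}P_{j_2,B}\|_2\le\|[P_{i,A},P_{j_2,B}]\|_2$ using $P_{j_1,B}P_{j_2,B}=0$, i.e.\ one commutator each, and $\sum_{i\ne j}\|[P_{i,A},P_{j,B}]\|_2\le 24\sum_i\|P_{i,A}P_{i,B}\|_2=72\,\theta(A,B)\le 36\zeta(\Delta)$. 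Without the observation that the $i_1\ne i_2$ terms vanish identically, the constant $36$ is out of reach.
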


\begin{proof}
The fact that each $S_{i,j,k}$ is a positive contraction is immediate. By Proposition \ref{proposition: sum of cut down projections}, 
\begin{align*}
\left\| I-\sum_{\{i,j,k\}=\{1,2,3\}} S_{i,j,k}\right\|_2&\leq 159 \sum_{i=1}^3 (\|P_{i,A}P_{i,B}\|_2+\|P_{i,A}P_{i,C}\|_2+\|P_{i,B}P_{i,C}\|_2) \\
&=\frac{477}{2}\zeta(\Delta).
\end{align*}

For each choice of $\{i,j,k\}=\{1,2,3\}$, define $\mu_{A,B,ij}=\|[P_{i,A},P_{j,B}]\|_2$, $\mu_{B,C,jk}=\|[P_{j,B},P_{k,C}]\|_2$ and $\mu_{A,C,ik}=\|[P_{i,A},P_{k,C}]\|_2$. Then using the fact that each of $P_{i,A}$, $P_{j,B}$ and $P_{k,C}$ are projections,
\begin{align*}
S_{i,j,k}^2=(P_{i,A}P_{j,B}P_{k,C}P_{j,B}P_{i,A})^2 =&P_{i,A}P_{j,B}P_{k,C}P_{j,B}P_{i,A}P_{j,B}P_{k,C}P_{j,B}P_{i,A} \\
\approx_{\mu_{B,C,jk}}& P_{i,A}P_{j,B}P_{k,C}P_{j,B}P_{i,A}P_{j,B}P_{k,C}P_{i,A} \\
\approx_{\mu_{A,B,ij}}& P_{i,A}P_{j,B}P_{k,C}P_{j,B}P_{i,A}P_{k,C}P_{i,A} \\
\approx_{\mu_{A,C,ik}}& P_{i,A}P_{j,B}P_{k,C}P_{j,B}P_{k,C}P_{i,A} \\
\approx_{\mu_{B,C,jk}}& P_{i,A}P_{j,B}P_{k,C}P_{j,B}P_{i,A}=S_{i,j,k}.
\end{align*}

We have (by Proposition \ref{proposition: commutators of projections} with the PVMs $\{P_{1,A},P_{2,A},P_{3,A}\}$ and $\{P_{1,B},P_{2,B},P_{3,C}\}$), that
\[ \sum_{\substack{1 \leq i,j \leq 3 \\ i \neq j}} \mu_{A,B,i,j} \leq 12\sum_{i=1}^3 \|[P_{i,A},P_{i,B}]\|_2 \leq 24\sum_{i=1}^3 \|P_{i,A}P_{i,B}\|_2.\]
Similar statements hold for the $\mu_{B,C,j,k}$'s and $\mu_{A,C,i,k}$'s. Putting everything together, for each choice of $\{i,j,k\}=\{1,2,3\}$, we have $\|S_{i,j,k}^2-S_{i,j,k}\|_2 \leq \mu_{A,B,ij}+\mu_{A,C,ik}+2\mu_{B,C,jk}$, so that
\[ \sum_{\{i,j,k\}=\{1,2,3\}} \|S_{i,j,k}^2-S_{i,j,k}\|_2 \leq 48 \sum_{i=1}^3 (\|P_{i,A}P_{i,B}\|_2+\|P_{i,B}P_{i,C}\|_2+\|P_{i,A}P_{i,C}\|_2)=72\zeta(\Delta).\]

Next, we check the condition on the sums of $2$-norms of products of distinct $S_{i,j,k}$'s. Suppose that $\{i_1,j_1,k_1\}=\{i_2,j_2,k_2\}=\{1,2,3\}$ and that $(i_1,j_1,k_1) \neq (i_2,j_2,k_2)$. Since $i_1,j_1,k_1$ must all be distinct and $i_2,j_2,k_2$ must all be distinct, we either have $i_1 \neq i_2$, or $i_1=i_2$ and $j_1 \neq j_2$.
If $i_1 \neq i_2$, then we have
\[ S_{i_1,j_1,k_1}S_{i_2,j_2,k_2}=(P_{i_1,A}P_{j_1,B}P_{k_1,C}P_{j_1,B}P_{i_1,A})(P_{i_2,A}P_{j_2,B}P_{k_2,C}P_{j_2,B}P_{i_2,A})=0\]
since $P_{i_1,A}P_{i_2,A}=0$, by virtue of the strategy being synchronous and arising from a faithful tracial state. In the case when $i_1=i_2=i$ but $(i_1,j_1,k_1) \neq (i_2,j_2,k_2)$, then since $\{i_1,j_1,k_1\}=\{i_2,j_2,k_2\}=\{1,2,3\}$, we have $j_1=k_2$ and $k_1=j_2$. Then one can write
\begin{align*}
\|S_{i_1,j_1,k_1}S_{i_2,j_2,k_2}\|_2&=\|(P_{i,A}P_{j_1,B}P_{k_1,C}P_{j_1,B}P_{i,A})(P_{i,A}P_{j_2,B}P_{k_2,C}P_{j_2,B}P_{i,A})\|_2 \\
&\leq \|P_{j_1,B}P_{i,A}P_{j_2,B}\|_2 \\
&\leq \|P_{j_1,B}P_{j_2,B}P_{i,A}\|_2+\|[P_{i,A},P_{j_2,B}]\|_2 \\
&=\|[P_{i,A},P_{j_2,B}]\|_2.
\end{align*}
Since there are only $6$ choices of $\{i_1,j_1,k_1\}=\{i_2,j_2,k_2\}=\{1,2,3\}$ with $i_1=i_2$ but $(j_1,k_1) \neq (j_2,k_2)$, one obtains
\[ \sum_{\substack{\{i_1,j_1,k_1\}=\{i_2,j_2,k_2\}=\{1,2,3\} \\ (i_1,j_1,k_1) \neq (i_2,j_2,k_2)}} \|S_{i_1,j_1,k_1}S_{i_2,j_2,k_2}\|_2 \leq \sum_{\substack{1 \leq i,j \leq 3 \\ i \neq j}} \|[P_{i,A},P_{j,B}]\|_2 \leq 24 \sum_{i=1}^3 \|P_{i,A}P_{i,B}\|_2,\]
yielding $\frac{3}{2}(24\zeta(\Delta))=36\zeta(\Delta)$ as an upper bound.
\end{proof}

One can show that each projection corresponding to the vertices $A,B,C$ approximately commute with all other projections in the synchronous strategy given; however, we don't need the full generality of this fact--we need only show that $S_{i,j,k}$ approximately commutes with $P_{i,\what{v}(a,x)}$ for each $\{i,j,k\}=\{1,2,3\}$ and $1 \leq a \leq m$ and $1 \leq x \leq n$. We will split this result into three lemmas, based on how $\widehat{v}(a,x)$ is defined, which is based on the value of $a$.

We first record the following fact, which is immediate when considering the fact that $S_{i,j,k}:=P_{i,A}P_{j,B}P_{k,C}P_{j,B}P_{i,A}$.

\begin{proposition}
\label{proposition: generic commutator with S_{i,j,k}}
For each $X \in \cM$, we have 
\[\|[X,S_{i,j,k}]\|_2 \leq 2\|[X,P_{i,A}]\|_2+2\|[X,P_{j,B}]\|_2+\|[X,P_{k,C}]\|_2.\]
\end{proposition}

We will use Proposition \ref{proposition: generic commutator with S_{i,j,k}} frequently in this section.

\begin{lemma}
For each $\{i,j,k\}=\{1,2,3\}$ and $1 \leq x \leq n$, we have
\[ \|[P_{i,\what{v}(1,x)},S_{i,j,k}]\|_2 \leq 54(\zeta(\Delta))^{\frac{1}{2}}+48(\zeta(R_{1,(1,x)}))^{\frac{1}{2}}+32\xi(T_{1,x})+36\theta(\what{v}(1,x),B).\]
\end{lemma}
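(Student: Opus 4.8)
By Proposition \ref{proposition: generic commutator with S_{i,j,k}}, it suffices to bound the three commutators $\|[P_{i,\what{v}(1,x)},P_{i,A}]\|_2$, $\|[P_{i,\what{v}(1,x)},P_{j,B}]\|_2$ and $\|[P_{i,\what{v}(1,x)},P_{k,C}]\|_2$ separately, since $\what{v}(1,x)=v(1,1,1,x)$ by construction. The key structural facts to exploit are: (i) $\what{v}(1,x)$ and $B$ lie in the same triangle $R_{1,(1,x)}=\{v(1,1,1,x),v(1,2,1,x),v(1,3,1,x)\}=\{\what{v}(1,x),B,v(1,3,1,x)\}$, so the pair $(\what{v}(1,x),B)$ is an edge and we can use Proposition \ref{proposition: restatement for notation}(1) to get $\|[P_{i,\what{v}(1,x)},P_{j,B}]\|_2 \leq 12\theta(\what{v}(1,x),B)$ for $i\neq j$; (ii) $\what{v}(1,x)$ and $A$ are non-adjacent vertices in the triangular prism $T_{1,x}$ (whose two triangles are $S_{1,x}=\{s(1,1,x),A,s(2,1,x)\}$ and $R_{1,(1,x)}$, noting $s(1,1,x)=v(1,1,1,x)=\what{v}(1,x)$ and $t(2,1,x)=A$, with $s(1,1,x)\sim t(1,1,x)$ but $s(1,1,x)\not\sim t(2,1,x)=A$); and (iii) $\what{v}(1,x)$ and $C$ are non-adjacent — indeed $C$ is adjacent only to $v(2,1,\alpha,x)$ among the $R_{\alpha,x}$ vertices (relation \eqref{va adjacent to C}), and $\what{v}(1,x)=v(1,1,1,x)$ is a different vertex — and this non-adjacency is again realized inside $T_{1,x}$ since $C=v(1,2,1,x)\in R_{1,(1,x)}$ is a vertex of that prism distinct from and non-adjacent to $s(1,1,x)=\what{v}(1,x)$.

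The main steps in order: First, bound $\|[P_{i,\what{v}(1,x)},P_{j,B}]\|_2$. If $i\neq j$, apply Proposition \ref{proposition: restatement for notation}(1) to the edge $(\what{v}(1,x),B)$ to get a bound of $12\theta(\what{v}(1,x),B)$; if $i=j$ the commutator vanishes (any operator commutes with itself is the wrong framing — rather, we only ever need the case where the index on $S_{i,j,k}$ matches, so in Proposition \ref{proposition: generic commutator with S_{i,j,k}} the relevant term is $2\|[P_{i,\what{v}(1,x)},P_{j,B}]\|_2$ with $i\neq j$ since $i\neq j$ in any decomposition $\{i,j,k\}=\{1,2,3\}$). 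This contributes $24\theta(\what{v}(1,x),B)$ to the total after the factor of $2$. Second, bound $\|[P_{i,\what{v}(1,x)},P_{i,A}]\|_2$: here $\what{v}(1,x)$ and $A$ are non-adjacent vertices of the prism $T_{1,x}$, so Proposition \ref{proposition: restatement for notation}(3) in the case $i=j$ gives $6\zeta(S_{1,x})^{1/2}+6\zeta(R_{1,(1,x)})^{1/2}+4\xi(T_{1,x})$. Third, bound $\|[P_{i,\what{v}(1,x)},P_{k,C}]\|_2$ with $i\neq k$: again $\what{v}(1,x)$ and $C$ are non-adjacent in $T_{1,x}$, so Proposition \ref{proposition: restatement for notation}(3) in the case $i\neq k$ gives $36\zeta(S_{1,x})^{1/2}+36\zeta(R_{1,(1,x)})^{1/2}+24\xi(T_{1,x})$. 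Finally, assemble via Proposition \ref{proposition: generic commutator with S_{i,j,k}}: the total is
\[
2\bigl(6\zeta(S_{1,x})^{1/2}+6\zeta(R_{1,(1,x)})^{1/2}+4\xi(T_{1,x})\bigr)
+2\bigl(12\theta(\what{v}(1,x),B)\bigr)
+\bigl(36\zeta(S_{1,x})^{1/2}+36\zeta(R_{1,(1,x)})^{1/2}+24\xi(T_{1,x})\bigr),
\]
which equals $48\zeta(S_{1,x})^{1/2}+48\zeta(R_{1,(1,x)})^{1/2}+32\xi(T_{1,x})+24\theta(\what{v}(1,x),B)$. To match the stated bound, one then observes that $S_{1,x}$ is a triangle inside $G_\lambda$ with vertex set $\{s(1,1,x),A,s(2,1,x)\}$ where $s(2,1,x)=B$, so every edge of $S_{1,x}$ has both endpoints among $\{A,B,s(1,1,x)\}$; one bounds $\zeta(S_{1,x})^{1/2}$ crudely by a combination of $\zeta(\Delta)^{1/2}$ and $\theta(\what{v}(1,x),B)$-type terms (using \eqref{zeta estimate triangle} and that the edges $AB$, $A s(1,1,x)$, $B s(1,1,x)=B\what{v}(1,x)$ contribute), and absorbs the constants to reach $54\zeta(\Delta)^{1/2}+48\zeta(R_{1,(1,x)})^{1/2}+32\xi(T_{1,x})+36\theta(\what{v}(1,x),B)$.

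The step I expect to be the main obstacle is the bookkeeping in the final assembly: specifically, rewriting $\zeta(S_{1,x})^{1/2}$ in terms of the quantities appearing in the statement ($\zeta(\Delta)$ and $\theta(\what{v}(1,x),B)$). The triangle $S_{1,x}$ is not literally $\Delta$, but shares the edge $AB$ with it, and its remaining two edges involve the vertex $s(1,1,x)=\what{v}(1,x)$; one must track which $\|P_{\cdot}P_{\cdot}\|_2^{1/2}$ summands from $\zeta(S_{1,x})^{1/2}$ can be dominated by $\zeta(\Delta)^{1/2}$ versus which need $\theta(\what{v}(1,x),B)$ or a triangular-prism quantity, and then verify that the constants $54, 48, 32, 36$ are indeed achieved (as opposed to something slightly larger, which would require sharpening an earlier estimate or adjusting the claimed constants). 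Everything else is a direct citation of Propositions \ref{proposition: restatement for notation} and \ref{proposition: generic commutator with S_{i,j,k}} together with the explicit adjacency data from Section \ref{section: graph definition}.
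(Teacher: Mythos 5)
Your overall decomposition via Proposition \ref{proposition: generic commutator with S_{i,j,k}}, and your bounds for the $A$- and $B$-commutators, follow the paper's route, but your treatment of the $C$-commutator rests on a false adjacency claim and is a genuine gap. You assert $C=v(1,2,1,x)$ and conclude that $C$ is a non-adjacent vertex of the prism $T_{1,x}$, so that Proposition \ref{proposition: restatement for notation}(3) applies to the pair $(\what{v}(1,x),C)$. But the gluing in Section \ref{section: graph definition} identifies $v(1,2,\alpha,x)$ with $B$, not $C$; the vertex $C$ does not belong to $T_{1,x}$ at all (its only adjacency into $R_{\alpha,x}$ is $C\sim v(2,1,\alpha,x)$), so part (3) of that proposition cannot be invoked for this pair. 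The paper instead writes $P_{k,C}=(1-P_{k,A}-P_{k,B})+\bigl(\sum_{\nu\in V(\Delta)}P_{k,\nu}-1\bigr)$, giving
\[
\|[P_{i,\what{v}(1,x)},P_{k,C}]\|_2\le 2\eta(\Delta)+\|[P_{i,\what{v}(1,x)},P_{k,A}]\|_2+\|[P_{i,\what{v}(1,x)},P_{k,B}]\|_2,
\]
with $2\eta(\Delta)\le 6\zeta(\Delta)^{\frac{1}{2}}$, the $A$-term bounded by $36\zeta(\Delta)^{\frac{1}{2}}+36\zeta(R_{1,(1,x)})^{\frac{1}{2}}+24\xi(T_{1,x})$ via the prism (the $i\neq k$ case), and the $B$-term by $12\theta(\what{v}(1,x),B)$ via adjacency. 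These are precisely the contributions missing from your tally: $54=12+36+6$ for the coefficient of $\zeta(\Delta)^{\frac{1}{2}}$ and $36=24+12$ for the coefficient of $\theta(\what{v}(1,x),B)$.

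Relatedly, your closing step --- absorbing $48\zeta(S_{1,x})^{\frac{1}{2}}$ into $\zeta(\Delta)^{\frac{1}{2}}$ and $\theta(\what{v}(1,x),B)$ terms --- is left unproved and cannot work as described: $\zeta(S_{1,x})$ is built from products of projections at the vertices of the triangle of $T_{1,x}$ containing $A$ (namely $t(1,1,x)$, $A$, $t(3,1,x)$), and the summands involving $t(1,1,x)$ and $t(3,1,x)$ are not dominated by any quantity attached to $\Delta$ or to the edge $(\what{v}(1,x),B)$. (The paper itself glosses over this by writing $\zeta(\Delta)$ where the prism proposition literally yields the $\zeta$ of that other triangle, but your proposed repair does not close that gap.) With the paper's substitution for $P_{k,C}$ no such absorption is needed, and the constants $54$, $48$, $32$, $36$ come out exactly.
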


\begin{proof}
Note that $\widehat{v}(1,x)=v(1,1,1,x)$. Since $A$ is in a triangular prism with $v(1,1,1,x)$ and these two vertices are non-adjacent, by Proposition \ref{proposition: restatement for notation}, we have
\begin{equation}
\|[P_{i,A},P_{i,v(1,1,1,x)}]\|_2\leq 6 (\zeta(\Delta))^{\frac{1}{2}}+6(\zeta(R_{1,(1,x)}))^{\frac{1}{2}}+4\xi(T_{1,x}). \label{v1 commutator with A}
\end{equation}

Next, note that $v(1,1,1,x)$ and $B$ are adjacent. Using Proposition \ref{proposition: restatement for notation}, we obtain
\begin{equation}
\|[P_{i,\widehat{v}(1,x)},P_{j,B}]\|_2 \leq 12\theta(\widehat{v}(1,x),B). \label{v1 commutator with B}
\end{equation}
Lastly, we have
\begin{align*}
\| [P_{i,\widehat{v}(1,x)},P_{k,C}]\|_2 &\leq 2\left\| 1-\sum_{\nu \in V(\Delta)} P_{k,\nu}\right\|_2+\|[P_{i,\widehat{v}(1,x)},P_{k,A}]\|_2+\|[P_{i,\widehat{v}},P_{k,B}]\|_2 \\
&\leq 6\zeta(\Delta)^{\frac{1}{2}}+\|[P_{i,\widehat{v}(1,x)},P_{k,A}]\|_2+\|[P_{i,\widehat{v}(1,x)},P_{k,B}]\|_2.
\end{align*}
For the second and third terms, we apply Proposition \ref{proposition: restatement for notation} and obtain
\[
\|[P_{i,\widehat{v}(1,x)},P_{k,A}]\|_2 \leq 36\zeta(\Delta)^{\frac{1}{2}}+36\zeta(R_{1,(1,x)})^{\frac{1}{2}}+24\xi(T_{1,x})\]
and
\[\|[P_{i,\widehat{v}(1,x)},P_{k,B}]\|_2 \leq 12\theta(\widehat{v}(1,x),B).\]
Adding these two inequalities gives
\begin{equation}
\|[P_{i,\widehat{v}(1,x)},P_{k,C}]\|_2 \leq 42(\zeta(\Delta))^{\frac{1}{2}}+36(\zeta(R_{1,(1,x)}))^{\frac{1}{2}}+24\xi(T_{1,x})+12\theta(\what{v}(1,x),B). \label{v1 commutator with C}
\end{equation}
The estimate on $\|[P_{i,\what{v}(1,x)},S_{i,j,k}]\|_2$ follows by (\ref{v1 commutator with A})--(\ref{v1 commutator with C}) and an application of Proposition \ref{proposition: generic commutator with S_{i,j,k}}.
\end{proof}

\begin{lemma}
\label{lemma: commutator of va with sijk}
For each $\{i,j,k\}=\{1,2,3\}$, $1 \leq x \leq n$ and $2 \leq a \leq m-1$, we have
\begin{multline*} \|[S_{i,j,k},P_{i,\what{v}(a,x)}]\|_2 \leq 24(\zeta(\Delta))^{\frac{1}{2}}+72(\zeta(R_{1,(a-1),x}))^{\frac{1}{2}}+84(\zeta(R_{2,(a-1,x)}))^{\frac{1}{2}} \\
+56\xi(R_{1,2,(a-1,x)})+16\theta(\what{v}(a,x),C).
\end{multline*}
\end{lemma}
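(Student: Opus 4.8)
The plan is to bound $\|[S_{i,j,k},P_{i,\widehat{v}(a,x)}]\|_2$ via Proposition \ref{proposition: generic commutator with S_{i,j,k}}, which reduces the task to estimating the three commutators $\|[P_{i,\widehat{v}(a,x)},P_{i,A}]\|_2$, $\|[P_{i,\widehat{v}(a,x)},P_{j,B}]\|_2$ and $\|[P_{i,\widehat{v}(a,x)},P_{k,C}]\|_2$, weighted by $2$, $2$ and $1$ respectively. Here $\widehat{v}(a,x)=v(2,1,a-1,x)$ for $2\le a\le m-1$, so I need to understand how this vertex sits inside the graph $G_\lambda$: it lies in the copy $R_{a-1,x}$ of $K_3\times K_3$, in the triangle $R_{(a-1,x),1}=\{v(1,1,a-1,x),v(2,1,a-1,x),v(3,1,a-1,x)\}$ and in the triangle $R_{2,(a-1,x)}=\{v(2,1,a-1,x),v(2,2,a-1,x),v(2,3,a-1,x)\}$, it is adjacent to $C$ by relation (\ref{va adjacent to C}), and it sits in the triangular prism $R_{1,2,(a-1,x)}$ together with the triangle $R_{1,(a-1,x)}$, with $v(1,1,a-1,x)$ of that triangle adjacent to $A$.

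First I would handle the commutator with $P_{k,C}$: since $v(2,1,a-1,x)\sim C$, part (1) of Proposition \ref{proposition: restatement for notation} gives $\|[P_{i,\widehat{v}(a,x)},P_{k,C}]\|_2\le 12\theta(\widehat{v}(a,x),C)$ directly. Next, for the commutator with $P_{i,A}$: since $\widehat{v}(a,x)$ is not adjacent to $A$, I would like to use part (3) of Proposition \ref{proposition: restatement for notation}, but $\widehat{v}(a,x)$ is not itself a vertex of a triangular prism containing $A$. Instead I would first move from $P_{i,A}$ to $P_{i,v(1,1,a-1,x)}$ — these are non-adjacent vertices of the prism $T$ containing $A$, wait, more carefully: $v(1,1,a-1,x)$ lies in a prism with $A$ only when $a-1=1$; for general $a-1$ I should use that $v(1,1,\alpha,x)=s(1,\alpha,x)$ and $t(2,\alpha,x)=A$ are non-adjacent vertices of the prism $T_{\alpha,x}$ whose triangles are $R_{1,(\alpha,x)}$ and $S_{\alpha,x}=\{s(1,\alpha,x),A,s(2,\alpha,x)\}$. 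So part (3) gives $\|[P_{i,A},P_{i,v(1,1,a-1,x)}]\|_2\le 6\zeta(S_{a-1,x})^{1/2}+6\zeta(R_{1,(a-1,x)})^{1/2}+4\xi(T_{a-1,x})$; then I bound $\|[P_{i,v(1,1,a-1,x)},P_{i,\widehat{v}(a,x)}]\|_2$ using part (3) again on the prism $R_{1,2,(a-1,x)}$ (whose triangles are $R_{1,(a-1,x)}$ and $R_{2,(a-1,x)}$), since $v(1,1,a-1,x)$ and $v(2,1,a-1,x)$ are non-adjacent vertices of that prism, obtaining $6\zeta(R_{1,(a-1,x)})^{1/2}+6\zeta(R_{2,(a-1,x)})^{1/2}+4\xi(R_{1,2,(a-1,x)})$. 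Adding these via the triangle inequality for commutators gives the bound on $\|[P_{i,\widehat{v}(a,x)},P_{i,A}]\|_2$. The slight bookkeeping subtlety is which $\zeta$ and $\xi$ terms appear and with what constants; I expect the various $\zeta$ and $\xi$ quantities attached to the prism data of $R_{a-1,x}$ and $S_{a-1,x}$/$T_{a-1,x}$ to collect into the stated $\zeta(R_{1,(a-1),x})^{1/2}$, $\zeta(R_{2,(a-1,x)})^{1/2}$ and $\xi(R_{1,2,(a-1,x)})$ terms, absorbing the control-triangle contribution when needed via part (2).

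Finally, for the commutator with $P_{j,B}$: here I use that $B=v(1,2,a-1,x)$, so $B$ lies in $R_{(a-1,x),2}$ and in $R_{1,(a-1,x)}$; in particular $v(2,1,a-1,x)$ and $v(1,2,a-1,x)=B$ are non-adjacent (they differ in both coordinates), and they are vertices of the prism $R_{1,2,(a-1,x)}$ — or more precisely I should route through the appropriate prism in which they are non-adjacent, giving a bound of the form $36\zeta(R_{1,(a-1,x)})^{1/2}+36\zeta(R_{2,(a-1,x)})^{1/2}+24\xi(R_{1,2,(a-1,x)})$ from the $i\ne j$ case of part (3) (note $B$ carries color $j\ne i$, so the worse constants apply), after possibly using part (2) to handle a central-triangle correction. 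Assembling everything into Proposition \ref{proposition: generic commutator with S_{i,j,k}} with weights $2,2,1$ and summing the constants should land on the stated coefficients $24,72,84,56,16$. The main obstacle is the careful tracking of which vertices are genuinely non-adjacent and lie in a common triangular prism — choosing the right prism at each step so that part (3) applies — together with the accounting of constants so that the final linear combination collapses exactly to the claimed inequality; the underlying estimates are all routine applications of Propositions \ref{proposition: restatement for notation} and \ref{proposition: generic commutator with S_{i,j,k}} once the right adjacency bookkeeping is in place.
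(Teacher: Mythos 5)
Your handling of the commutators with $P_{j,B}$ and $P_{k,C}$ matches the paper's proof exactly (the $i\neq j$ prism estimate for $R_{1,2,(a-1,x)}$ with constants $36,36,24$ for $B$, and adjacency giving $12\theta(\what{v}(a,x),C)$ for $C$), but your treatment of $\|[P_{i,\what{v}(a,x)},P_{i,A}]\|_2$ has a genuine gap. First, a concrete error: $v(1,1,a-1,x)$ and $v(2,1,a-1,x)$ are \emph{adjacent} in $G_{\lambda}$ (they agree in exactly one coordinate, the column), so they are not a non-adjacent pair of the prism $R_{1,2,(a-1,x)}$ and part (3) of Proposition \ref{proposition: restatement for notation} does not apply to them; they are in fact a matched rung pair of that prism. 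Second, and more seriously, even after repairing that step (adjacency gives the cheap bound $2\theta(v(1,1,a-1,x),v(2,1,a-1,x)) \le 2\xi(R_{1,2,(a-1,x)})$), your first step --- commuting $P_{i,A}$ past $P_{i,v(1,1,a-1,x)}$ through the prism $T_{a-1,x}$ --- necessarily introduces the terms $\zeta(S_{a-1,x})^{\frac{1}{2}}$ and $\xi(T_{a-1,x})$. These are sums of $\|P_{1,\mu}P_{1,\nu}\|_2$ over edges disjoint from those entering $\zeta(R_{1,(a-1,x)})$, $\zeta(R_{2,(a-1,x)})$, $\xi(R_{1,2,(a-1,x)})$ and $\zeta(\Delta)$, so they cannot be ``collected into'' the stated right-hand side as you hope; your route proves a different inequality, not the one claimed.

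The step you are missing is the paper's device for this commutator: since $A$ and $\what{v}(a,x)$ are neither adjacent nor in a common prism, one replaces $P_{i,A}$ by $1-P_{i,B}-P_{i,C}$ inside the commutator at cost $2\eta(\Delta)\le 6\zeta(\Delta)^{\frac{1}{2}}$, and then bounds $\|[P_{i,\what{v}(a,x)},P_{i,B}]\|_2$ by the same-color case of the prism estimate for $R_{1,2,(a-1,x)}$ and $\|[P_{i,\what{v}(a,x)},P_{i,C}]\|_2$ by $2\theta(\what{v}(a,x),C)$ using adjacency. This routes everything through quantities already appearing in the statement and, after weighting the $A$, $B$, $C$ contributions by $2,2,1$ via Proposition \ref{proposition: generic commutator with S_{i,j,k}}, produces exactly the coefficients $24,72,84,56,16$.
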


\begin{proof}
Since $2 \leq a \leq m-1$, we have $\widehat{v}(a,x)=v(2,1,a-1,x)$. This vertex is adjacent to $C$ by (\ref{va adjacent to C}), so by Proposition \ref{proposition: restatement for notation}, 
\begin{equation} \|[P_{i,\widehat{v}(a,x)},P_{k,C}]\|_2 \leq 12\theta(\widehat{v}(a,x),C). \label{va commutator with C}
\end{equation}
Next, we note that the triangles $R_{1,(a-1,x)}$ and $R_{2,(a-1,x)}$ form a triangular prism in $G_{\lambda}$, with $\widehat{v}(a,x)$ being a vertex in $R_{2,(a-1,x)}$ and $B$ being a vertex in $R_{1,(a-1,x)}$. Since $\widehat{v}(a,x)$ and $B$ are not adjacent, by Proposition \ref{proposition: restatement for notation} for the triangular prism $R_{1,2,(a-1,x)}$, we have
\begin{equation}
\|[P_{i,\widehat{v}(a,x)},P_{j,B}]\|_2 \leq 36(\zeta(R_{1,(a-1,x)}))^{\frac{1}{2}}+36(\zeta(R_{2,(a-1),x}))^{\frac{1}{2}}+24\xi(R_{1,2,(a-1,x)}). \label{va commutator with B}
\end{equation}

Lastly, we note that $A$ and $\widehat{v}(a,x)$ are not adjacent and do not belong to a triangular prism in $G_{\lambda}$, so we have
\begin{equation}
\|[P_{i,\widehat{v}(a,x)},P_{i,A}]\|_2\leq 2\left\| 1-\sum_{\nu \in V(\Delta)} P_{1,\nu}\right\|_2+\|[P_{i,\widehat{v}(a,x)},P_{i,B}]\|_2+\|[P_{i,\widehat{v}(a,x)},P_{i,C}]\|_2. \label{almost va commutator with A}
\end{equation}
The first term in (\ref{almost va commutator with A}) is exactly $2\eta(\Delta) \leq 6\zeta(\Delta)^{\frac{1}{2}}$ by Proposition \ref{proposition: restatement for notation}. The second term in (\ref{almost va commutator with A}), by Proposition \ref{proposition: restatement for notation} for the triangular prism $R_{1,2,(a-1,x)}$, is
\[ \|[P_{i,\widehat{v}(a,x)},P_{i,B}]\|_2 \leq 6\zeta(R_{2,(a-1,x)})^{\frac{1}{2}}+6\zeta(\Delta)^{\frac{1}{2}}+4\xi(R_{1,2,(a-1,x)}).\]
Since $\widehat{v}(a,x)$ and $C$ are adjacent, the third term in (\ref{almost va commutator with A}) is at most $2\theta(\widehat{v}(a,x),C)$, so we obtain
\begin{equation}
\|[P_{i,\widehat{v}(a,x)},P_{i,A}]\|_2 \leq 12(\zeta(\Delta))^{\frac{1}{2}}+6(\zeta(R_{2,(a-1,x)}))^{\frac{1}{2}}+4\xi(R_{1,2,(a-1,x)})+2\theta(\what{v}(a,x),C). \label{va commutator with A}
\end{equation}

The estimate on $\|[S_{i,j,k},P_{i,\what{v}(a,x)}]\|_2$ follows from Proposition \ref{proposition: generic commutator with S_{i,j,k}}, using inequalities (\ref{va commutator with C}), (\ref{va commutator with B}) and (\ref{va commutator with A}).
\end{proof}

\begin{lemma}
For each $1 \leq x \leq n$ and $\{i,j,k\}=\{1,2,3\}$, 
\begin{multline*}
\|[P_{i,\what{v}(m,x)},S_{i,j,k}]\|_2\leq 66(\zeta(\Delta))^{\frac{1}{2}}+66(\zeta(R_{2,(m-2,x)}))^{\frac{1}{2}}+84(\zeta(R_{1,(m-2,x)}))^{\frac{1}{2}} \\
+32\xi(R_{1,2,(m-2,x)})+4\theta(\widehat{v}(m-1,x),C)+18(\zeta(R_{(m-2,x),3}))^{\frac{1}{2}}\\
+32\xi(T_{m-2,x})+12\theta(v(2,1,m-2,x),C)+2\theta(v(3,3,m-2,x)) \\
+24\theta(\what{v}(m,x),B).
\end{multline*}
\end{lemma}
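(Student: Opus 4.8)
The plan is to argue exactly as in the two preceding lemmas: first reduce, via Proposition~\ref{proposition: generic commutator with S_{i,j,k}}, to commutator estimates involving $P_{i,A}$, $P_{j,B}$, $P_{k,C}$, and then chain the estimates of Proposition~\ref{proposition: restatement for notation} along a short path of vertices inside the copy $R_{m-2,x}$ of $K_3\times K_3$. Recall that $\what{v}(m,x)=v(2,2,m-2,x)$; inside $R_{m-2,x}$ this vertex lies in the row triangle $R_{2,(m-2,x)}$ and the column triangle $R_{(m-2,x),2}$, it is adjacent to $B=v(1,2,m-2,x)$, and it is adjacent to $\what{v}(m-1,x)=v(2,1,m-2,x)$, which by \eqref{va adjacent to C} is adjacent to $C$. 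By Proposition~\ref{proposition: generic commutator with S_{i,j,k}} it suffices to bound $\|[P_{i,\what{v}(m,x)},P_{i,A}]\|_2$, $\|[P_{i,\what{v}(m,x)},P_{j,B}]\|_2$ and $\|[P_{i,\what{v}(m,x)},P_{k,C}]\|_2$ and to combine these with the weights $2,2,1$.

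The $B$-commutator is immediate: since $\what{v}(m,x)\sim B$, Proposition~\ref{proposition: restatement for notation}(1) bounds it by $12\theta(\what{v}(m,x),B)$, yielding the $24\theta(\what{v}(m,x),B)$ in the statement. For the $A$- and $C$-commutators I would first apply Proposition~\ref{proposition: restatement for notation}(2) to the row triangle $R_{2,(m-2,x)}$, replacing $P_{\bullet,\what{v}(m,x)}$ by $1-P_{\bullet,v(2,1,m-2,x)}-P_{\bullet,v(2,3,m-2,x)}$ at cost $2\eta(R_{2,(m-2,x)})\le 6\zeta(R_{2,(m-2,x)})^{1/2}$. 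The contributions of $v(2,1,m-2,x)$ are killed directly by its edge to $C$ (Proposition~\ref{proposition: restatement for notation}(1)), producing the terms $\theta(\what{v}(m-1,x),C)$ and $\theta(v(2,1,m-2,x),C)$ (the same vertex, reached by two different reductions). The contributions of $v(2,3,m-2,x)$, which is adjacent to none of $A,B,C$, are the delicate ones: I would push $P_{\bullet,C}$ and $P_{\bullet,A}$ through the control triangle $\Delta$ (each at cost $2\eta(\Delta)\le 6\zeta(\Delta)^{1/2}$), absorb the resulting commutators with $P_{\bullet,B}$ through the prism $R_{1,2,(m-2,x)}$ via Proposition~\ref{proposition: restatement for notation}(3) (which produces $\zeta(R_{1,(m-2,x)})^{1/2}$, $\zeta(R_{2,(m-2,x)})^{1/2}$ and $\xi(R_{1,2,(m-2,x)})$), and handle the commutators with $P_{\bullet,A}$ by expanding $P_{\bullet,v(2,3,m-2,x)}=1-P_{\bullet,v(1,3,m-2,x)}-P_{\bullet,v(3,3,m-2,x)}$ in the column triangle $R_{(m-2,x),3}$ (cost $\eta(R_{(m-2,x),3})\le 3\zeta(R_{(m-2,x),3})^{1/2}$). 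The $v(3,3,m-2,x)$ part is then finished by the edge $A\sim v(3,3,m-2,x)$ (the term $\theta(v(3,3,m-2,x))$); the $v(1,3,m-2,x)$ part is the step requiring the most care, and I would carry it out by working inside the triangle $R_{1,(m-2,x)}$ and the prism $T_{m-2,x}$ — in particular its rung $B\sim A$, which supplies the term $\xi(T_{m-2,x})$ — exactly as $\what{v}(1,x)$ is handled in the first lemma of this section, so that no coloring error beyond the triangles and prisms already listed is introduced.

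It then remains to reassemble everything with the triangle inequality, bound each internal-edge quantity $\|P_{1,\mu}P_{1,\nu}\|_2$ by $\zeta(R)^{1/2}$ for the triangle $R$ containing that edge, count the occurrences of each $\zeta$, $\xi$ and $\theta$, and apply the weights $2,2,1$ of Proposition~\ref{proposition: generic commutator with S_{i,j,k}} to land on the stated coefficients. I expect the main obstacle to be the length of this chain rather than any single step. Unlike $\what{v}(a,x)=v(2,1,a-1,x)$ for $2\le a\le m-1$, which is adjacent to $C$, and unlike $\what{v}(1,x)=v(1,1,1,x)$, which shares the prism $T_{1,x}$ with $A$, the vertex $\what{v}(m,x)=v(2,2,m-2,x)$ is neither adjacent to, nor contained in a triangular prism together with, $A$ or $C$; so each of those two commutators must be transported through two or three intermediate vertices before a single step of Proposition~\ref{proposition: restatement for notation} becomes applicable, the constants compound multiplicatively along the way, and the careful bookkeeping of these constants — together with the (easy but necessary) check that every intermediate commutator is of a type the proposition covers — is the bulk of the work.
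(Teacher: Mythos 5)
Your proposal is correct and follows essentially the same route as the paper's proof: Proposition~\ref{proposition: generic commutator with S_{i,j,k}} with weights $2,2,1$, the adjacency $\what{v}(m,x)\sim B$, expansion in the row triangle $R_{2,(m-2,x)}$, treatment of $v(2,1,m-2,x)=\what{v}(m-1,x)$ exactly as in Lemma~\ref{lemma: commutator of va with sijk}, and treatment of $v(2,3,m-2,x)$ via the column triangle $R_{(m-2,x),3}$, the edge $A\sim v(3,3,m-2,x)$, the prisms $T_{m-2,x}$ and $R_{1,2,(m-2,x)}$, and transport through $\Delta$. The only loose phrase is that the contribution of $v(2,1,m-2,x)$ is ``killed directly by its edge to $C$'': for the commutator with $P_{i,A}$ that term must first be routed through $\Delta$ (and its $B$-part through the prism $R_{1,2,(m-2,x)}$) before the edge to $C$ does any work, but your closing paragraph makes clear you intend exactly this transport.
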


\begin{proof}
We first note that $\widehat{v}(m,x)=v(2,2,m-2,x)$, so that $\widehat{v}(m,x) \sim B$. By Proposition \ref{proposition: restatement for notation},
\begin{equation} \|[P_{i,\widehat{v}(m,x)},P_{j,B}]\|_2 \leq 12\theta(\widehat{v}(m,x),B). \label{vm commutator with B}
\end{equation}
Next, using the fact that $\eta(R_{2,(m-2,x)})\leq 3 (\zeta(R_{2,(m-2,x)}))^{\frac{1}{2}}$ we have
\begin{equation}
\|[P_{i,\widehat{v}(m,x)},P_{i,A}]\|_2\leq  6(\zeta(R_{2,(m-2,x)}))^{\frac{1}{2}}+\|[P_{i,v(2,1,m-2,x)},P_{i,A}]\|_2+\|[P_{i,v(2,3,m-2,x)},P_{i,A}]\|_2. \label{almost vm commutator with A}
\end{equation}
Since $v(2,1,m-2,x)=\widehat{v}(m-1,x)$, by the proof of Lemma \ref{lemma: commutator of va with sijk}, the second quantity in (\ref{almost vm commutator with A}) is at most $12(\zeta(\Delta))^{\frac{1}{2}}+6(\zeta(R_{2,(a-1,x)}))^{\frac{1}{2}}+4\xi(R_{1,2,(a-1,x)})+2\theta(\widehat{v}(m-1,x),C)$. For the commutator $[P_{i,v(2,3,m-2,x)},P_{i,A}]$, we note that
\[ \|[P_{i,v(2,3,m-2,x)},P_{i,A}]\|_2 \leq 2\eta(R_{(m-2,x),3})+4\theta(v(3,3,m-2,x),A)+\|[P_{i,v(1,3,m-2,x)},P_{i,A}]\|_2.\]
Since $v(1,3,m-2,x)$ and $A$ belong to the triangular prism $T_{m-2,x}$ and are non-adjacent, by Proposition \ref{proposition: restatement for notation} we have
\[ \|[P_{i,v(1,3,m-2,x)},P_{i,A}]\|_2 \leq 6(\zeta(S_{m-2,x}))^{\frac{1}{2}}+6(\zeta(R_{1,(m-2,x)}))^{\frac{1}{2}}+4\xi(T_{m-2,x}).\]
Putting these facts together, we obtain
\begin{multline*}
\|[P_{i,\widehat{v}(m,x)},P_{i,A}]\|_2 \leq 12(\zeta(R_{2,(m-2,x)}))^{\frac{1}{2}}+12(\zeta(\Delta))^{\frac{1}{2}}+6(\zeta(R_{1,(m-2,x)}))^{\frac{1}{2}} \\
+4\xi(R_{1,2,(m-2,x)})+2\theta(\widehat{v}(m-1,x),C) \\
+6(\zeta(R_{(m-2,x),3}))^{\frac{1}{2}}+4\xi(T_{m-2,x}).
\end{multline*}

For the last commutator $[P_{i,\what{v}(m,x)},P_{k,C}]$, we will show that $P_{k,C}$ almost commutes with each of $P_{i,v(2,1,m-2,x)}$ and $P_{i,v(2,3,m-2,x)}$. The first is simply by adjacency--that is to say, $\|[P_{i,v(2,1,m-2,x)},P_{k,C}]\|_2 \leq 12\theta(v(2,1,m-2,x),C)$, by Proposition \ref{proposition: restatement for notation}. For determining a bound on $\|[P_{i,v(2,3,m-2,x)},P_{k,C}]\|_2$, we first determine bounds on commutators when replacing $P_{k,C}$ with $P_{k,A}$ and $P_{k,B}$, respectively. Note that $B$ and $v(2,3,m-2,x)$ are non-adjacent vertices in the triangular prism $R_{1,2,(m-2,x)}$, so by Proposition \ref{proposition: restatement for notation}, one has
\[ \|[P_{i,v(2,3,m-2,x)},P_{k,B}]\|_2 \leq 36(\zeta(R_{1,(m-2,x)}))^{\frac{1}{2}}+36(\zeta(R_{2,(m-2,x)}))^{\frac{1}{2}}+24\xi(R_{1,2,(m-2,x)}).\]
By the exact same estimate, since $A$ and $v(1,3,m-2,x)$ are non-adjacent vertices in the triangular prism $T_{m-2,x}$, the projections $P_{k,A}$ and $P_{i,v(1,3,m-2,x)}$ satisfy
\[ \|[P_{i,v(1,3,m-2,x)},P_{k,A}]\|_2 \leq 36(\zeta(R_{1,(m-2,x)}))^{\frac{1}{2}}+36(\zeta(\Delta))^{\frac{1}{2}}+24\xi(T_{m-2,x}).\]
Using the fact that $v(3,3,m-2,x) \sim A$, it follows that
\begin{multline*}
\|[P_{i,v(2,3,m-2,x)},P_{k,A}]\|_2 \leq 6(\zeta(R_{(m-2,x),3}))^{\frac{1}{2}}+36(\zeta(R_{1,(m-2,x)}))^{\frac{1}{2}}+36(\zeta(\Delta))^{\frac{1}{2}}+24\xi(T_{m-2,x}) \\
+2\theta(v(3,3,m-2,x),A).
\end{multline*}
Therefore, one obtains
\begin{align*}
\|[P_{i,v(2,3,m-2,x)},P_{k,C}]\|_2&\leq 6(\zeta(\Delta)))^{\frac{1}{2}}+\|[P_{i,v(2,3,m-2,x)},P_{k,A}]\|_2+\|[P_{i,v(2,3,m-2,x)},P_{k,B}]\|_2 \\
&\leq 42(\zeta(\Delta))^{\frac{1}{2}}+6(\zeta(R_{(m-2,x),3}))^{\frac{1}{2}}+72(\zeta(R_{1,(m-2,x)}))^{\frac{1}{2}}+24\xi(T_{m-2,x}) \\
&\,\,\,\,\,\,\,+2\theta(v(3,3,m-2,x),A)+36(\zeta(R_{2,(m-2,x)}))^{\frac{1}{2}}+24\xi(R_{1,2,(m-2,x)}).
\end{align*}
Finally, it follows that
\begin{align*}
\|[P_{i,\what{v}(m,x)},P_{k,C}]\|_2&\leq 2\eta(R_{2,(m-2,x)})+\|[P_{i,v(2,1,m-2,x)},P_{k,C}]\|_2+\|[P_{i,v(2,3,m-2,x)},P_{k,C}]\|_2 \\
&\leq 12\theta(v(2,1,m-2,x),C)+42(\zeta(\Delta))^{\frac{1}{2}}+6(\zeta(R_{(m-2,x),3}))^{\frac{1}{2}} \\
&\,\,\,\,\,\,\,+72(\zeta(R_{1,(m-2,x)}))^{\frac{1}{2}}+24\xi(T_{m-2,x})+2\theta(v(3,3,m-2,x)) \\
&\,\,\,\,\,\,\,+42(\zeta(R_{2,(m-2,x)}))^{\frac{1}{2}}+24\xi(R_{1,2,(m-2,x)}).
\end{align*}
The estimate follows by Proposition \ref{proposition: generic commutator with S_{i,j,k}}.
\end{proof}

The following result follows from the previous three lemmas.

\begin{lemma}
\label{lemma: Pi's with compression almost work}
There is a constant $\beta>0$, independent of $n$, $m$ and $|\lambda^{-1}(\{0\})|$, such that the following hold:
\begin{enumerate}
\item For all $\{i,j,k\}=\{1,2,3\}$ and $1 \leq x \leq n$, we have
\[ \sum_{a=1}^m \|[P_{i,\what{v}(a,x)},S_{i,j,k}]\|_2 \leq \beta |E(G_{\lambda})|\ee^{\frac{1}{4}}; \text{ and}\]
\item $\displaystyle\sum_{(a,b,x,y) \in \lambda^{-1}(\{0\})} \|S_{i,j,k}P_{i,\what{v}(a,x)}S_{i,j,k}P_{i,\what{v}(b,y)}S_{i,j,k}\|_2 \leq (\beta |\lambda^{-1}(\{0\})|+\beta) |E(G_{\lambda})|\ee^{\frac{1}{4}}$.
\end{enumerate}
\end{lemma}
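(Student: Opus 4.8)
\emph{Proof plan.} Part (1) will be pure bookkeeping on top of the three preceding lemmas. Each of those lemmas writes $\|[P_{i,\what{v}(a,x)},S_{i,j,k}]\|_2$ as a fixed linear combination of terms of the form $\zeta(\Delta)^{\frac12}$, $\zeta(R)^{\frac12}$, $\xi(T)$ and $\theta(e)$, where $R$, $T$, $e$ are triangles, triangular prisms and edges of $G_{\lambda}$ each containing at least one of the vertices $\what{v}(\cdot,x)$, $v(\cdot,\cdot,\alpha,x)$, $s(\cdot,\alpha,x)$, $t(\cdot,\alpha,x)$ (the copies $R_{\alpha,x}$, the prisms $T_{\alpha,x}$ and $R_{1,2,(\alpha,x)}$, and a few edges such as $(\what{v}(a,x),C)$ or $(\what{v}(a,x),B)$). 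Summing over $a=1,\dots,m$, every term other than the occurrences of $\zeta(\Delta)^{\frac12}$ refers to such a subgraph, each edge of $G_{\lambda}$ being hit with multiplicity bounded by an absolute constant; since $\zeta(T)^{\frac12}\le\sum_{\mu\neq\nu\in V(T)}\|P_{1,\mu}P_{1,\nu}\|_2^{\frac12}$ by $(\ref{zeta estimate triangle})$ and $\xi(T)$, $\theta(e)$ are bounded by sums of terms $\|P_{1,e'}\|_2^{\frac12}$ (their arguments being $\le 1$), all of these contributions add up to at most $C_1\sum_{(\mu,\nu)\in E(G_{\lambda})}\|P_{1,\mu}P_{1,\nu}\|_2^{\frac12}\le 2C_1|E(G_{\lambda})|\ee^{\frac14}$ by $(\ref{main reverse estimate})$. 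The $\zeta(\Delta)^{\frac12}$ terms are the delicate point, since they appear with total coefficient $O(m)$; here I would instead use $\zeta(\Delta)^{\frac12}\le\sqrt6\,(2|E(G_{\lambda})|\ee/3)^{\frac14}$ — valid because $\|P_{1,\mu}P_{1,\nu}\|_2^2=p(1,1|\mu,\nu)\le\sum_{e'}p(1,1|e')\le 2|E(G_{\lambda})|\ee/3$ for every edge of $\Delta$ — together with the crude size estimate $|E(G_{\lambda})|\ge|\lambda^{-1}(\{0\})|\ge nm(m-1)\ge m^2/2$ from Remark \ref{remark: number of edges} and the fact that $\cG$ is synchronous, giving $m\le\sqrt2\,|E(G_{\lambda})|^{\frac12}$; hence $O(m)\cdot\zeta(\Delta)^{\frac12}\le C_2|E(G_{\lambda})|^{\frac34}\ee^{\frac14}\le C_2|E(G_{\lambda})|\ee^{\frac14}$. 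Adding the two contributions produces a constant $\beta$ independent of $n$, $m$ and $|\lambda^{-1}(\{0\})|$.

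For part (2) I would first reduce each summand. Since $S_{i,j,k}$ is a positive contraction, $\|S_{i,j,k}P_{i,\what{v}(a,x)}S_{i,j,k}P_{i,\what{v}(b,y)}S_{i,j,k}\|_2\le\|P_{i,\what{v}(a,x)}S_{i,j,k}P_{i,\what{v}(b,y)}\|_2$, and using the near-commutation of $S_{i,j,k}$ with $P_{i,\what{v}(b,y)}$ supplied by whichever of the three preceding lemmas applies (according as $b=1$, $2\le b\le m-1$, or $b=m$), this is at most $\|P_{i,\what{v}(a,x)}P_{i,\what{v}(b,y)}\|_2+\|[S_{i,j,k},P_{i,\what{v}(b,y)}]\|_2$. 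When $(a,b,x,y)\in\lambda^{-1}(\{0\})\setminus(\cE_{\lambda}\cup\cF_{\lambda})$ we have $\what{v}(a,x)\sim\what{v}(b,y)$ by $(\ref{non orthogonality gadget orthogonalities})$, so $\|P_{i,\what{v}(a,x)}P_{i,\what{v}(b,y)}\|_2=\theta(\what{v}(a,x),\what{v}(b,y))$ and this summand is $\le\theta(\what{v}(a,x),\what{v}(b,y))+\|[S_{i,j,k},P_{i,\what{v}(b,y)}]\|_2$.

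The remaining, and principal, case is $(a,b,x,y)\in\cE_{\lambda}\cup\cF_{\lambda}$, where $\what{v}(a,x)=q(1,1,a,b,x,y)$ and $\what{v}(b,y)=q(2,2,a,b,x,y)$ are the two fully opposite corners of a copy of $K_3\times K_3$, with $q(1,2,a,b,x,y)$ identified with $B$ (for $\cE_{\lambda}$) or $C$ (for $\cF_{\lambda}$) and $q(3,3,a,b,x,y)\sim A$. In the exact setting, well-definedness of $\rho_{ijk}$ from \cite{Ha24} is precisely the relation $e_{i,A}e_{j,B}e_{k,C}e_{i,q(1,1)}e_{i,q(2,2)}=0$, proved by propagating colour constraints around the rows and columns of the gadget (in the fibre $e_{i,A}e_{j,B}e_{k,C}$, colour $i$ at $q(1,1)$ forces colour $\neq i$ at $q(2,2)$). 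The plan is to make that derivation quantitative: applying Proposition \ref{proposition: restatement for notation} to the six row and column triangles of $Q_{a,b,x,y}$, to the triangular prisms formed by adjacent rows/columns and by the $T$-type prism through $q(3,3)$ and $A$, to the edge $q(3,3,a,b,x,y)\sim A$, and to the near-commutation of $S_{i,j,k}$ with the vertex projections of $Q_{a,b,x,y}$, should yield
\begin{multline*}
\|S_{i,j,k}P_{i,\what{v}(a,x)}S_{i,j,k}P_{i,\what{v}(b,y)}S_{i,j,k}\|_2\le C_3\Bigl(\zeta(\Delta)^{\frac12}+\sum_{R}\zeta(R)^{\frac12}+\sum_{T}\xi(T)+\sum_{e}\theta(e) \\
+\|[S_{i,j,k},P_{i,\what{v}(a,x)}]\|_2+\|[S_{i,j,k},P_{i,\what{v}(b,y)}]\|_2\Bigr),
\end{multline*}
with $R$ the row/column triangles of $Q_{a,b,x,y}$, $T$ and $e$ a bounded number of prisms and edges attached to $Q_{a,b,x,y}$, and $C_3$ an absolute constant.

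To finish, I would sum over all $(a,b,x,y)\in\lambda^{-1}(\{0\})$. By $(\ref{main reverse estimate})$ each of $\zeta(\Delta)^{\frac12}$, $\zeta(R)^{\frac12}$, $\xi(T)$, $\theta(e)$ is $\le 2|E(G_{\lambda})|\ee^{\frac14}$, and by part (1) (equivalently, by the three preceding lemmas used crudely) each commutator $\|[S_{i,j,k},P_{i,\what{v}(\cdot,\cdot)}]\|_2$ is $\le\beta|E(G_{\lambda})|\ee^{\frac14}$, so every summand is $\le C_4|E(G_{\lambda})|\ee^{\frac14}$; as there are $|\lambda^{-1}(\{0\})|$ of them this gives $(\beta'|\lambda^{-1}(\{0\})|+\beta')|E(G_{\lambda})|\ee^{\frac14}$ after enlarging the constant. (One could sharpen this by noting that the gadgets $Q_{a,b,x,y}$ and the edges $(\what{v}(a,x),\what{v}(b,y))$ have pairwise disjoint edge supports, so that the gadget-specific terms already sum to $O(|E(G_{\lambda})|\ee^{\frac14})$, but that refinement is not needed.) The one genuinely hard step is the gadget estimate displayed above: turning the exact orthogonality relation forced by the $K_3\times K_3$ gadget into a clean bound that is linear in the $\zeta$, $\xi$, $\theta$ quantities and whose combinatorial constants do not depend on $n$, $m$ or $|\lambda^{-1}(\{0\})|$.
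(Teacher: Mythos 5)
Your overall route is the same as the paper's: reduce everything to the quantities $\zeta$, $\xi$, $\theta$, control how often each edge of $G_{\lambda}$ is hit, and invoke \eqref{main reverse estimate}. Part (1) is complete, and your treatment of the $\zeta(\Delta)^{\frac{1}{2}}$ terms is in fact more careful than the paper's: summing the three preceding commutator lemmas over $a=1,\dots,m$ really does put a coefficient of order $m$ on $\zeta(\Delta)^{\frac{1}{2}}$, so the edges of $\Delta$ are \emph{not} hit boundedly often, and your patch --- bounding a single $\|P_{1,\mu}P_{1,\nu}\|_2^{\frac{1}{2}}$ by $(2|E(G_{\lambda})|\ee/3)^{\frac{1}{4}}$ and absorbing the factor $m$ via $m\le\sqrt{2}\,|E(G_{\lambda})|^{\frac{1}{2}}$ --- is a legitimate way to still land on $\beta|E(G_{\lambda})|\ee^{\frac{1}{4}}$ with $\beta$ absolute.

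Part (2), however, has a gap exactly where you flag one: the displayed gadget estimate for $(a,b,x,y)\in\cE_{\lambda}\cup\cF_{\lambda}$ is asserted (``should yield'') rather than proved, and it is the heart of the lemma. For these tuples $\what{v}(a,x)$ and $\what{v}(b,y)$ are non-adjacent, so the quantity $\|P_{i,\what{v}(a,x)}P_{i,\what{v}(b,y)}\|_2$ produced by your initial reduction is not small, and the orthogonality must be routed through the gadget and through $S_{i,j,k}$. The chain that does this is short and you should carry it out: first $\|SP_{i,\what{v}(a,x)}SP_{i,\what{v}(b,y)}S\|_2\le\|[P_{i,\what{v}(a,x)},S]\|_2+\|SP_{i,\what{v}(a,x)}P_{i,\what{v}(b,y)}S\|_2$ (using that $S$ is a positive contraction); then replace $P_{i,q(2,2,a,b,x,y)}$ by $1-P_{i,q(1,2,a,b,x,y)}-P_{i,q(3,2,a,b,x,y)}$ at cost $\eta(Q_{(a,b,x,y),2})\le 3\zeta(Q_{(a,b,x,y),2})^{\frac{1}{2}}$, kill the $q(1,2)$ term against $P_{i,q(1,1,a,b,x,y)}$ (these vertices are adjacent, cost a single $\theta$), replace $P_{i,q(3,2,a,b,x,y)}$ via the row-$3$ triangle, kill the $q(3,1)$ term against $q(1,1)$ (again adjacent), and finally bound the surviving term by $\|SP_{i,q(1,1,a,b,x,y)}P_{i,q(3,3,a,b,x,y)}S\|_2\le\|P_{i,q(3,3,a,b,x,y)}P_{i,A}\|_2$ using $q(3,3,a,b,x,y)\sim A$ and the outer factor $P_{i,A}$ of $S$. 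This bounds each such summand by one commutator plus finitely many gadget-local $\zeta^{\frac{1}{2}}$ and $\theta$ terms, after which your concluding summation goes through as written.
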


\begin{proof}
For the first claim, we note that, for each triangle $T$, triangular prism $R$ and edge $e$ in $G_{\lambda}$,  each term of the form $\zeta(T)$, $\xi(R)$ and $\theta(e)$ consists of sums of the form $\|P_{1,\mu}P_{1,\nu}\|_2$, for certain edges $(\mu,\nu)$ of $G_{\lambda}$. Each edge appears a finite number of times, and one can see that this number does not depend on $n$, $m$ or $\lambda$. Noting that $\|P_{1,\mu}P_{1,\nu}\|_2 \leq \|P_{1,\mu}P_{1,\nu}\|_2^{\frac{1}{2}}$ and using the concavity of the square root function, it readily follows that there is some constant $\beta_1>0$ such that, for all $x$,
\[ \sum_{a=1}^m \|[P_{i,\what{v}(a,x)},S_{i,j,k}]\|_2 \leq \beta_1 \sum_{\substack{\mu,\nu \in V(G) \\ \mu \sim \nu}} \|P_{1,\mu}P_{1,\nu}\|_2^{\frac{1}{2}} \leq 2\beta_1 |E(G_{\lambda})| \ee^{\frac{1}{4}},\]
by (\ref{main reverse estimate}).
For the second claim, to simplify notation we let $S=S_{i,j,k}$. If $(a,b,x,y) \in \cE_{\lambda} \cup \cF_{\lambda}$, then
\begin{align*}
\|SP_{i,\what{v}(a,x)}SP_{i,\what{v}(b,y)}S\|_2&\leq \|[P_{i,\what{v}(a,x)},S]\|_2+\|S^2P_{i,\what{v}(a,x)}P_{i,\what{v}(b,y)}S\|_2 \\
&\leq \|[P_{i,\what{v}(a,x)},S]\|_2+\|SP_{i,\what{v}(a,x)}P_{i,\what{v}(b,y)}S\|_2. \\
\end{align*}
Next, since $\what{v}(a,x)=q(1,1,a,b,x,y)$ and $\what{v}(b,y)=q(2,2,a,b,x,y)$, we can replace $P_{i,\what{v}(b,y)}$ with the other projections from column $2$ of $Q_{a,b,x,y}$ corresponding to color $i$ and obtain
\begin{multline*}
\|SP_{i,\what{v}(a,x)}SP_{i,\what{v}(b,y)}S\|_2\leq \|[P_{i,\what{v}(a,x)},S]\|_2+\eta(Q_{(a,b,x,y),2})+\theta(q((1,1),(1,2),(a,b,x,y))) \\
+\|SP_{i,\what{v}(a,x)}P_{i,q(3,2,a,b,x,y)}S\|_2.
\end{multline*}
Then replacing $q(3,2,a,b,x,y)$ with $q(3,1,a,b,x,y)$ and $q(3,3,a,b,x,y)$ by introducing the term $\eta(Q_{3,(a,b,x,y)}) \leq 3(\zeta(Q_{3,(a,b,x,y)}))^{\frac{1}{2}}$, it follows that
\begin{align*}
\|SP_{i,\what{v}(a,x)}SP_{i,\what{v}(b,y)}S\|_2&\leq \|[P_{i,\what{v}(a,x)},S]\|_2+3(\zeta(Q_{(a,b,x,y),2}))^{\frac{1}{2}}+\theta(q((1,1),(1,2),(a,b,x,y))) \\
&\,\,\,\,\,\,\,+3(\zeta(Q_{3,(a,b,x,y)}))^{\frac{1}{2}}+\theta(q((1,1),(3,1),(a,b,x,y))) \\
&\,\,\,\,\,\,\,+\|SP_{i,q(1,1,a,b,x,y)}P_{i,q(3,3,a,b,x,y)}S\|_2.
\end{align*}
Using the fact that $S=P_{i,A}P_{j,B}P_{k,C}P_{j,B}P_{i,A}$, the last term satisfies
\[ \|SP_{i,q(1,1,a,b,x,y)}P_{i,q(3,3,a,b,x,y)}S\|_2 \leq \|P_{i,q(3,3,a,b,x,y)}P_{i,A}\|_2=\theta(q(3,3,a,b,x,y),A).\]
Thus, for any $(a,b,x,y) \in \cE_{\lambda} \cup \cF_{\lambda}$ we have the estimate
\begin{align*}
\|SP_{i,\what{v}(a,x)}SP_{i,\what{v}(b,y)}S\|_2&\leq \|[P_{i,\what{v}(a,x)},S]\|_2+3(\zeta(Q_{(a,b,x,y),2}))^{\frac{1}{2}}+\theta(q((1,1),(1,2),(a,b,x,y))) \\
&\,\,\,\,\,\,\,+3(\zeta(Q_{3,(a,b,x,y)}))^{\frac{1}{2}}+\theta(q((1,1),(3,1),(a,b,x,y))) \\
&\,\,\,\,\,\,\,+\theta(q(3,3,a,b,x,y),A).
\end{align*}
For the tuples $(a,b,x,y) \in \lambda^{-1}(\{0\}) \setminus (\cE_{\lambda} \cup \cF_{\lambda})$ we simply have
\[ \|SP_{i,\what{v}(a,x)}SP_{i,\what{v}(b,y)}S\|_2 \leq \|[P_{i,\what{v}(a,x)},S]\|_2+\theta(\what{v}(a,x),\what{v}(b,y)).\]
The commutators of the form $\|[P_{i,\what{v}(a,x)},S]\|_2$ are counted once per element of $\lambda^{-1}(\{0\})$, so a similar argument as for the first claim shows that there is a $\beta_2>0$, independent of $n$, $m$ and $\lambda^{-1}(\{0\})$, such that
\[ \sum_{(a,b,x,y) \in \lambda^{-1}(\{0\})} \|SP_{i,\what{v}(a,x)}SP_{i,\what{v}(b,y)}S\|_2 \leq (\beta_2 |\lambda^{-1}(\{0\})|+\beta_2)|E(G_{\lambda})|\ee^{\frac{1}{4}}.\]
The lemma statement follows by letting $\beta=\max\{2\beta_1,\beta_2\}$.
\end{proof}

Our next goal is to show that, for each $1 \leq x \leq n$, the sum 
\[\sum_{a=1}^m \sum_{\{i,j,k\}=\{1,2,3\}} S_{i,j,k}P_{i,\what{v}(a,x)}S_{i,j,k}\]
is close to $1$ in $2$-norm. Then we can perturb and obtain our synchronous strategy for $\cG$.

\begin{lemma}
\label{lemma: almost POVMs with respect to Q}
There is a constant $D>0$, independent of $n$, $m$ and $|\lambda^{-1}(\{0\})|$, such that, for each $1 \leq x \leq n$, we have
\[ \left\| 1-\sum_{\{i,j,k\}=\{1,2,3\}} \sum_{a=1}^m S_{i,j,k}P_{i,\what{v}(a,x)}S_{i,j,k}\right\|_2 \leq D |E(G_{\lambda})|\ee^{\frac{1}{4}}. \]
\end{lemma}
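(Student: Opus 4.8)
The plan is to fix $x$, reduce to a statement about a single \emph{block} of the control triangle, telescope through the gadget chain $R_{1,x},\dots,R_{m-2,x}$ entirely at the level of the \emph{uncompressed} projections, and compress by the operators $S_{i,j,k}$ only at the very end. First I would invoke Lemma~\ref{lemma: cut-down contractions are an almost-PVM}, which gives $\bigl\|1-\sum_{\{i,j,k\}=\{1,2,3\}}S_{i,j,k}\bigr\|_2\le\tfrac{477}{2}\zeta(\Delta)$, together with $\zeta(\Delta)\le\sum_{\mu\sim\nu}\|P_{1,\mu}P_{1,\nu}\|_2\le\sum_{\mu\sim\nu}\|P_{1,\mu}P_{1,\nu}\|_2^{1/2}\le 2|E(G_\lambda)|\ee^{1/4}$ from (\ref{main reverse estimate}), to reduce the claim to showing that for each fixed block $\{i,j,k\}=\{1,2,3\}$ one has $\bigl\|S_{i,j,k}-\sum_{a=1}^m S_{i,j,k}P_{i,\what{v}(a,x)}S_{i,j,k}\bigr\|_2\le C|E(G_\lambda)|\ee^{1/4}$ for an absolute constant $C$; summing over the six blocks, and using Lemma~\ref{lemma: cut-down contractions are an almost-PVM} once more to handle $\sum_{\{i,j,k\}}\|S_{i,j,k}^2-S_{i,j,k}\|_2$ along the way, then yields the lemma with $D$ absolute. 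Throughout I write $S=S_{i,j,k}$ and use $0\le S\le P_{i,A}$ with $SP_{i,A}=P_{i,A}S=S$, so that left or right multiplication by $S$ never increases the $2$-norm.

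For the per-block estimate I would first telescope with no compression. For $1\le\alpha\le m-3$, combining the triangle relations (Proposition~\ref{proposition: restatement for notation}(2)) for column $1$ of $R_{\alpha,x}$, namely $R_{(\alpha,x),1}=\{v(1,1,\alpha,x),\what{v}(\alpha+1,x),v(3,1,\alpha,x)\}$, and for row $3$, namely $R_{3,(\alpha,x)}=\{v(3,1,\alpha,x),v(3,2,\alpha,x),v(3,3,\alpha,x)\}$, together with the identification $v(3,2,\alpha,x)=v(1,1,\alpha+1,x)$, gives
\[ P_{i,v(1,1,\alpha,x)}+P_{i,\what{v}(\alpha+1,x)}-P_{i,v(1,1,\alpha+1,x)}\;\approx_{\,\eta(R_{(\alpha,x),1})+\eta(R_{3,(\alpha,x)})}\;P_{i,v(3,3,\alpha,x)}. \]
Summing over $\alpha$ (the terms $P_{i,v(1,1,\alpha,x)}$ telescope) and adding $P_{i,\what{v}(m-1,x)}=P_{i,v(2,1,m-2,x)}$ and $P_{i,\what{v}(m,x)}=P_{i,v(2,2,m-2,x)}$ produces
\[ \sum_{a=1}^m P_{i,\what{v}(a,x)}\;\approx\; P_{i,v(1,1,m-2,x)}+P_{i,v(2,1,m-2,x)}+P_{i,v(2,2,m-2,x)}+\sum_{\alpha=1}^{m-3}P_{i,v(3,3,\alpha,x)} \]
with error $\sum_{\alpha=1}^{m-3}\bigl(\eta(R_{(\alpha,x),1})+\eta(R_{3,(\alpha,x)})\bigr)$; by Proposition~\ref{proposition: restatement for notation}(2), (\ref{zeta estimate triangle}) and (\ref{main reverse estimate}) this is $O(|E(G_\lambda)|\ee^{1/4})$, since the edges lying inside these triangles (over all $\alpha$, for this fixed $x$) are pairwise distinct edges of $G_\lambda$ — so no factor of $m$ enters.

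Next I would compress the last display by $S$ on both sides. Because $v(3,3,\alpha,x)\sim A$, one has $\|SP_{i,v(3,3,\alpha,x)}S\|_2\le\|P_{i,A}P_{i,v(3,3,\alpha,x)}\|_2=\theta(v(3,3,\alpha,x),A)$, and $\sum_\alpha\theta(v(3,3,\alpha,x),A)\le 2|E(G_\lambda)|\ee^{1/4}$ by (\ref{main reverse estimate}). It then remains to show $S\bigl[P_{i,v(1,1,m-2,x)}+P_{i,v(2,1,m-2,x)}+P_{i,v(2,2,m-2,x)}\bigr]S\approx S$, for which I use the three triangles of $R_{m-2,x}$: column $1$ gives $P_{i,v(1,1,m-2,x)}+P_{i,v(2,1,m-2,x)}\approx 1-P_{i,v(3,1,m-2,x)}$, which after compressing and using $\|S^2-S\|_2\le 72\zeta(\Delta)$ becomes $S-SP_{i,v(3,1,m-2,x)}S$; row $3$ (using $\|SP_{i,v(3,3,m-2,x)}S\|_2\le\theta(v(3,3,m-2,x),A)$) gives $SP_{i,v(3,1,m-2,x)}S+SP_{i,v(3,2,m-2,x)}S\approx S$; and column $2$ (using $v(1,2,m-2,x)=B$ and $\|SP_{i,B}S\|_2\le\|[P_{i,A},P_{i,B}]\|_2\le 2\zeta(\Delta)$, since $P_{j,B}P_{i,B}=0$) gives $SP_{i,v(2,2,m-2,x)}S+SP_{i,v(3,2,m-2,x)}S\approx S$. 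Subtracting the last two yields $SP_{i,v(3,1,m-2,x)}S\approx SP_{i,v(2,2,m-2,x)}S$, and combining everything gives $S\bigl[P_{i,v(1,1,m-2,x)}+P_{i,v(2,1,m-2,x)}+P_{i,v(2,2,m-2,x)}\bigr]S\approx S-SP_{i,v(3,1,m-2,x)}S+SP_{i,v(2,2,m-2,x)}S\approx S$. Every error produced in this last step is a bounded number of copies of $\eta(R_{(m-2,x),\bullet})$, $\|S^2-S\|_2$, $\zeta(\Delta)$ and $\theta(v(3,3,m-2,x),A)$, each $O(|E(G_\lambda)|\ee^{1/4})$, so the per-block bound holds with an absolute $C$; the $m=3$ degenerate case is the same computation with the telescoping range empty.

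I expect the one genuinely delicate point to be keeping $D$ independent of $m$. A naive ``compress gadget by gadget'' argument would invoke $\|S^2-S\|_2\asymp\zeta(\Delta)$ — an error that \emph{cannot} be replaced by its smaller square root — once per $R_{\alpha,x}$, producing the useless bound $O(m)\,|E(G_\lambda)|\ee^{1/4}$. Carrying out the whole telescoping on the uncompressed projections first confines those non-square-rootable errors to the single gadget $R_{m-2,x}$, while all of the $O(m)$ remaining error terms are of the form $\zeta(\text{triangle})^{1/2}$ or $\theta(\text{edge})$ over \emph{distinct} edges of $G_\lambda$, so (\ref{main reverse estimate}) collapses their sum to $O(|E(G_\lambda)|\ee^{1/4})$ with no $m$-dependence. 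The bookkeeping one must not skip is checking that the relevant gadget triangles really do share no edges as $\alpha$ varies — the gluings $v(1,2,\alpha,x)=B$ and $v(3,2,\alpha,x)=v(1,1,\alpha+1,x)$ only identify vertices, not edges.
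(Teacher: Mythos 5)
Your proposal is correct and is essentially the paper's argument: the paper also telescopes through the chain $R_{1,x},\dots,R_{m-2,x}$ using the column-$1$/row-$3$ triangle relations and the adjacency $v(3,3,\alpha,x)\sim A$, collapses the resulting sum of $\eta$'s and $\theta$'s over distinct edges via (\ref{main reverse estimate}), and incurs $\|S_{i,j,k}^2-S_{i,j,k}\|_2$ and $\|1-\sum S_{i,j,k}\|_2$ only once per block. The only (immaterial) differences are that the paper telescopes the already-compressed quantities $SP_{\cdot}S$ directly — which is equivalent, since conjugation by the contraction $S$ does not increase $2$-norm errors, so the $O(m)$-copies-of-$\zeta(\Delta)$ danger you flag never arises — and closes the last gadget via column $2$ and the bound on $SP_{i,B}S$ rather than your subtraction trick.
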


\begin{proof}
We first obtain a bound on $\left\| S_{i,j,k}\left(1-\sum_{a=1}^m P_{i,\what{v}(a,x)}\right)S_{i,j,k}\right\|_2$ for each $\{i,j,k\}=\{1,2,3\}$. For ease of notation, fix $\{i,j,k\}=\{1,2,3\}$ and set $S=S_{i,j,k}$. Suppose that $1 \leq \alpha \leq m-2$. We observe that
\begin{align*}
SP_{i,v(1,1,\alpha,x)}S+SP_{i,v(2,1,\alpha,x)}S&=S(P_{i,v(1,1,\alpha,x)}+P_{i,v(2,1,\alpha,x)})S \\
&\approx_{\eta(R_{(\alpha,x),1})} S(1-P_{i,v(3,1,\alpha,x)})S \\
&\approx_{\eta(R_{3,(\alpha,x)})} S(P_{i,v(3,2,\alpha,x)}+P_{i,v(3,3,\alpha,x)})S.
\end{align*}
Lastly, note that $v(3,3,\alpha,x) \sim A$, so we have $\|P_{1,v(3,3,\alpha,x)}S\|_2 \leq \|P_{1,v(3,3,\alpha,x)}P_{1,A}\|_2=\theta(v(3,3,\alpha,x),A)$. It follows that, for each $1 \leq \alpha \leq m-2$,
\[ \|S(P_{i,v(1,1,\alpha,x)}+P_{i,v(2,1,\alpha,x)}-P_{i,v(3,2,\alpha,x)})S\|_2 \leq \eta(R_{(\alpha,x),1})+\eta(R_{3,(\alpha,x)})+\theta(v(3,3,\alpha,x),A).\]
Using induction, the fact that $\widehat{v}(1,x)=v(1,1,1,x)$ and $\widehat{v}(a,x)=v(2,1,a-1,x)$ for $2 \leq a \leq k-1$, and the fact that $v(3,2,\alpha,x)=v(1,1,\alpha+1,x)$ for all $1 \leq \alpha \leq m-3$, it follows that
\[\left\| \sum_{a=1}^{m-1} SP_{i,\widehat{v}(a,x)}S -SP_{i,v(3,2,m-2,x)}S \right\|_2 \leq \sum_{\alpha=1}^{m-2} (\eta(R_{(\alpha,x),1})+\eta(R_{3,(\alpha,x)})+\theta(v(3,3,\alpha,x),A)).\]
Similarly, since $\what{v}(m,x)=v(2,2,m-2,x)$,
\begin{align*}
\left\| \sum_{a=1}^m SP_{i,\widehat{v}(a,x)}S-S^2\right\|_2&\leq \left\| \sum_{a=1}^{m-1} SP_{i,\widehat{v}(a,x)}S-SP_{i,v(3,2,m-2,x)}S\right\|_2 \\
&\,\,\,\,\,\,+\left\| SP_{i,v(3,2,m-2,x)}S+SP_{i,v(2,2,m-2,x)}S-S^2\right\|_2 \\
&\leq \sum_{\alpha=1}^{m-2} (\eta(R_{(\alpha,x),1})+\eta(R_{3,(\alpha,x)})+\theta(v(3,3,\alpha,x),A)) \\
&\,\,\,\,\,\,\,+\eta(R_{(m-2,x),2}).
\end{align*}
Applying this estimate to each $\{i,j,k\}=\{1,2,3\}$ and previous estimates, we obtain
\begin{align*}
\left\| 1-\sum_{\substack{\{i,j,k\}=\{1,2,3\} \\ 1 \leq a \leq m}}S_{i,j,k}P_{i,\what{v}(a,x)}S_{i,j,k}\right\|_2&\leq \left\|1-\sum_{\{i,j,k\}=\{1,2,3\}} S_{i,j,k}\right\|_2+\sum_{\{i,j,k\}=\{1,2,3\}} \|S_{i,j,k}^2-S_{i,j,k}\|_2 \\
&\,\,\,\,\,\,\,+\sum_{\{i,j,k\}=\{1,2,3\}} \left\|S_{i,j,k}\left(1-\sum_{a=1}^m P_{i,\what{v}(a,x)}\right)S_{i,j,k}\right\|_2 \\
\end{align*}
Combining this estimate with the fact that \[\eta(T) \leq 3(\zeta(T))^{\frac{1}{2}} \leq \sum_{\substack{\mu,\nu \in V(T) \\ \mu \neq \nu}} \|P_{1,\mu}P_{1,\nu}\|_2^{\frac{1}{2}}\] for each triangle $T$ in $G_{\lambda}$, and noting that $\theta(\mu,\nu)=\|P_{1,\mu}P_{1,\nu}\|_2 \leq \|P_{1,\mu}P_{1,\nu}\|_2^{\frac{1}{2}}$ for each $(\mu,\nu) \in E(G_{\lambda})$, one can see that each term involving $P_{1,\mu}P_{1,\nu}$ appears a number of times that is independent of the question set, answer set or rule function of the original game $\cG$. The result follows.
\end{proof}

\begin{lemma}
\label{lemma: A_{a,x} almost projection} Let $1 \leq x \leq n$, and let $1 \leq a \leq m$ and $\{i,j,k\}=\{1,2,3\}$. Then
\[ \sum_{a=1}^m \sum_{\{i,j,k\}=\{1,2,3\}} \|(S_{i,j,k}P_{i,\what{v}(a,x)}S_{i,j,k})^2-S_{i,j,k}P_{i,\what{v}(a,x)}S_{i,j,k}\|_2 \leq (144m+\beta|E(G_{\lambda})|)\ee^{\frac{1}{4}}.\]
\end{lemma}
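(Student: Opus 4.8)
The plan is to isolate a single algebraic inequality, apply it term by term, and then add up using Lemmas~\ref{lemma: cut-down contractions are an almost-PVM} and~\ref{lemma: Pi's with compression almost work}. Fix $\{i,j,k\}=\{1,2,3\}$ and $1\le a\le m$, and abbreviate $S=S_{i,j,k}$ and $P=P_{i,\what{v}(a,x)}$. Since $S$ is a positive contraction and $P$ an orthogonal projection, each of $S$, $SP$, $PS$ has operator norm at most $1$ and $P^2=P$. The crux is the estimate
\[ \|(SPS)^2-SPS\|_2\ \le\ 2\|S^2-S\|_2+\|[S,P]\|_2,\]
which I would prove by the telescoping chain
\[ (SPS)^2=SP\,S^2\,PS\ \approx_{\|S^2-S\|_2}\ SPSPS\ \approx_{\|[S,P]\|_2}\ SPS^2\ \approx_{\|S^2-S\|_2}\ SPS.\]
In the first and third steps one replaces $S^2$ by $S$; using $\|XYZ\|_2\le\|X\|\,\|Y\|_2\,\|Z\|$ with the flanking factors ($SP$ and $PS$, respectively $SP$) being contractions, each replacement costs $\|S^2-S\|_2$. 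In the middle step one writes $SPSPS=SP\,(SP)\,S$, replaces the inner $SP$ by $PS$ at cost $\|SP[S,P]S\|_2\le\|[S,P]\|_2$, and collapses $SP\cdot PS\cdot S=SPS^2$ via $P^2=P$.

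Next I would sum this inequality over $a=1,\dots,m$ and over the six triples $\{i,j,k\}=\{1,2,3\}$. As the first term on the right has no $a$-dependence, the left-hand side of the lemma is bounded by
\[ 2m\sum_{\{i,j,k\}}\|S_{i,j,k}^2-S_{i,j,k}\|_2\ +\ \sum_{\{i,j,k\}}\sum_{a=1}^m\|[S_{i,j,k},P_{i,\what{v}(a,x)}]\|_2.\]
By Lemma~\ref{lemma: cut-down contractions are an almost-PVM}, $\sum_{\{i,j,k\}}\|S_{i,j,k}^2-S_{i,j,k}\|_2\le 72\,\zeta(\Delta)$, so the first sum is at most $144m\,\zeta(\Delta)$; combined with \eqref{zeta estimate triangle} and \eqref{main reverse estimate} applied to the three edges of the control triangle, this accounts for the $144m\,\ee^{1/4}$ term. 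By Lemma~\ref{lemma: Pi's with compression almost work}(1), $\sum_{a=1}^m\|[S_{i,j,k},P_{i,\what{v}(a,x)}]\|_2\le\beta|E(G_\lambda)|\ee^{1/4}$ for each $\{i,j,k\}$, and summing over the six triples (re-absorbing the factor $6$ into $\beta$) accounts for the $\beta|E(G_\lambda)|\ee^{1/4}$ term.

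The genuinely technical work — the commutator estimates $\|[P_{i,\what{v}(a,x)},S_{i,j,k}]\|_2$ running through the triangular prisms and the control triangle — has already been done in Lemma~\ref{lemma: Pi's with compression almost work} (and the three lemmas that feed it), so the only new ingredient here is the telescoping identity for $\|(SPS)^2-SPS\|_2$; this is routine. The one place to be careful is to organize the substitutions so that each error is absorbed into $\|S^2-S\|_2$ or $\|[S,P]\|_2$, rather than spread across the individual commutators $[P,P_{i,A}]$, $[P,P_{j,B}]$, $[P,P_{k,C}]$ — which is exactly why the middle step is carried out as a single swap $SP\leftrightarrow PS$ followed by the collapse $P^2=P$, instead of commuting $P$ past $P_{i,A}$, $P_{j,B}$, $P_{k,C}$ one at a time.
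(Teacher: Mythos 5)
Your proof is correct and follows essentially the same route as the paper: the identical telescoping chain $(SPS)^2 = SPS^2PS \approx SPSPS \approx SPS^2 \approx SPS$ with total error $2\|S^2-S\|_2+\|[S,P]\|_2$, followed by summation via Lemmas \ref{lemma: cut-down contractions are an almost-PVM} and \ref{lemma: Pi's with compression almost work}. Your accounting of the final constants (absorbing the factor of $6$ into $\beta$, and bounding $\zeta(\Delta)$ via \eqref{zeta estimate triangle} and \eqref{main reverse estimate}) is if anything slightly more explicit than the paper's, which simply cites the two lemmas.
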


\begin{proof}
Using the fact that $P_{i,\what{v}(a,x)}^2=P_{i,\what{v}(a,x)}$ and $S_{i,j,k}$ is a contraction, we have
\begin{align*}
(S_{i,j,k}P_{i,\what{v}(a,x)}S_{i,j,k})^2&=S_{i,j,k}P_{i,\what{v}(a,x)}S_{i,j,k}S_{i,j,k}P_{i,\what{v}(a,x)}S_{i,j,k} \\
&\approx S_{i,j,k}P_{i,\what{v}(a,x)}S_{i,j,k}P_{i,\what{v}(a,x)}S_{i,j,k} \\
&\approx S_{i,j,k}P_{i,\what{v}(a,x)}S_{i,j,k}^2 \\
&\approx S_{i,j,k}P_{i,\what{v}(a,x)}S_{i,j,k},
\end{align*}
with a total error of at most $2\|S_{i,j,k}^2-S_{i,j,k}\|_2+\|[P_{i,\what{v}(a,x)},S_{i,j,k}]\|_2$. It follows that
\[ \sum_{a=1}^m \sum_{\{i,j,k\}=\{1,2,3\}} \|(S_{i,j,k}P_{i,\what{v}(a,x)}S_{i,j,k})^2-S_{i,j,k}P_{i,\what{v}(a,x)}S_{i,j,k}\|_2 \leq (144m+\beta|E(G_{\lambda})|)\ee^{\frac{1}{4}},\]
by Lemmas \ref{lemma: cut-down contractions are an almost-PVM} and \ref{lemma: Pi's with compression almost work}.
\end{proof}

Combining these basic facts with the previous lemmas, we show that our almost synchronous ``strategy" involving the operators $\sum_{\{i,j,k\}=\{1,2,3\}}S_{i,j,k}P_{i,\what{v}(a,x)}S_{i,j,k}$ almost satisfies the rules of the game.

\begin{lemma}
\label{lemma: approximate A_{a,x} strategy}
There exist constants $K,L>0$ that are independent of $n,m$ and $\lambda^{-1}(\{0\})$ such that
\[ \sum_{\substack{(a,b,x,y) \in \lambda^{-1}(\{0\}) \\ \{i_1,j_1,k_1\}=\{1,2,3\} \\ \{i_2,j_2,k_2\}=\{1,2,3\}}} \|S_{i_1,j_1,k_1}P_{i_1,\what{v}(a,x)}S_{i_1,j_1,k_1}S_{i_2,j_2,k_2}P_{i_2,\what{v}(b,y)}S_{i_2,j_2,k_2}\|_2 \leq (K|\lambda^{-1}(\{0\})|+L)|E(G_{\lambda})|\ee^{\frac{1}{4}}.\]
\end{lemma}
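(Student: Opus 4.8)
The plan is to break the sum into two parts according to whether the two permutations $(i_1,j_1,k_1)$ and $(i_2,j_2,k_2)$ of $\{1,2,3\}$ agree. Both indexings range over permutations, so for each fixed $(a,b,x,y) \in \lambda^{-1}(\{0\})$ there are six ``diagonal'' pairs with $(i_1,j_1,k_1)=(i_2,j_2,k_2)$ and thirty ``off-diagonal'' pairs. In the off-diagonal case the term is small because the $S_{i,j,k}$ are nearly orthogonal (Lemma \ref{lemma: cut-down contractions are an almost-PVM}); in the diagonal case the product collapses, up to a controllable error, to exactly the quantity already estimated in Lemma \ref{lemma: Pi's with compression almost work}(2). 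Thus essentially all the content is inherited from the earlier lemmas, and the only real task is the case split followed by verifying that the resulting constants do not depend on $n$, $m$ or $|\lambda^{-1}(\{0\})|$.

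For the off-diagonal terms I would group the product as $\bigl(S_{i_1,j_1,k_1}P_{i_1,\what{v}(a,x)}\bigr)\bigl(S_{i_1,j_1,k_1}S_{i_2,j_2,k_2}\bigr)\bigl(P_{i_2,\what{v}(b,y)}S_{i_2,j_2,k_2}\bigr)$ and apply the basic inequality $\|XYZ\|_2 \le \|X\|\,\|Y\|_2\,\|Z\|$, using that each $S_{i,j,k}$ and each $P_{i,\mu}$ is a contraction. This bounds each such term by $\|S_{i_1,j_1,k_1}S_{i_2,j_2,k_2}\|_2$, so by Lemma \ref{lemma: cut-down contractions are an almost-PVM} the sum over off-diagonal triple-pairs is at most $36\zeta(\Delta)$, uniformly in $(a,b,x,y)$. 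Summing over $\lambda^{-1}(\{0\})$ and bounding $\zeta(\Delta) = \sum_{\mu\neq\nu\in V(\Delta)}\|P_{1,\mu}P_{1,\nu}\|_2 \le \sum_{\mu\sim\nu}\|P_{1,\mu}P_{1,\nu}\|_2^{\frac{1}{2}} \le 2|E(G_{\lambda})|\ee^{\frac{1}{4}}$ via (\ref{main reverse estimate}), the off-diagonal contribution is at most $72\,|\lambda^{-1}(\{0\})|\,|E(G_{\lambda})|\ee^{\frac{1}{4}}$.

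For the diagonal terms, write $S=S_{i,j,k}$ and note that $SP_{i,\what{v}(a,x)}S\,SP_{i,\what{v}(b,y)}S = SP_{i,\what{v}(a,x)}S^2P_{i,\what{v}(b,y)}S$; replacing $S^2$ by $S$ costs at most $\|S^2-S\|_2$ in $2$-norm, again by $\|XYZ\|_2\le\|X\|\,\|Y\|_2\,\|Z\|$. Hence each diagonal term is at most $\|S_{i,j,k}^2-S_{i,j,k}\|_2 + \|S_{i,j,k}P_{i,\what{v}(a,x)}S_{i,j,k}P_{i,\what{v}(b,y)}S_{i,j,k}\|_2$. Summing over the six permutations and over $\lambda^{-1}(\{0\})$, the first piece is at most $|\lambda^{-1}(\{0\})|\sum_{\{i,j,k\}}\|S_{i,j,k}^2-S_{i,j,k}\|_2 \le 72\,|\lambda^{-1}(\{0\})|\zeta(\Delta) \le 144\,|\lambda^{-1}(\{0\})|\,|E(G_{\lambda})|\ee^{\frac{1}{4}}$ by Lemma \ref{lemma: cut-down contractions are an almost-PVM}, while the second piece is bounded by six applications of Lemma \ref{lemma: Pi's with compression almost work}(2), giving $6(\beta|\lambda^{-1}(\{0\})|+\beta)|E(G_{\lambda})|\ee^{\frac{1}{4}}$. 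Adding the diagonal and off-diagonal contributions yields the claim with $K = 216 + 6\beta$ and $L = 6\beta$, both sums of constants from the two cited lemmas and hence independent of $n$, $m$ and $|\lambda^{-1}(\{0\})|$.

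I do not expect a genuine obstacle in this lemma: the place where the hypothesis $\lambda(a,b,x,y)=0$ is actually exploited --- namely the orthogonality gadgets $Q_{a,b,x,y}$ and the direct edges $\what{v}(a,x)\sim\what{v}(b,y)$ --- has already been handled inside Lemma \ref{lemma: Pi's with compression almost work}(2). The only points requiring a little care are that the contraction factors be discarded cleanly (which the $\|XYZ\|_2\le\|X\|\,\|Y\|_2\,\|Z\|$ estimate accomplishes in both cases) and that the number of times each edge term $\|P_{1,\mu}P_{1,\nu}\|_2^{\frac{1}{2}}$ is invoked remains bounded independently of the game data, which is exactly the bookkeeping already performed in the proofs of Lemmas \ref{lemma: cut-down contractions are an almost-PVM} and \ref{lemma: Pi's with compression almost work}.
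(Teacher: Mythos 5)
Your proposal is correct and follows essentially the same route as the paper: split each fixed $(a,b,x,y)$ term into the off-diagonal pairs, bounded by $\sum\|S_{i_1,j_1,k_1}S_{i_2,j_2,k_2}\|_2\le 36\zeta(\Delta)$, and the diagonal pairs, where $S^2$ is replaced by $S$ at a cost of $\sum\|S_{i,j,k}^2-S_{i,j,k}\|_2\le 72\zeta(\Delta)$ before invoking Lemma \ref{lemma: Pi's with compression almost work}(2). Your explicit conversion of $\zeta(\Delta)$ into $2|E(G_\lambda)|\ee^{1/4}$ via (\ref{main reverse estimate}) and the resulting constants are consistent with what the paper leaves implicit.
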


\begin{proof}
Fixing $(a,b,x,y) \in \lambda^{-1}(\{0\})$ in the above sum, the resulting term is at most
\[ \sum_{\substack{\{i_1,j_1,k_1\}=\{1,2,3\} \\ \{i_2,j_2,k_2\}=\{1,2,3\} \\ (i_1,j_1,k_1) \neq (i_2,j_2,k_2)}} \|S_{i_1,j_1,k_1}S_{i_2,j_2,k_2}\|_2+\sum_{\{i,j,k\}=\{1,2,3\}} \|S_{i,j,k}P_{i,\what{v}(a,x)}S_{i,j,k}^2P_{i,\what{v}(b,y)}S_{i,j,k}\|_2.\]
The first term is less than $36\zeta(\Delta)$ by Lemma \ref{lemma: cut-down contractions are an almost-PVM}, while the second term is bounded above by
\[\sum_{\{i,j,k\}=\{1,2,3\}} \|S_{i,j,k}^2-S_{i,j,k}\|_2+\sum_{\{i,j,k\}=\{1,2,3\}} \|S_{i,j,k}P_{i,\what{v}(a,x)}S_{i,j,k}P_{i,\what{v}(b,y)}S_{i,j,k}\|_2.\]
The first term in this sum is at most $72\zeta(\Delta)$ by Lemma \ref{lemma: cut-down contractions are an almost-PVM}. Thus, summing over all $(a,b,x,y) \in \lambda^{-1}(\{0\})$ and applying part (2) of Lemma \ref{lemma: Pi's with compression almost work} gives 
\[\sum_{\substack{(a,b,x,y) \in \lambda^{-1}(\{0\}) \\ \{i_1,j_1,k_1\}=\{1,2,3\} \\ \{i_2,j_2,k_2\}=\{1,2,3\}}} \|S_{i_1,j_1,k_1}P_{i_1,\what{v}(a,x)}S_{i_1,j_1,k_1}S_{i_2,j_2,k_2}P_{i_2,\what{v}(b,y)}S_{i_2,j_2,k_2}\|_2 \leq (K|\lambda^{-1}(\{0\})|+L)|E(G_{\lambda})|\ee^{\frac{1}{4}},\] for certain constants $K,L>0$.
\end{proof}

The next result uses Lemma \ref{lemma: approximate A_{a,x} strategy} and Lemma \ref{lemma: perturbing almost PVM} to yield a synchronous correlation that wins $\cG$ with probability at least $1-\text{poly}(n,2^m)\ee^{\frac{1}{2}}$.

\begin{theorem}
\label{theorem: from almost-PVM strategy to PVM strategy}
There exist PVMs $\{E_{a,x}\}_{a=1}^m$ in $\cM$ for each $1 \leq x \leq n$, with the property that
\[ \frac{1}{n^2} \sum_{(a,b,x,y) \in \lambda^{-1}(\{0\})} \tau(E_{a,x}E_{b,y}) \leq h(n,m)\ee^{\frac{1}{2}},\]
where $h$ is a polynomial in $n$ and $2^m$.
\end{theorem}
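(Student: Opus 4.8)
The plan is to obtain the PVMs $\{E_{a,x}\}$ by perturbing, for each fixed $x$, the compressed operators
\[ A_{a,x}:=\sum_{\{i,j,k\}=\{1,2,3\}} S_{i,j,k}P_{i,\what v(a,x)}S_{i,j,k}\qquad (1\le a\le m)\]
into an honest $m$-output PVM via Lemma~\ref{lemma: perturbing almost PVM}. First I would record two elementary facts about $A_{a,x}$. It is a positive operator, since each summand equals $(P_{i,\what v(a,x)}S_{i,j,k})^{*}(P_{i,\what v(a,x)}S_{i,j,k})$, and it has norm at most $2$: writing $S_{i,j,k}=P_{i,A}(P_{j,B}P_{k,C}P_{j,B})P_{i,A}$, the inner factor is a positive contraction, so $S_{i,j,k}\le P_{i,A}$ and hence $S_{i,j,k}P_{i,\what v(a,x)}S_{i,j,k}\le S_{i,j,k}^2\le S_{i,j,k}\le P_{i,A}$, giving $A_{a,x}\le\sum_{\{i,j,k\}=\{1,2,3\}}P_{i,A}=2$. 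Since Lemma~\ref{lemma: perturbing almost PVM} requires positive \emph{contractions}, I would replace $A_{a,x}$ by its clip $\widetilde A_{a,x}:=A_{a,x}-(A_{a,x}-1)_{+}$; crucially this costs only a \emph{linear} amount in $2$-norm, because for any positive operator $A$ the functional-calculus inequality $(A-1)_{+}\le|A^{2}-A|$ holds (on $[0,\infty)$, $(t-1)_{+}\le t|t-1|$), so $\|\widetilde A_{a,x}-A_{a,x}\|_{2}\le\|A_{a,x}^{2}-A_{a,x}\|_{2}$.

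Next I would assemble, for each fixed $x$, the three ``PVM defects'' of $\{A_{a,x}\}_{a=1}^{m}$, all of the form $\text{poly}(n,m)\,\ee^{1/4}$. The POVM defect $\|1-\sum_{a=1}^{m}A_{a,x}\|_{2}$ is precisely Lemma~\ref{lemma: almost POVMs with respect to Q}. For the idempotency defect I would expand $A_{a,x}^{2}$ into its diagonal part $\sum_{\{i,j,k\}}(S_{i,j,k}P_{i,\what v(a,x)}S_{i,j,k})^{2}$, whose distance to $A_{a,x}$ is controlled after summing over $a$ by Lemma~\ref{lemma: A_{a,x} almost projection}, and its off-diagonal part, each term of which is bounded by $\|S_{i_1,j_1,k_1}S_{i_2,j_2,k_2}\|_{2}$ and hence, per $a$, by $36\zeta(\Delta)\le 72|E(G_{\lambda})|\ee^{1/4}$ using Lemma~\ref{lemma: cut-down contractions are an almost-PVM} together with $\zeta(\Delta)\le 2|E(G_{\lambda})|\ee^{1/4}$ (from (\ref{main reverse estimate}) and (\ref{zeta estimate triangle})). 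For the orthogonality defect, since $\cG$ is synchronous one has $(a,b,x,x)\in\lambda^{-1}(\{0\})$ for all $a\ne b$, so $\sum_{1\le a<b\le m}\|A_{a,x}A_{b,x}\|_{2}$ is dominated by the left-hand side of Lemma~\ref{lemma: approximate A_{a,x} strategy}. Transferring these three bounds from $A_{a,x}$ to $\widetilde A_{a,x}$ adds only a further $\text{poly}(n,m)\ee^{1/4}$, again via the linear clipping estimate and $\|A_{a,x}\|,\|\widetilde A_{a,x}\|\le 2$.

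With $\{\widetilde A_{a,x}\}_{a=1}^{m}$ now a family of positive contractions whose three defects are at most some $\delta_{x}\le\text{poly}(n,m)\ee^{1/4}$, I would apply Lemma~\ref{lemma: perturbing almost PVM} (and the Remark following it, where $f(m,\delta)\le C\cdot 20^{m}\delta$ is linear in $\delta$) to get a PVM $\{E_{a,x}\}_{a=1}^{m}$ in $\cM$ with $\sum_{a}\|E_{a,x}-\widetilde A_{a,x}\|_{2}\le f(m,\delta_{x})$, hence $\sum_{a,x}\|E_{a,x}-A_{a,x}\|_{2}\le\text{poly}(n,2^{m})\ee^{1/4}$. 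For $(a,b,x,y)\in\lambda^{-1}(\{0\})$ the triangle inequality (using $\|A_{a,x}\|\le 2$) gives $\|E_{a,x}E_{b,y}\|_{2}\le\|E_{a,x}-A_{a,x}\|_{2}+2\|E_{b,y}-A_{b,y}\|_{2}+\|A_{a,x}A_{b,y}\|_{2}$; summing over $\lambda^{-1}(\{0\})$ (each pair $(a,x)$ occurring in at most $mn$ tuples, and invoking Lemma~\ref{lemma: approximate A_{a,x} strategy} for the last term, with $|\lambda^{-1}(\{0\})|\le m^{2}n^{2}$ and $|E(G_{\lambda})|\le\text{poly}(n,m)$ by Remark~\ref{remark: number of edges}) yields $\sum_{(a,b,x,y)\in\lambda^{-1}(\{0\})}\|E_{a,x}E_{b,y}\|_{2}\le M\ee^{1/4}$ with $M=\text{poly}(n,2^{m})$. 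Since $\tau(E_{a,x}E_{b,y})=\|E_{a,x}E_{b,y}\|_{2}^{2}$ and $\sum_{\alpha}t_{\alpha}^{2}\le\big(\sum_{\alpha}t_{\alpha}\big)^{2}$ for $t_{\alpha}\ge 0$, we conclude $\frac{1}{n^{2}}\sum_{(a,b,x,y)\in\lambda^{-1}(\{0\})}\tau(E_{a,x}E_{b,y})\le M^{2}\ee^{1/2}$, so $h(n,m):=M^{2}$ works. The main obstacle is the bookkeeping around the two nonstandard points: that the compressions $A_{a,x}$ are only positive operators of norm $\le 2$ rather than contractions (dealt with by the \emph{linear}-cost clipping, so that no power of $\ee$ is lost), and that Lemma~\ref{lemma: perturbing almost PVM} carries an error exponential in $m$, which is the sole reason the final polynomial $h$ is in $2^{m}$ rather than in $m$; the upgrade from the $\ee^{1/4}$ scale of all earlier lemmas to $\ee^{1/2}$ is then automatic from $\tau(E_{a,x}E_{b,y})=\|E_{a,x}E_{b,y}\|_{2}^{2}$.
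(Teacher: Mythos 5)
Your proposal is correct and follows the paper's argument in all essentials: the same defect estimates (Lemmas \ref{lemma: cut-down contractions are an almost-PVM}, \ref{lemma: Pi's with compression almost work}, \ref{lemma: almost POVMs with respect to Q}, \ref{lemma: A_{a,x} almost projection}, \ref{lemma: approximate A_{a,x} strategy}), the same perturbation via Lemma \ref{lemma: perturbing almost PVM} with its exponential-in-$m$ cost, and the same final passage from $\sum\|E_{a,x}E_{b,y}\|_2\le M\ee^{1/4}$ to $\sum\tau(E_{a,x}E_{b,y})\le M^2\ee^{1/2}$. The one place you deviate is local: the paper applies Lemma \ref{lemma: perturbing almost PVM} to the $6m$ operators $S_{i,j,k}P_{i,\what{v}(a,x)}S_{i,j,k}$, which are already positive contractions, obtaining a $6m$-output PVM $\{Q_{a,x,i,j,k}\}$ and only then setting $E_{a,x}=\sum_{\{i,j,k\}}Q_{a,x,i,j,k}$; you sum first, which produces operators of norm up to $2$ and forces your clipping step $\widetilde A_{a,x}=A_{a,x}-(A_{a,x}-1)_+$. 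That clipping argument is valid (the functional-calculus bound $(t-1)_+\le t|t-1|$ does give a linear cost $\|A_{a,x}^2-A_{a,x}\|_2$), so your variant goes through; the paper's ordering simply makes the extra step unnecessary.
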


\begin{proof}

There are polynomials $g_1,g_2,g_3$ in two variables so that, for each $1 \leq x \leq n$, the collection $\{ S_{i,j,k}P_{i,\what{v}(a,x)}S_{i,j,k}: 1 \leq a \leq m, \, \{i,j,k\}=\{1,2,3\}\}$ is a collection of $6m$ positive contractions satisfying:
\begin{align}
\left\| 1-\sum_{\{i,j,k\}=\{1,2,3\}} \sum_{a=1}^m S_{i,j,k}P_{i,\what{v}(a,x)}S_{i,j,k}\right\|_2 &\leq g_1(n,m)\ee^{\frac{1}{4}}, \label{almost summing to 1} \\
\sum_{a=1}^m \sum_{\{i,j,k\}=\{1,2,3\}} \|(S_{i,j,k}P_{i,\what{v}(a,x)}S_{i,j,k})^2-S_{i,j,k}P_{i,\what{v}(a,x)}S_{i,j,k}\|_2 &\leq g_2(n,m)\ee^{\frac{1}{4}}, \label{almost projections} \\
\sum_{\substack{1 \leq a,b \leq m \\ \{i_1,j_1,k_1\}=\{i_2,j_2,k_2\}=\{1,2,3\} \\ (a,i_1,j_1,k_1) \neq (b,i_2,j_2,k_3)}} \|S_{i_1,j_1,k_1}P_{i_1,\what{v}(a,x)}S_{i_1,j_1,k_1}S_{i_2,j_2,k_2}P_{i_2,\what{v}(b,x)}S_{i_2,j_2,k_2}\|_2 &\leq g_3(n,m)\ee^{\frac{1}{4}}. \label{almost orthogonal}
\end{align}
Applying Lemma \ref{lemma: perturbing almost PVM} to inequalities (\ref{almost summing to 1})--(\ref{almost orthogonal}) yields a PVM $\{ Q_{a,x,i,j,k}: 1 \leq a \leq m, \, \{i,j,k\}=\{1,2,3\}\}$ in $\cM$ such that
\[ \sum_{a=1}^m \sum_{\{i,j,k\}=\{1,2,3\}} \|Q_{a,x,i,j,k}-S_{i,j,k}P_{i,\what{v}(a,x)}S_{i,j,k}\|_2 \leq g_4(n,m)\ee^{\frac{1}{4}},\]
where $g_4$ is polynomial in $n$ and in $2^{6m}=(2^m)^6$ (hence, polynomial in $2^m$). Define 
\[E_{a,x}=\sum_{\{i,j,k\}=\{1,2,3\}} Q_{a,x,i,j,k},\] which is a projection in $\cM$. Then $\{E_{a,x}\}_{a=1}^m$ is a PVM in $\cM$ for each $1 \leq x \leq n$. For each $(a,b,x,y) \in \lambda^{-1}(\{0\})$, we have
\begin{align*}
\|E_{a,x}E_{b,y}\|_2&\leq \sum_{\substack{\{i_1,j_1,k_1\}=\{1,2,3\} \\ \{i_2,j_2,k_2\}=\{1,2,3\}}} \|Q_{a,x,i_1,j_1,k_1}Q_{b,y,i_2,j_2,k_2}\|_2 \\
&\leq \sum_{\substack{\{i_1,j_1,k_1\}=\{1,2,3\} \\ \{i_2,j_2,k_2\}=\{1,2,3\}}}  \|(Q_{a,x,i_1,j_1,k_1}-S_{i_1,j_1,k_1}P_{i_1,\what{v}(a,x)}S_{i_1,j_1,k_1})Q_{b,y,i_2,j_2,k_2}\|_2 \\
&\,\,\,\,\,\,\,+ \sum_{\substack{\{i_1,j_1,k_1\}=\{1,2,3\} \\ \{i_2,j_2,k_2\}=\{1,2,3\}}} \|S_{i_1,j_1,k_1}P_{i_1,\what{v}(a,x)}S_{i_1,j_1,k_1}(Q_{b,y,i_2,j_2,k_2}-S_{i_2,j_2,k_2}P_{i_2,\what{v}(b,y)}S_{i_2,j_2,k_2})\|_2 \\
&\,\,\,\,\,\,\,+\sum_{\substack{\{i_1,j_1,k_1\}=\{1,2,3\} \\ \{i_2,j_2,k_2\}=\{1,2,3\}}} \|S_{i_1,j_1,k_1}P_{i_1,\what{v}(a,x)}S_{i_1,j_1,k_1}S_{i_2,j_2,k_2}P_{i_2,\what{v}(b,y)}S_{i_2,j_2,k_2}\|_2 \\
&\leq 12\sum_{\{i,j,k\}=\{1,2,3\}} \|Q_{a,x,i,j,k}-S_{i,j,k}P_{i,\what{v}(a,x)}S_{i,j,k}\|_2+g_3(n,m)\ee^{\frac{1}{4}}.
\end{align*}
Summing over all $(a,b,x,y) \in \lambda^{-1}(\{0\})$, one obtains
\begin{align*}
\sum_{(a,b,x,y) \in \lambda^{-1}(\{0\})} \tau(E_{a,x}E_{b,y})&=\sum_{(a,b,x,y) \in \lambda^{-1}(\{0\})} \|E_{a,x}E_{b,y}\|_2^2 \\
&\leq \left( \sum_{(a,b,x,y) \in \lambda^{-1}(\{0\})} \|E_{a,x}E_{b,y}\|_2\right)^2 \\
&\leq \text{poly}(n,2^m)\ee^{\frac{1}{2}}.
\end{align*}
Dividing by $n^2$ yields the inequality in the theorem statement.
\end{proof}

\begin{remark}
If one could arrange to have 
\[\sum_{a=1}^m \sum_{\{i,j,k\}=\{1,2,3\}} S_{i,j,k}P_{i,\what{v}(a,x)}S_{i,j,k} \leq 1\] for each $1 \leq x \leq n$, then since these sums are already close to $1$ in $2$-norm, one could easily re-define the positive operators to yield POVMs satisfying Theorem \ref{theorem: from almost-PVM strategy to PVM strategy}. In other words, one would obtain a non-synchronous strategy that wins $\cG$ with respect to $\pi_u$ with probability at least $1-\text{poly}(n,m)\ee^{\frac{1}{2}}$. As a result, the non-synchronous correlation used would be almost synchronous. One could then apply the techniques of Vidick and Marrakchi-de la Salle \cite{Vi21,MdlS23} to yield a synchronous correlation in the same model that wins the game $\cG$ with probability $1-\text{poly}(n,m)\ee^c$ for some $c>0$. However, in our work, it is not guaranteed that the sums $\sum_{a=1}^m \sum_{\{i,j,k\}=\{1,2,3\}} S_{i,j,k}P_{i,\what{v}(a,x)}S_{i,j,k}$ are below $1$, even though they are close to $1$ in $2$-norm. This demonstrates the need for Lemma \ref{lemma: perturbing almost PVM}, and the resulting exponential dependence on $m$, the number of outputs. It would be interesting to know if this can be improved to polynomial dependence (see Problem \ref{problem: perturb almost POVM to POVM}).
\end{remark}

We arrive at the main theorem of this section.

\begin{theorem}
\label{theorem: value of coloring to value of game}
Let $\ee>0$. There exists a polynomial $h$ of two variables such that, whenever $\mathcal{G}$ is a synchronous non-local game with $n \geq 2$ questions and $m$ answers with associated graph $G_{\lambda}$, and whenever $t \in \{loc,q,qa,qc\}$ and $r \in C_t^s(|V(G_{\lambda})|,3)$ satisfies
\[\omega_t^s(\text{Hom}(G_{\lambda},K_3),\pi_{\text{edges}},r) \geq 1-\ee,\]
then there is an element $p \in C_t^s(n,m)$ such that \[\omega_t^s(\cG,\pi_u,p) \geq 1-h(n,2^m)\ee^{\frac{1}{2}}.\]
Moreover, if $r$ can be realized using PVMs in a von Neumann algebra $\cM$ and a faithful (normal) tracial state $\tau$ on $\cM$, then $p$ can also be realized using PVMs in $\cM$ and the tracial state $\tau$.
\end{theorem}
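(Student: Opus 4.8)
The plan is to obtain Theorem~\ref{theorem: value of coloring to value of game} as a short consequence of Theorem~\ref{theorem: from almost-PVM strategy to PVM strategy}, into which essentially all of the analytic work has already been pushed. First I would put the given correlation into the normalized form used throughout Section~\ref{section: from Glambda to G}. Since $r \in C_t^s(|V(G_\lambda)|,3)$, it is automatically of the form $r(a,b|\mu,\nu) = \tau(P_{a,\mu}P_{b,\nu})$ for PVMs $\{P_{a,\mu}\}_{a=1}^3$ ($\mu \in V(G_\lambda)$) in a von Neumann algebra $\cM$ carrying a faithful normal tracial state $\tau$, with $\cM$ of the type appropriate to $t$; applying the symmetrization construction recalled before Proposition~\ref{proposition: restatement for notation} we may further assume that $r(a,a|\mu,\nu)$ is independent of $a$ and that $r(a,b|\mu,\nu)$ is independent of the choice of distinct $a \neq b$, so that $\theta(\mu,\nu) := \|P_{a,\mu}P_{a,\nu}\|_2$ is well defined, without changing the model or the game value $\omega_t^s(\text{Hom}(G_\lambda,K_3),\pi_{\text{edges}},r) \geq 1-\ee$. (For the final realizability clause this symmetrization must not enlarge $\cM$; one can either note that it is absorbed in the relevant cases, e.g. $M_6(\cR^{\cU}) \cong \cR^{\cU}$, or bypass it altogether by redefining each of $\zeta,\eta,\xi,\theta$ as the corresponding sum over the three colours, after which every inequality of Section~\ref{section: from Glambda to G} still holds with larger absolute constants and with all operators remaining in the original $\cM$.)

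With $r$ in this form, the bound $\sum_{\mu \sim \nu}\|P_{a,\mu}P_{a,\nu}\|_2^{1/2} \leq 2|E(G_\lambda)|\ee^{1/4}$ of \eqref{main reverse estimate} follows at once from $\omega_t^s(\text{Hom}(G_\lambda,K_3),\pi_{\text{edges}},r) \geq 1-\ee$, and Theorem~\ref{theorem: from almost-PVM strategy to PVM strategy} applies directly. It produces, for each $1 \leq x \leq n$, a PVM $\{E_{a,x}\}_{a=1}^m$ in $\cM$ — built from the cut-down operators $S_{i,j,k}P_{i,\what{v}(a,x)}S_{i,j,k}$ by the iterative perturbation of Lemma~\ref{lemma: perturbing almost PVM} (this is the step that forces a polynomial in $2^m$ rather than in $m$) — such that
\[ \frac{1}{n^2}\sum_{(a,b,x,y) \in \lambda^{-1}(\{0\})} \tau(E_{a,x}E_{b,y}) \leq h(n,2^m)\,\ee^{\frac{1}{2}} \]
for a fixed polynomial $h$ in two variables. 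Set $p(a,b|x,y) = \tau(E_{a,x}E_{b,y})$. Because each $\{E_{a,x}\}_{a=1}^m$ is a PVM in $\cM$, $\tau$ is a faithful normal tracial state, and $\cM$ is of the type associated to $t$ (abelian for $loc$, finite dimensional for $q$, $\cR^{\cU}$ for $qa$, a finite von Neumann algebra for $qc$), the standard realization results of \cite{PSSTW16,KPS18} give $p \in C_t^s(n,m)$.

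It remains to read off the game value. Using $\sum_{a,b=1}^m p(a,b|x,y) = \tau\!\left(\left(\sum_{a=1}^m E_{a,x}\right)\!\left(\sum_{b=1}^m E_{b,y}\right)\right) = 1$ and the definition of $\lambda$,
\[ \omega_t^s(\cG,\pi_u,p) = \frac{1}{n^2}\sum_{x,y=1}^n\sum_{a,b=1}^m \lambda(a,b,x,y)\,p(a,b|x,y) = 1 - \frac{1}{n^2}\sum_{(a,b,x,y)\in\lambda^{-1}(\{0\})} p(a,b|x,y) \geq 1 - h(n,2^m)\,\ee^{\frac{1}{2}}, \]
which is the asserted estimate, and the moreover clause is precisely the observation from the first paragraph that every operator produced along the way lies in $\cM$. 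I do not expect a genuinely new difficulty at this stage: the commutator estimates for the control triangle in Lemmas~\ref{lemma: cut-down contractions are an almost-PVM}--\ref{lemma: approximate A_{a,x} strategy} and the perturbation in Theorem~\ref{theorem: from almost-PVM strategy to PVM strategy} are already the hard part. The two points that need care are (i) keeping the symmetrization reduction compatible with the claim that $p$ is realized in the \emph{same} $\cM$, handled by the colour-summation remark above, and (ii) confirming that the single polynomial $h$ delivered by Theorem~\ref{theorem: from almost-PVM strategy to PVM strategy} genuinely depends on $2^m$ (not merely on $m$), which is unavoidable given the exponential growth in Lemma~\ref{lemma: perturbing almost PVM}.
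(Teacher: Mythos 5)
Your proposal is correct and follows essentially the same route as the paper: both deduce the theorem directly from Theorem~\ref{theorem: from almost-PVM strategy to PVM strategy} and then verify model-by-model ($loc$, $q$, $qa$, $qc$) that the resulting PVMs live in an algebra of the right type, so that $p \in C_t^s(n,m)$. Your extra attention to the symmetrization step (ensuring the averaging over $S_3$ does not leave the original $\cM$, or bypassing it by summing the error quantities over colours) addresses a point the paper passes over implicitly, but it does not change the structure of the argument.
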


\begin{proof}
The case when $t=qc$ is immediate from Theorem \ref{theorem: from almost-PVM strategy to PVM strategy}. When $t=qa$, the synchronous correlation $r$ can be realized using PVMs in $\mathcal{R}^{\mathcal{U}}$, a tracial ultrapower of the hyperfinite $II_1$ factor von Neumann algebra $\cR$ by a free ultrafilter $\cU$ on $\bN$. \cite{KPS18} (This ultrapower has a unique faithful tracial state.) By Theorem \ref{theorem: from almost-PVM strategy to PVM strategy}, the resulting correlation $p=(\tau(E_{a,x}E_{b,y}))$, being realizable on $\cR^{\cU}$, belongs to $C_{qa}^s(n,m)$ \cite{KPS18}.

Similarly, if $t=loc$, then $r$ can be realized in an abelian von Neumann algebra $\cM$, and Theorem \ref{theorem: from almost-PVM strategy to PVM strategy} yields a synchronous correlation $p$ arising from $\cM$ (in particular, $p \in C_{loc}^s(n,m)$) \cite{PSSTW16}.

If $t=q$, then $r$ can be realized in some finite-dimensional von Neumann algebra $\cM$, so that $p$ is realized in $\cM$ as well by Theorem \ref{theorem: from almost-PVM strategy to PVM strategy}. It follows that $p \in C_q^s(n,m)$ \cite{PSSTW16}.
\end{proof}

Since Theorem \ref{theorem: value of coloring to value of game} only involves synchronous strategies for $\cG$, we easily obtain the following fact.

\begin{theorem}
\label{theorem: synchronous coloring to non-synchronous strategy}
If $\ee>0$ and $t \in \{loc,q,qa,qc\}$, and if $\omega_t^s(\text{Hom}(G_{\lambda},K_3),\pi_{\text{edges}}) \geq 1-\ee$, then $\omega_t(\cG,\pi_u) \geq 1-h(n,2^m)\ee^{\frac{1}{2}}$.
\end{theorem}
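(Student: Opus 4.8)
The plan is to derive this statement as an immediate corollary of Theorem \ref{theorem: value of coloring to value of game}, using only the two elementary facts that every synchronous $t$-correlation is in particular a $t$-correlation (so $C_t^s(n,m)\subseteq C_t(n,m)$) and that the winning probability of $\cG$ against a \emph{fixed} correlation $p$ with the prior $\pi_u$ is computed by the same formula whether or not $p$ is synchronous, so that $\omega_t^s(\cG,\pi_u,p)=\omega_t(\cG,\pi_u,p)$ for every $p\in C_t^s(n,m)$. Combining these, $\omega_t(\cG,\pi_u)=\sup_{p\in C_t(n,m)}\omega_t(\cG,\pi_u,p)\geq \omega_t(\cG,\pi_u,p_0)=\omega_t^s(\cG,\pi_u,p_0)$ for any particular $p_0\in C_t^s(n,m)$, and Theorem \ref{theorem: value of coloring to value of game} is exactly what produces such a $p_0$ with large value.

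Concretely, for $t\in\{loc,qa,qc\}$ I would first observe that the set $C_t^s(|V(G_{\lambda})|,3)$ is closed and, being a subset of the compact set of probability tables on the finite question/answer sets, is compact; hence the hypothesis $\omega_t^s(\text{Hom}(G_{\lambda},K_3),\pi_{\text{edges}})\geq 1-\ee$ is witnessed by an actual correlation $r\in C_t^s(|V(G_{\lambda})|,3)$ with $\omega_t^s(\text{Hom}(G_{\lambda},K_3),\pi_{\text{edges}},r)\geq 1-\ee$. Feeding $r$ into Theorem \ref{theorem: value of coloring to value of game} yields $p\in C_t^s(n,m)$ with $\omega_t^s(\cG,\pi_u,p)\geq 1-h(n,2^m)\ee^{\frac{1}{2}}$, and then $\omega_t(\cG,\pi_u)\geq\omega_t(\cG,\pi_u,p)=\omega_t^s(\cG,\pi_u,p)\geq 1-h(n,2^m)\ee^{\frac{1}{2}}$, as desired (here $h$ is the polynomial from Theorem \ref{theorem: value of coloring to value of game}, and $n\geq 2$ is assumed as there).

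The only case requiring a small amount of extra care is $t=q$, because $C_q^s$ need not be closed and so the supremum defining $\omega_q^s(\text{Hom}(G_{\lambda},K_3),\pi_{\text{edges}})$ may fail to be attained. For this I would run the same $\delta\to 0^+$ argument already used at the end of Section \ref{section: from G to Glambda}: for each $\delta>0$ pick $r_{\delta}\in C_q^s(|V(G_{\lambda})|,3)$ with $\omega_q^s(\text{Hom}(G_{\lambda},K_3),\pi_{\text{edges}},r_{\delta})\geq 1-\ee-\delta$, apply Theorem \ref{theorem: value of coloring to value of game} with $\ee+\delta$ in place of $\ee$ to obtain $p_{\delta}\in C_q^s(n,m)$ with $\omega_q(\cG,\pi_u)\geq\omega_q^s(\cG,\pi_u,p_{\delta})\geq 1-h(n,2^m)(\ee+\delta)^{\frac{1}{2}}$, and finally let $\delta\to 0^+$. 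There is no real obstacle in this proof --- all the substantive work has been done in Theorem \ref{theorem: value of coloring to value of game}; the only (minor) point to be careful about is precisely this non-closedness bookkeeping in the $q$ case, handled by the standard approximation trick.
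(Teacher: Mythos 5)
Your proposal is correct and follows essentially the same route as the paper, whose entire proof is the one-line observation that the synchronous $t$-value of $\cG$ is at most its $t$-value, combined with Theorem \ref{theorem: value of coloring to value of game}. Your additional care in passing from the hypothesis on the supremum $\omega_t^s(\text{Hom}(G_{\lambda},K_3),\pi_{\text{edges}})$ to an actual witnessing correlation $r$ (closedness for $t\in\{loc,qa,qc\}$, the $\delta\to 0^+$ trick for $t=q$) is a legitimate bookkeeping point that the paper leaves implicit, and it is handled correctly.
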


\begin{proof}
This result follows from the fact that the synchronous $t$-value of $\cG$, with respect to any prior distribution, is at most the $t$-value of $\cG$ with the same prior distribution.
\end{proof}

\section{The Max-3-Cut problem}
\label{section: max 3 cut}

In this section we apply Theorem \ref{theorem: value of coloring to value of game} to the Max-3-Cut problem of a graph. First, we note a simple proposition that is essentially from \cite{HMNPR24}, which we will use to relate synchronous strategies for $k$-coloring games to quantities that arise when considering certain non-commutative versions of the Max-$k$-Cut problem.

\begin{proposition}
\label{proposition: relate approximate colorings to max-k-cut}
Let $\cM$ be a von Neumann algebra with faithful (normal) tracial state $\tau$, and let $G=(V,E)$ be a graph. If $\{u_j: j \in V\}$ is a collection of unitaries, each of order $k$, and write $u_j=\sum_{a=0}^{k-1} \omega^a e_{a,j}$, where $\{e_{a,j}\}_{a=0}^{k-1}$ is a PVM in $\cM$ and $\omega=\exp\left(\frac{2\pi i}{k}\right)$. Then
\[ \sum_{(i,j) \in E} \left(1-\frac{1}{k}\sum_{s=0}^{k-1} \tau(u_i^s(u_j^s)^*)\right)=|E|\left(1-\sum_{\substack{i,j \in V \\ (i,j) \in E}} \sum_{a=0}^{k-1} \tau(e_{a,i}e_{b,j})\right).\]
In particular, if $p(a,b|i,j)=\tau(e_{a,i}e_{b,j})$ for each $0 \leq a,b \leq k-1$ and $i,j \in V$, then $p=(p(a,b|i,j)) \in C_{qc}^s(|V|,k)$ and
\[ \sum_{(i,j) \in E} \left(1-\frac{1}{k}\sum_{s=0}^{k-1} \tau(u_i^s(u_j^s)^*)\right)=|E| \omega_{qc}^s(\text{Hom}(G,K_k),\pi_{\text{edges}},p).\]
\end{proposition}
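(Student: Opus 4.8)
The plan is to reduce both displayed equalities to the single per-edge identity
\[ \frac{1}{k}\sum_{s=0}^{k-1}\tau\bigl(u_i^s(u_j^s)^*\bigr) = \sum_{a=0}^{k-1}\tau(e_{a,i}e_{a,j}) \]
valid for every edge $(i,j)\in E$, after which everything follows by summing over edges and unwinding notation.

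First I would note that, because $\{e_{a,j}\}_{a=0}^{k-1}$ is a PVM, a one-line induction using $e_{a,j}e_{b,j}=\delta_{ab}e_{a,j}$ and $\sum_a e_{a,j}=1$ gives $u_j^s=\sum_{a=0}^{k-1}\omega^{as}e_{a,j}$ for every $s\ge 0$; since each $e_{a,j}$ is self-adjoint, $(u_j^s)^*=\sum_{a=0}^{k-1}\omega^{-as}e_{a,j}$. Multiplying these out,
\[ u_i^s(u_j^s)^* = \sum_{a,b=0}^{k-1}\omega^{(a-b)s}\,e_{a,i}e_{b,j}, \]
so that $\tau\bigl(u_i^s(u_j^s)^*\bigr)=\sum_{a,b}\omega^{(a-b)s}\tau(e_{a,i}e_{b,j})$. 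Summing over $s=0,\dots,k-1$ and using the elementary identity $\sum_{s=0}^{k-1}\omega^{(a-b)s}=k$ when $a=b$ and $0$ otherwise (valid since $0\le a,b\le k-1$), all cross terms vanish and one is left with the claimed per-edge identity.

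With the identity in hand I would sum $1-\tfrac{1}{k}\sum_s\tau(u_i^s(u_j^s)^*)=1-\sum_a\tau(e_{a,i}e_{a,j})$ over the edge set, which yields the first displayed equation. For the ``in particular'' statement, I would first check that $p(a,b|i,j)=\tau(e_{a,i}e_{b,j})$ really is a synchronous $qc$-correlation: nonnegativity follows from $\tau(e_{a,i}e_{b,j})=\tau(e_{a,i}e_{b,j}e_{a,i})\ge 0$, normalization from $\sum_{a,b}\tau(e_{a,i}e_{b,j})=\tau\bigl((\sum_a e_{a,i})(\sum_b e_{b,j})\bigr)=1$, and synchronicity from $\tau(e_{a,i}e_{b,i})=\tau(0)=0$ for $a\neq b$; membership in $C_{qc}^s(|V|,k)$ is then exactly the characterization of synchronous $qc$-correlations from \cite{PSSTW16}. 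Since $\lambda(a,b,i,j)=1$ precisely when $a\neq b$ on an edge, the per-edge winning probability is $\sum_{a\neq b}p(a,b|i,j)=1-\sum_a\tau(e_{a,i}e_{a,j})$; unwinding the definitions of $\pi_{\text{edges}}$ and of $\omega_{qc}^s(\text{Hom}(G,K_k),\pi_{\text{edges}},p)$ — and using the symmetry $\tau(e_{a,i}e_{a,j})=\tau(e_{a,j}e_{a,i})$ from traciality to pass between the ordered sum built into $\pi_{\text{edges}}$ and the sum over $E$ — then identifies $\sum_{(i,j)\in E}\bigl(1-\sum_a\tau(e_{a,i}e_{a,j})\bigr)$ with $|E|\,\omega_{qc}^s(\text{Hom}(G,K_k),\pi_{\text{edges}},p)$.

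There is no real obstacle here: the only content is the roots-of-unity cancellation $\sum_s\omega^{(a-b)s}=k\delta_{ab}$, which is entirely elementary. The one place to be careful is the final bookkeeping step — keeping straight whether edge sums run over ordered or unordered pairs, and using traciality to reconcile the $\tfrac{1}{2|E|}$-weighted sum over ordered adjacent pairs appearing in $\pi_{\text{edges}}$ with the sum over $E$ used on the left-hand side.
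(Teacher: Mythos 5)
Your proposal is correct and follows essentially the same route as the paper: expand $u_i^s(u_j^s)^*$ over the PVMs, use the roots-of-unity cancellation $\sum_{s=0}^{k-1}\omega^{(a-b)s}=k\delta_{ab}$ to get the per-edge identity, then sum over edges and match against the definition of $\omega_{qc}^s$ with $\pi_{\text{edges}}$ (the paper, like you, handles the ordered-versus-unordered edge count with the factor $\tfrac{1}{2}$). Your explicit verification that $p\in C_{qc}^s(|V|,k)$ is slightly more detailed than the paper's, which simply cites the characterization of synchronous $qc$-correlations, but the substance is identical.
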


\begin{proof}
We first note that
\[ u_i^s=\sum_{a=0}^{k-1} \omega^{as} e_{a,i},\]
since $\{e_{a,i}\}_{a=1}^k$ is a PVM. Thus,
\[\sum_{s=0}^{k-1} \tau(u_i^s(u_j^s)^*)=\sum_{s=0}^{k-1} \sum_{a,b=0}^{k-1} \omega^{s(a-b)} \tau(e_{a,i}e_{b,j}).\]
Since $\omega$ is a primitive $k$-th root of unity, $\sum_{s=0}^{k-1} \omega^{s(a-b)}$ is $0$ if $a \neq b$, and is $k$ if $a=b$. Hence,
\[ \sum_{s=0}^{k-1} \tau(u_i^s(u_j^s)^*)=k\sum_{a=0}^{k-1} \tau(e_{a,i}e_{a,j}).\]
Dividing by $k$ and noting that the sum over all $i,j \in V$ for which $(i,j) \in E$ counts each edge exactly twice,
\begin{align*}
\sum_{(i,j) \in E} \frac{1}{k} \sum_{s=0}^{k-1} \tau(u_i^s(u_j^s)^*)&=\frac{1}{2} \sum_{\substack{i,j \in V \\ (i,j) \in E}} \sum_{a=0}^{k-1}\tau(e_{a,i}e_{b,j}) \\
&=|E| \left( \frac{1}{2|E|} \sum_{\substack{i,j \in V \\ (i,j) \in E}} \sum_{a=0}^{k-1} \tau(e_{a,i}e_{b,j}) \right) \\
&=|E| (1-\omega_{qc}^s(\text{Hom}(G,K_k),\pi_{\text{edges}},p))
\end{align*}
Thus,
\[ \sum_{(i,j) \in E} \left(1-\frac{1}{k}\sum_{s=0}^{k-1} \tau(u_i^s(u_j^s)^*)\right)=|E|\omega_{qc}^s(\text{Hom}(G,K_k),\pi_{\text{edges}},p)\]
\end{proof}

For a graph $G=(V,E)$, the Max-$k$-Cut problem is the problem of partitioning $V$ into $k$ pairwise disjoint subsets $S_1,...,S_k$, for which the number of edges $e=(i,j)$ for which $i$ and $j$ do not belong to the same subset $S_{\ell}$ is maximized. One can readily see that the maximal number is
\[ \text{Max-}k\text{-Cut}(G)=\max \sum_{(i,j) \in E} \left( 1-\frac{1}{k} \sum_{s=0}^{k-1} u_i^s(u_j^s)^*\right),\]
where the maximum is taken over all choices of $u_j \in \{1,\omega,\omega^2,...,\omega^{k-1}\}$ where $\omega=\exp\left(\frac{2\pi i}{j}\right)$. (See, for example, \cite{CMS23}.) Similarly (as done in \cite{HMNPR24}) one can define non-commutative versions.

\begin{definition}
The \textbf{non-commutative Max-$k$-Cut} of a graph $G$ is defined as
\[ \text{NC-Max-}k\text{-Cut}(G)=\sup \sum_{(i,j) \in E} \left(1-\frac{1}{k}\sum_{s=0}^{k-1} \tr_d(u_i^s(u_j^s)^*)\right),\]
where the supremum is taken over all $d \in \mathbb{N}$ and over all elements $\{u_j: j \in V\}$ in $M_d$ satisfying $u_j^*u_j=u_j^k=I_d$ for all $j \in V$.

The \textbf{qc Max-$k$-Cut} of $G$ is defined as
\[ \sup \sum_{(i,j) \in E} \left(1-\frac{1}{k}\sum_{s=0}^{k-1} \tau(u_i^s(u_j^s)^*)\right)\]
where the supremum is taken over all tracial von Neumann algebras $(\cM,\tau)$ and elements $\{u_j: j \in V\}$ in $\cM$ satisfying $u_j^*u_j=u_j^k=1$ for all $j \in V$.
\end{definition}

A few notes are in order. The sum in the supremum is over all edges of $G$, counting each once. The terms involved, while being defined in terms of the vertices of each edge, does not depend on the ordering of the vertices for each edge. We also note that the supremum for the NC Max-k-Cut may not be attained, since the set of synchronous quantum correlations is not closed in general \cite{DPP19}. However, if one were to define the $qa$ Max-k-Cut of $G$ by taking the supremum of the above expression over all order $k$ unitaries in $\cR^{\cU}$, then this supremum would be attained and would be equal to the NC Max-k-Cut.

It is easy to see that the following hold:
\begin{itemize}
\item $\text{Max-}k\text{-Cut}(G)=|E|\omega_{loc}^s(\text{Hom}(G,K_k),\pi_{edges})$;
\item $\text{NC-Max-}k\text{-Cut}(G)=|E|\omega_q^s(\text{Hom}(G,K_k),\pi_{edges})$; and
\item
$qc\text{-Max-}k\text{-Cut}(G)=|E|\omega_{qc}^s(\text{Hom}(G,K_k),\pi_{edges})$.
\end{itemize}

The following uses Theorem \ref{theorem: synchronous coloring to non-synchronous strategy} to address what are known as ``gapped promise problems". For our purposes, the \textbf{gapped $(1,\alpha)$-promise problem} for a synchronous game $\cG$ is the decision problem of determining whether the value of the game $\cG$ (in a particular model $t \in \{loc,q,qa,qc\}$) is equal to $1$, or less than $\alpha$, if promised that one of those two outcomes occur. Such promise problems were a key tool in the proof that $C_{qa}^s(n,m) \neq C_{qc}^s(n,m)$ for large enough $n,m$ \cite{JNVWY20}; indeed, Ji et. al proved that, for each $0<\ee<1$, the gapped $(1,1-\ee)$-promise problem for the quantum value of a synchronous game $\cG$ is undecidable (in fact, RE-hard) \cite{JNVWY20}.

\begin{theorem}
There exists an $\alpha \in (0,1)$ such that, for each $t \in \{q,qa,qc\}$, the gapped $(1,\alpha)$-promise problem for the synchronous $t$-value of the $3$-coloring game is undecidable.
\end{theorem}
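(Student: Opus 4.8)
The plan is to reduce the undecidability of the gapped $(1,\alpha)$-promise problem for the synchronous $t$-value of the $3$-coloring game to the known RE-hardness, due to Ji et.\ al \cite{JNVWY20}, of the gapped $(1,1-\ee)$-promise problem for the synchronous $t$-value of a general synchronous game. First I would fix the small constant $\ee \in (0,1)$ supplied by \cite{JNVWY20} for which the gapped $(1,1-\ee)$-promise problem for $\omega_t^s(\cG,\pi_u)$ is RE-hard (in particular, for $t=q$ it is RE-hard, and for $t \in \{q,qa,qc\}$ it is undecidable). Given an instance $\cG=(I,O,\lambda)$ of that problem, with $|I|=n$ and $|O|=m$, form the graph $G_\lambda$ from Section \ref{section: graph definition}; note $G_\lambda$ is computable from $\cG$ and its size is polynomial in $n,m$ by Remark \ref{remark: number of edges}.

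The key step is to choose $\alpha$ so that the two promise cases for $\cG$ map to the two promise cases for $\text{Hom}(G_\lambda,K_3)$. On one hand, if $\omega_t^s(\cG,\pi_u)=1$, then by Theorem \ref{theorem: forward direction} (the forward direction), $\omega_t^s(\text{Hom}(G_\lambda,K_3),\pi_{\text{edges}}) \geq 1-\text{poly}(n,m)\cdot 0 = 1$, hence equals $1$. On the other hand, suppose $\omega_t^s(\text{Hom}(G_\lambda,K_3),\pi_{\text{edges}}) \geq 1-\delta$; then by Theorem \ref{theorem: value of coloring to value of game} (together with Theorem \ref{theorem: synchronous coloring to non-synchronous strategy} giving the bound with the same polynomial $h$ for the synchronous value), we get $\omega_t^s(\cG,\pi_u) \geq 1-h(n,2^m)\delta^{\frac12}$. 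So if we want the coloring value to be $\leq \alpha$ in the second case, it suffices to guarantee that $1-h(n,2^m)\delta^{\frac12} > 1-\ee$ fails to be forced, i.e.\ we need: whenever $\omega_t^s(\cG,\pi_u) \leq 1-\ee$, the value $\omega_t^s(\text{Hom}(G_\lambda,K_3),\pi_{\text{edges}})$ cannot exceed some fixed $\alpha<1$. Contrapositively, if $\omega_t^s(\text{Hom}(G_\lambda,K_3),\pi_{\text{edges}}) > 1-\delta$ with $h(n,2^m)\delta^{\frac12} < \ee$, then $\omega_t^s(\cG,\pi_u) > 1-\ee$. The obstacle is that $h(n,2^m)$ depends on $n$ and $m$, so a single universal $\delta$ (and hence a single $\alpha=1-\delta$) does not immediately work across all instances.

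To circumvent this, I would use the standard padding/amplification trick available because \cite{JNVWY20} actually gives RE-hardness for games of a \emph{fixed} question and answer alphabet size (or, alternatively, one can pad any instance to a fixed alphabet). With $n,m$ fixed, $h(n,2^m)$ is a fixed constant $H$, so set $\delta = (\ee/(2H))^2$ and $\alpha = 1-\delta \in (0,1)$. Then: if $\omega_t^s(\cG,\pi_u)=1$ we showed the coloring value is $1$; and if $\omega_t^s(\cG,\pi_u) \leq 1-\ee$ then the coloring value must be $\leq \alpha$, for otherwise $\omega_t^s(\cG,\pi_u) > 1-H\delta^{\frac12} = 1-\ee/2 > 1-\ee$, a contradiction. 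Hence the map $\cG \mapsto G_\lambda$ is a computable many-one reduction from the gapped $(1,1-\ee)$-promise problem for $\omega_t^s(\cG,\pi_u)$ to the gapped $(1,\alpha)$-promise problem for $\omega_t^s(\text{Hom}(G_\lambda,K_3),\pi_{\text{edges}})$. Since the former is undecidable for $t \in \{q,qa,qc\}$ (and RE-hard for $t=q$), so is the latter, completing the proof.

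I expect the main subtlety to be the dependence of $h$ on the alphabet sizes: one must invoke the fact that RE-hardness in \cite{JNVWY20} holds already for a family of games over a bounded alphabet (or pad to achieve this), so that $h(n,2^m)$ is a genuine constant and a single threshold $\alpha$ suffices uniformly over the hard instances. Everything else is a direct assembly of Theorems \ref{theorem: forward direction}, \ref{theorem: value of coloring to value of game} and \ref{theorem: synchronous coloring to non-synchronous strategy} with the computability of the construction of $G_\lambda$ from $\cG$.
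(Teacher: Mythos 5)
Your reduction is the same one the paper uses: send $\cG$ to $G_{\lambda}$, use Theorem \ref{theorem: forward direction} to map value $1$ to value $1$, and use Theorems \ref{theorem: value of coloring to value of game} and \ref{theorem: synchronous coloring to non-synchronous strategy} to show that a coloring value exceeding $1-\left(\ee/h(n,2^m)\right)^2$ forces $\omega_t(\cG,\pi_u)>1-\ee$, contradicting the promise. The paper's own proof stops exactly there and sets $\alpha=1-\left(\ee/h(n,2^m)\right)^2$, i.e.\ it does not attempt to make $\alpha$ uniform over instances; so you have correctly isolated the one point in this argument that requires care, namely that $h(n,2^m)$ grows with the instance while the theorem asserts a single $\alpha$.

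Your proposed repair of that point, however, does not work. A family of synchronous games with both $|I|$ and $|O|$ bounded by absolute constants is a finite family (there are at most $2^{m^2n^2}$ rule functions), so every promise problem over it is decidable by a lookup table; consequently \cite{JNVWY20} cannot be, and is not, a hardness result for games over a fixed question-and-answer alphabet, and no padding can compress an arbitrary instance down to bounded $n$ and $m$ while preserving the value gap. Bounding only one of the two parameters does not help either: with $m$ fixed and $n$ growing, $h(n,2^m)$ is still unbounded (polynomially in $n$), and with $n$ fixed and $m$ growing it is unbounded exponentially in $m$. So a single threshold $\alpha$ cannot be extracted by fixing the alphabet; one would instead need a gap-amplification step on the coloring side (e.g.\ a repetition or graph-product construction for $\text{Hom}(G_{\lambda},K_3)$ that overcomes the loss $h(n,2^m)$), or else settle for the instance-dependent gap that the chain of Theorems \ref{theorem: forward direction}--\ref{theorem: synchronous coloring to non-synchronous strategy} actually delivers. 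Apart from this, your assembly of the forward and reverse directions and the computability of $\cG\mapsto G_{\lambda}$ is correct and matches the paper.
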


\begin{proof}
We first note that, since the synchronous $t$-value of any synchronous game is a supremum over elements of $C_t^s(n,m)$, and since $\overline{C_q^s(n,m)}=C_{qa}^s(n,m)$ \cite{KPS18}, we may assume without loss of generality that $t \in \{qa,qc\}$. Let $0<\ee<1$, and let $\cG$ be a synchronous non-local game, that is promised to either have $t$-value $1$ or $t$-value less than $1-\ee$. We show the gapped $(1,1-\ee)$-promise problem for the $t$-value of a synchronous game $\cG$ can be reduced to the $(1,\alpha)$-promise problem for the synchronous $t$-value of $\text{Hom}(G_{\lambda},K_3)$ with respect to $\pi_{edges}$. We note that $\omega^s_t(\text{Hom}(G_{\lambda},K_3),\pi_{\text{edges}})$ is either $1$ or less than $1-\left(\frac{\ee}{h(n,2^m)}\right)^2$. Indeed, if we had $1-\left(\frac{\ee}{h(n,2^m)}\right)^2<\omega_t^s(\text{Hom}(G_{\lambda},K_3),\pi_{\text{edges}})<1$, then an application of Theorem \ref{theorem: synchronous coloring to non-synchronous strategy} shows that $1-\ee<\omega_t(\cG,\pi_u)<1$ (since a value of $1$ for $\cG$ would yield a perfect $t$-strategy for $\text{Hom}(G_{\lambda},K_3)$ by \cite{Ha24}, since $t \in \{qa,qc\}$). So, to decide the gapped $(1,1-\ee)$-promise problem for $\omega_t(\cG,\pi_u)$, we decide the gapped $(1,\alpha)$-promise problem for $\omega_t^s(\text{Hom}(G_{\lambda},K_3),\pi_{\text{edges}})$, where $\alpha=1-\left(\frac{\ee}{h(n,2^m)}\right)^2$. If the algorithm outputs ``$1$", then $\omega_t(\cG,\pi_u)=1$ (\cite{Ha24}). Otherwise, we obtain $\omega_t^s(\text{Hom}(G_{\lambda},K_3),\pi_{\text{edges}})) \leq \alpha$, forcing $\omega_t(\cG,\pi_u) \leq 1-\ee$. That is to say, the gapped $(1,1-\ee)$-promise problem for the $t$-value of a synchronous non-local game is reduced to the corresponding gapped $(1,\alpha)$-promise problem for the synchronous $t$-value of the $3$-coloring game. The former is undecidable by \cite{JNVWY20}, so the latter must be undecidable as well.
\end{proof}

We now restate this result in terms of the Max-3-Cut problems.

\begin{theorem}
\label{theorem: sharp computability gap}
There exists an $\alpha \in (0,1)$ such that approximating the non-commutative Max-3-Cut (respectively, $qc$ Max-3-Cut) of a graph $G$ within a factor of $\alpha$ is uncomputable, in the following sense: the promise problem of determining whether $\text{NC-Max-}3\text{-Cut}(G)$ (respectively, $qc\text{-Max-}3\text{-Cut}(G)$) is equal to $|E|$ or less than $\alpha|E|$, if promised that one of these occurs, is undedicable.
\end{theorem}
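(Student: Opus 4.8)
The plan is to obtain this as an immediate consequence of the preceding theorem asserting the undecidability of the gapped $(1,\alpha)$-promise problem for the synchronous $t$-value of the $3$-coloring game, together with the identities $\text{NC-Max-}3\text{-Cut}(G)=|E|\,\omega_q^s(\text{Hom}(G,K_3),\pi_{\text{edges}})$ and $qc\text{-Max-}3\text{-Cut}(G)=|E|\,\omega_{qc}^s(\text{Hom}(G,K_3),\pi_{\text{edges}})$ recorded above. First I would fix an $\alpha\in(0,1)$ as in that preceding theorem, so that for each $t\in\{q,qc\}$ there is no algorithm which, given a graph $G$ promised that $\omega_t^s(\text{Hom}(G,K_3),\pi_{\text{edges}})$ is either $1$ or less than $\alpha$, correctly decides which of the two holds.

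The second step is to translate the promise across the identities. Dividing each identity by $|E|$ shows that $\text{NC-Max-}3\text{-Cut}(G)=|E|$ if and only if $\omega_q^s(\text{Hom}(G,K_3),\pi_{\text{edges}})=1$, and that $\text{NC-Max-}3\text{-Cut}(G)<\alpha|E|$ if and only if $\omega_q^s(\text{Hom}(G,K_3),\pi_{\text{edges}})<\alpha$; the same equivalences hold verbatim with $qc$ in place of $q$. Hence the instance of the Max-$3$-Cut promise problem attached to a graph $G$ is precisely the instance of the gapped $(1,\alpha)$-promise problem attached to $\text{Hom}(G,K_3)$, and any decision procedure for the former immediately yields one for the latter. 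As the latter is undecidable, so is the former, for both the non-commutative and the $qc$ versions; this is exactly the statement of the theorem.

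I do not expect any genuine obstacle at this stage, since all of the difficulty has already been absorbed into Theorem~\ref{theorem: value of coloring to value of game} (hence Theorem~\ref{theorem: synchronous coloring to non-synchronous strategy}) and into the $\mathrm{MIP}^\ast=\mathrm{RE}$ theorem of \cite{JNVWY20} which the preceding theorem invokes. The single point deserving a word of care is that $\text{NC-Max-}3\text{-Cut}(G)$ and $\omega_q^s(\text{Hom}(G,K_3),\pi_{\text{edges}})$ are each defined as suprema that need not be attained, so the reduction must be phrased at the level of these numerical values rather than in terms of optimal strategies; this is automatic once one has the displayed identities. I would also remark that the same reduction applied to $\omega_{loc}^s$, together with the classical hardness of the gapped promise problem for $\text{Max-}3\text{-Cut}(G)$ noted earlier, recovers the known statement that $\text{Max-}3\text{-Cut}(G)$ cannot be approximated within a fixed factor $\alpha$ in polynomial time unless $\mathsf{P}=\mathsf{NP}$.
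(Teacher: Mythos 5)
Your proposal is correct and matches the paper's approach exactly: the paper gives no separate argument for this theorem, presenting it as an immediate restatement of the preceding undecidability result for the gapped $(1,\alpha)$-promise problem via the identities $\text{NC-Max-}3\text{-Cut}(G)=|E|\,\omega_q^s(\text{Hom}(G,K_3),\pi_{\text{edges}})$ and $qc\text{-Max-}3\text{-Cut}(G)=|E|\,\omega_{qc}^s(\text{Hom}(G,K_3),\pi_{\text{edges}})$. Your added remarks on suprema not being attained and on the classical $\omega_{loc}^s$ analogue are consistent with observations the paper makes elsewhere.
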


This theorem gives evidence of a ``sharp computability gap". It is known \cite{CMS23} that one can efficiently approximate the non-commutative Max-3-cut of a graph up to a factor of $0.864$, but after this factor the corresponding problem is conjectured to be RE-hard. Our work shows that there is some cut-off where this change in complexity occurs--it does not yield an explicit lower bound for the constant factor (aside from $\alpha>0.864$), but proves the existence of such a factor.

There is also the ``oracularizable" version of the non-commutative max-3-cut (see \cite{MNY22}), where in the supremum one requires that the unitaries $u_j$ and $u_k$ commute if $(j,k)$ is an edge in $G$. This property mirrors what occurs in perfect strategies for $3$-coloring games, and also takes into account the general requirement in constraint systems that variable assignments belonging to a common constraint must commute. (This approach is the same idea as looking at approximate representations of the ``locally commuting" game algebra of the $3$-coloring game for $G$ \cite{HMPS19}.) The results of Theorem \ref{theorem: sharp computability gap} are also true for the oracularizable versions of the non-commutative max-3-cut. Indeed, this follows directly from the fact that the non-commutative max-3-cut is at least as large as the oracularizable version.

Interestingly, our techniques do not say anything about the non-commutative Max-$k$-Cut problem where $k \geq 4$. This problem would be solved if one constructed an equivalence of synchronous non-local games to $k$-coloring games that also preserved approximate winning strategies. However, a lot of the techniques used to construct equivalences of synchronous games to $3$-coloring games \cite{Ji13,Ha24} involve special properties of $3$-coloring game algebras that do not extend to $k$-coloring games. For example, in the game $*$-algebra of the $3$-coloring game of a graph $G$, if $v,w$ are adjacent vertices, then the projections $e_{c,v}$ and $e_{d,w}$ commute with each other for all $c,d=1,2,3$. In triangular prisms, projections corresponding to non-adjacent vertices in a $3$-coloring game $*$-algebra must commute. These special properties need not be satisfied even in $4$-coloring games \cite{MR16} and game $*$-algebras don't offer as much structure in this case, since the $4$-coloring game $*$-algebra for any graph of at least $4$ vertices is non-trivial \cite{HMPS19}. One could instead construct a hereditary $*$-equivalence between $\cG$ and an associated $k$-coloring game--there are equivalences of this form that involve independence and clique numbers (see \cite{Ha24}). However, to the author's knowledge, no explicit hereditary $*$-equivalence of $\cG$ to a $k$-coloring game is known. Hence, we close with the following problem:

\begin{problem}
\label{problem: hereditary equivalence to k-coloring}
Let $k \in \bN$ with $k \geq 4$. Determine whether there is an explicit graph $G_{k,\lambda}$ associated to the synchronous game $\cG=(I,O,\lambda)$ such that $\cG$ and $\text{Hom}(G_{k,\lambda},K_k)$ are hereditarily $*$-equivalent in a way that preserves approximately winning (synchronous) strategies.
\end{problem}

Solving Problem \ref{problem: hereditary equivalence to k-coloring} would solve the following problem:

\begin{problem}
\label{problem: max k cut for k at least 4}
Let $k \in \bN$ with $k \geq 4$. Is there an $\alpha_k \in (0,1)$ for which determining whether the non-commutative (respectively $qc$) Max-$k$-Cut of a graph is $|E|$ or less than $\alpha_k |E|$ undecidable?
\end{problem}

Another question is regarding the exponential dependence on $m$. Solving this issue amounts to the following problem:

\begin{problem}
\label{problem: perturb almost POVM to POVM}
Let $\cM$ be a von Neumann algebra and let $\tau$ be a faithful (normal) tracial state on $\cM$. Let $\delta>0$, and suppose that $A_1,...,A_m \in \cM$ are positive contractions such that $\|1-\sum_{i=1}^m A_i\|_2<\delta$. Is there a POVM $\{B_1,...,B_m\}$ in $\cM$ such that $\|B_i-A_i\|_2 \leq p(m)\delta$ for all $i$, where $p$ is a polynomial?
\end{problem}

If the answer to Problem \ref{problem: perturb almost POVM to POVM} is ``yes", then one could perturb the almost-PVMs $\{S_{i,j,k}P_{i,\what{v}(a,x)}S_{i,j,k}: 1 \leq a \leq m, \, \{i,j,k\}=\{1,2,3\}\}$ to POVMs (in $6m$ outputs) that are almost PVMs, and obtain $m$-output POVMs $\{B_{a,x}\}_{a=1}^m$ for which the resulting tuple $(\tau(B_{a,x}B_{b,y}))_{a,b,x,y}$ would be a correlation in the correct model, that is almost synchronous. (This correlation would be tracial in the sense of \cite{TT24}, but not necessarily synchronous.) Applying the methods of \cite{Vi21} for $t=q$ and \cite{MdlS23} for the other cases would yield a synchronous correlation $p$ for which $\omega_t^s(\cG,\pi_u,p) \geq 1-\text{poly}(n,m)\ee^c$ for some $c>0$ that is independent of $n,m,\ee$. The case $t=qa$ would follow from taking pointwise limits of synchronous correlations in $q$ (see \cite{KPS18,Vi21}).

\section*{Acknowledgements}

S.J. Harris was supported in part by a Lawrence Perko research award from Northern Arizona University. The author would like to thank Hamoon Mousavi and Vern Paulsen for valuable and stimulating discussions.

\end{document}